\newtheorem{theorem}{Theorem}
\newtheorem{definition}{Definition}
\newtheorem{proposition}{Proposition}
\newtheorem{lemma}{Lemma}
\newtheorem{corollary}{Corollary}
\newenvironment{proof}[1][Proof]{\noindent\textbf{#1.} }{\ \rule{0.5em}{0.5em}}
\newcounter{remark}
\newenvironment{remark}[1][]{\refstepcounter{remark}\par\medskip\noindent%
\textbf{Remark~\arabic{remark}.#1} \rmfamily}{$\blacksquare$\medskip}
\newcommand{\ignore}[1]{}
\renewcommand{\>}{\rangle}
\newcommand{\<}{\langle}
\newcommand{\diag}{\mathrm{diag}}
\newcommand{\rank}{\mbox{rank }}
\newcommand{\Herm}{\mbox{Herm}}
\newcommand{\Kerm}{\mbox{Kerm}}
\newcommand{\PT}{$\mathcal{PT}$}
\newcommand{\ket}[1]{\left|#1\right\rangle}
\newcommand{\Tr}{\mbox{Trace}}
\begin{document}
\title{Topological Discrimination of Steep to Supersteep Gap 
as Evidence of Tunneling in Adiabatic Quantum Processes}
\author{E. A. Jonckheere\\
Departments of Electrical and Computer Engineering \\
and Department of Mathematics, \\
University of Southern California, \\
Los Angeles, California 90089, USA}
\maketitle

\begin{abstract}
It is shown that the gap that limits the speed of a quantum annealing process can take three salient, 
stable morphologies: the {\it super-steep}, {\it steep},  and {\it mild} gaps. 
The difference is the number of pairs of inflection points (2, 1, 0, resp.) near the gap. 
Classification of the various singularities betrayed by the inflection points relies on 
the critical value curves of the quadratic numerical range mapping of the matrix $H_0+\imath H_1$,  
where $H_0$, $H_1$ are the transverse field and problem Hamiltonians, resp.  
In this representation, the ground level becomes the generically smooth boundary of the numerical range, 
while the first excited level becomes an interior non-smooth critical value curve 
generically exhibiting swallow tails. 
A super-steep gap, or ``anti-crossing," is characterized by a swallow tail about to coalesce on the boundary, 
whereas the two other cases do not critically involve swallow tails.  
Moreover, the positioning of the swallow tail relative to the boundary provides a ``magnifying lens" on the anti-crossing otherwise difficult to visualize on the energy plots. 
Global properties of the ground versus first excited levels are revealed 
by the Legendrian approach where the energy level curves become Legendrian knots in the contact space. 
As an application important in many aspects, it is shown that 
tunneling leaves its differential topological signature in 
the swallow tail associated with a steep gap. 
More importantly, the stability of the singularities under perturbations calls into question the topologically unstable Grover search, with the consequence of invalidating the gap scaling estimates computed around the unstable singularity 
when uncertain parameters are taken into consideration. 
\end{abstract}

\section*{Introduction}


Adiabatic Quantum Computations (AQC) endeavor to find the ground state of a problem Hamiltonian $H_1$ 
proceeding from the known, easily prepared  
ground state of an initial Hamiltonian $H_0$. 
The passage from the ground state of $H_0$ to the ground state of $H_1$ is accomplished through a variant of the so-called continuation methods~\cite{Alexander1978,Jonckheere1997,Wacker1978}, that is, 
the algorithm---in its ideal implementation---would track the ground state of $H_0p_0(s)+H_1p_1(s)$  
from $s=0$ until $s=1$, 
subject to the initial/terminal conditions $p_0(0)=p_1(1)=1$ and $p_0(1)=p_1(0)=0$.  
If $(p_0(s),p_1(s))$ is differentiable, the map $s \mapsto H_0p_0(s)+H_1p_1(s)$ is a curve from $H_0$ to $H_1$ 
in the subspace $\mbox{span}_{\mathbb{R}}\{H_0,H_1\}$ spanned by $H_0$ and $H_1$ in the space of Hermitian matrices. 
The specific feature of the {\it quantum adiabatic} continuation is that it tracks the solution  
to the Schr\"odinger equation with 
time-varying Hamiltonian $H_0p_0(s(t))+H_1p_1(s(t))$     
from the ground state of $H_0$ to the ground state of $H_1$ via a \textit{\textbf{scheduling}} $s(t)$, 
$s(0)=0$, $s(t_{\mathrm{final}})=1$, 
slow enough $\left(\left|\tfrac{ds(t)}{dt}\right|\right.$ small enough) so that the solution remains close enough to the ground eigenstate all along the path. 
Except for some exceptional cases as the one of Sec.~\ref{s:constant_gap}, 
the \textit{\textbf{gap}} $\lambda_2(H_0p_0(s)+H_1p_1(s))- \lambda_1(H_0p_0(s)+H_1p_1(s))$  
between the first excited eigenstate $\lambda_2$ and the ground eigenstate $\lambda_1$ 
reaches a  minimum, assumed to be nonvanishing, for some $s\in [0,1]$. It is especially around that minimum 
that the integration of the Schr\"odinger equation has to be slowed down,  
in a manner quantified by the adiabatic theorem~\cite{reichardt-adiabatic}. 
Clearly, the scheduling $s(t)$  
depends on the shape of the 
$\lambda_1(s)$ and $\lambda_2(s)$ curves---{\it how fast} and {\it how close} they come together along the path. 
This motivates the study of the \emph{morphology} of the $\lambda_1$, $\lambda_2$ curves, 
how they are interrelated to produce the gap, 
and more importantly how the inflections points that characterize their shapes 
should be \emph{classified}  in the differential category 
to produce a topological discrimination of supersteep, steep and mild gaps.  

The restriction that the strict inequality $\lambda_2(s)-\lambda_1(s)>0$ remains in force $\forall s$ 
can be justified on the ground that 
this property is {\it generic}, that is, it is preserved under sufficiently small parameter variation,  
whereas an \emph{exact} crossing $\lambda_2(s_\times)-\lambda_1(s_\times)=0$ for some $s_\times \in [0,1]$  
indicates 
a \emph{nongeneric} phenomenon~\cite{von_neumann_wigner} that disappears under arbitrarily small perturbation.

An outline of the paper follows.

\begin{figure}[t]
	\centering
	\mbox{
\subfigure[Super-steep gap where the two pairs of inflection point can be related to the swallow tail creating the minimum of $\lambda_2$ (Th.\ref{t:lambda1}(2))]
{\scalebox{0.6}{\includegraphics{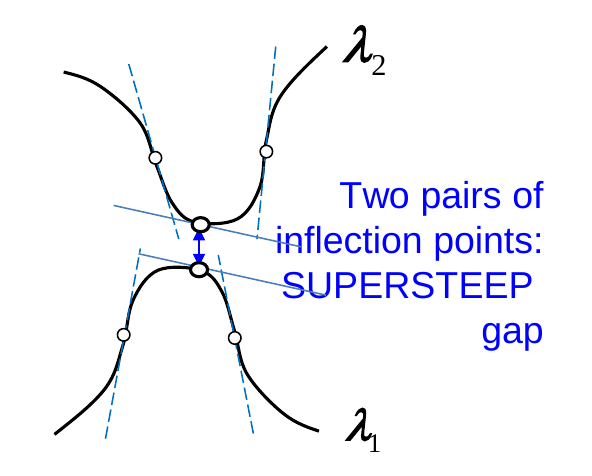}}}\;\;
		\subfigure[Steep gap where the inflection points of the ground level $\lambda_1$ are unrelated to the swallow tail creating the inflections points of the first excited level $\lambda_1$ (Th.\ref{t:lambda1}(1))]
{\scalebox{0.6}{\includegraphics{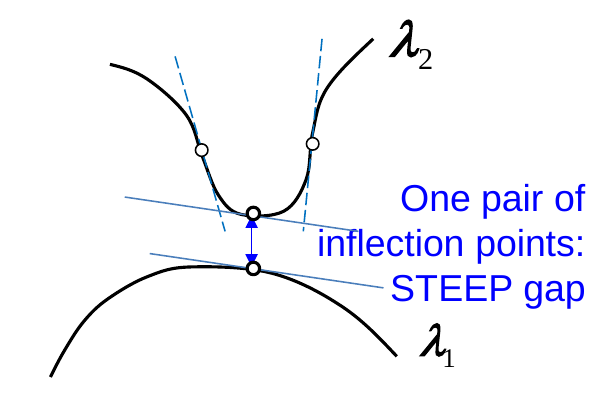}}}
}
	\caption{Steep versus supersteep gap. 
The maximum of the ground level and the minimum of the first excited level are in general offset. 
Hence the gap occurs where the contact points between the parallel tangents (solid lines) 
and the respective curves are vertically aligned. 
}
	\label{f:gaps}
\end{figure}

\subsection{Conventional approach: energy level plots (Sec.~\ref{s:energy_levels})}

The genesis of this paper is the observation  
that the $\lambda_{1}(s)$,  $\lambda_{2}(s)$ 
eigen-energy level curves could take two salient yet stable shapes:  
(i) the ``supersteep" gap where $\lambda_2$ has a steep descent and $\lambda_1$ a steep ascent, 
with the two curves curves nearly coming together in the ``anti-crossing" phenomenon~\cite{JonckheereAhmadGutkin}
and (ii) the ``steep gap" where $\lambda_2(s)-\lambda_1(s)$ goes to a minimum steeply because of the steep descent of $\lambda_2$ while $\lambda_1$ increases moderately.  
This is illustrated in Fig.~\ref{f:gaps} 
and justified on a benchmark problem in Sec.~\ref{s:constant_to_steep}. 
Such configurations are {\it stable} 
in the sense that they are topologically unaffected by a reasonably small change of parameters in $H_0$ and $H_1$. 
The case where the $\lambda_1, \lambda_2$ curves actually intersect is an {\it unstable} phenomenon 
investigated in Sec.~\ref{s:genericity}.    
From the differential viewpoint, 
the anti-crossing or supersteep case means that both curves have pairs of neighboring inflection points 
 occurring nearly simultaneously along the adiabatic path. 
Less striking is the ``steep" case where $\lambda_2(s)$ has a pair of neighboring inflection points, 
but not $\lambda_1(s)$. 
Finally, there is also a ``mild" gap where neither $\lambda_1(s)$ nor $\lambda_2(s)$ have inflections points in a neighborhood of $\arg \min_{s \in [0,1]}(\lambda_2(s)-\lambda_1(s))$, as shown in Fig.~\ref{f:vonNeumanngap}.

\begin{figure}[t]
\centering
\scalebox{0.7}{\includegraphics{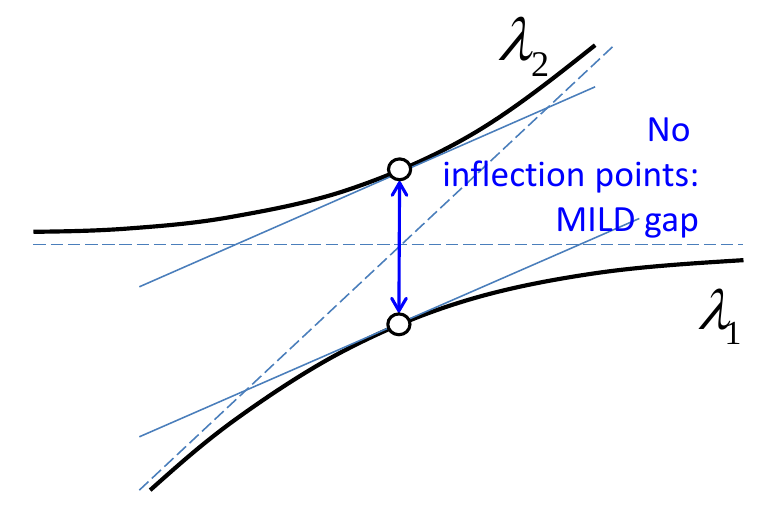}}
\caption{The gap as developed by von Neumann and Wigner~\cite{von_neumann_wigner} as an example of a mild gap. 
The gap occurs where the tangents to the energy levels are parallel. }
\label{f:vonNeumanngap}
\end{figure}

The problem is that, no matter how visually obvious the supersteep versus steep versus mild distinction is, 
it relies on such quantitative features as how close the inflection points are 
and how simultaneously they occur on $\lambda_1(s)$ and $\lambda_2(s)$.  
Clearly, we need a \emph{qualitative} feature---a topological invariant---that would classify the adiabatic problems in a manner consistent with how steep the gap could be.

Since the classification is driven by inflection points, 
the analysis needs to be set within the differentiable category (Sec.~\ref{s:diff_category}). 
Next, in order to get a broad view as to where the inflection points could be and how close they could be, 
the path from $H_0$ to $H_1$ is closed by a return path from $H_1$ to $H_0$. 
Since the classical affine interpolation $p_0(s)=1-s$, $p_1(s)=s$, $0\leq s \leq 1$, cannot be differentiably extended to 
a cyclic path, we fix a circular path visiting $H_0$ and $H_1$ before returning to $H_0$ in a periodic pattern. 
Extending the adiabatic path to the unit circle $\mathbb{S}^1$ allows us to develop invariants of the homotopy type,   
more specifically, the Arnold Legendrian invariants~\cite{Arnold1994} 
and a ``modified" Thurston-Bennequin invariant~\cite{Thurston_Bennequin_Maslov}, 
as shown in Table~\ref{t:classification}.   
Next to the topological invariants, a new invariant, 
the number of roots of $\lambda_2(s)-\lambda_2''(s)$, is developed in Sec.~\ref{s:new_invariant}.  

Besides homotopy considerations, closing the path brings other adiabatic problems in the picture: 
for example, the minimum energy level from $\theta=\pi$ to $\theta=3\pi/2$, that is, from $-H_0$ to $-H_1$, 
ends at the opposite of the \emph{maximum} energy level of $H_1$.  
Likewise, the path from $\theta=\pi$ to $\theta=\pi/2$ terminates at the minimum level of $H_1$, 
starting from the transverse field $-H_0$.  

\begin{remark}\label{r:EPD}
Recently, there has be much interest in the so-called {\it Exceptional Point Degeneracy (EPD)}~\cite{exceptional_points_close}, 
mainly due to the enhanced sensing capability collateral to the emergence of  
such topological structure~\cite{exceptional_enhanced,exceptional_tomography,exceptional_sensing}. 
In a certain sense, an exceptional point is an {\it exact} crossing~\cite[Fig. 2(a)]{exceptional_points_close} 
of energy levels for some parameter values~\cite{exceptional_Mexican05,exceptional_Mexican11}, 
with coalescence of eigenvalues, as here, {\it and} coalescence of eigenvectors, contrary to what happens here.  
Coalescence of eigenvectors happens because EPD appears in nonHermitian quantum systems, 
where Parity-Time (\PT)-symmetry is substituted for the Hermitian property~\cite{exceptional_Czech}.  
A $\mathcal{PT}$-symmetric Hamiltonian $H$, defined by $\mathcal{PT}H-H\mathcal{PT}=0$, retains real eigenvalues, 
yet allowing the modeling of nonequilibrium processes and the encoding of dissipative dynamics~\cite{exceptional_nonreciprocal}.  
\PT-symmetry is said to be {\it broken} when the eigenvalues become complex 
as analytic continuations of the degenerate real eigenvalues of some Hamiltonian~\cite{exceptional_Czech}.  
Despite their elusive interpretations, 
complex energy levels are betrayed by state flipping after an adiabatic path around such exceptional points  (EPs)~\cite{exceptional_points_close,exceptional_nonreciprocal}.  
The {\it steep gap}, the central object of investigation here, can be interpreted in the quantum sensing context as 
extreme sensitivity to the time parameter, in the same manner as EPs are extremely sensitive to Hamiltonian parameters. 
Despite this possible cross-interpretation, a fundamental road block between the two problems is the {\it stability} of the complex eigenvalues of EPD versus the {\it instability} of the exact crossing in the present context of real eigenvalues. 
Certainly, nonHermitian quantum systems could be analyzed in the same differential-topological spirit as the one  developed here, 
but the two contexts are different enough to warrant a further publication. 
\end{remark}

\begin{remark}
Maintaining adiabaticity sometimes requires slowing down the process so much as to open the floodgates to decoherence. The concepts of \emph{diabatic continuation}~\cite{diabatic}  
and \emph{Shortcut To Adiabaticity (STA)}~\cite{shortcut_to_adiabaticity} proceed from the premise that it is futile to enforce adiabaticity in situations where jumping to higher excitation states entails only a small terminal error. 
Moreover, the survey developed in the same Ref.~\cite{shortcut_to_adiabaticity} brings adiabaticity (or the lack thereof) to the much broader abstraction where adiabatic invariance is ubiquitous, 
e.g., state transfer, counter-diabatic driving, quantum dynamical cycles, etc. 
\end{remark}

\begin{remark}
Among the exotic AQC applications with explicit gap considerations, 
one will mention the wireless networking scheduling  problem~\cite{adiabatic_nature,quantum_wireless_II} 
and the prime factorization problem~\cite{adiabatic_integer_factorization}. 
The former introduces a \emph{gap enlargement} 
that applies to the terminal gap, in a rather diabatic approach. 
The latter developed a STA digitized version of AQC where the number of gates increases as the AQC gap decreases. 
\end{remark}

\begin{remark}
Along a totally different line of applications, the same anti-crossing phenomenon manifests itself in 
the Bode singular value plots of transfer matrices~\cite{Zhou}. 
In particular, \cite[Fig. 4.3]{Zhou} shows a supersteep anti-crossing while~\cite[Fig. 4.4]{Zhou} shows a mild 
gap. 
\end{remark}

\subsection{Novel approach: critical value plots (Sec.~\ref{s:diff_category})}

Having set the problem within the differentiable category, we show that 
the eigenenergy levels $\lambda_k(s)$, $k=1,2,...,N$, can be viewed as   
plots of critical values~\cite{Cerf1970} of the quadratic energy function 
$\<z|H_0p_0(s)+H_1p_1(s)|z\>$ defined over the unit sphere $\mathbb{S}^{2N-1}$. 
Recall that a \textit{\textbf{critical value}} is the value of a function at one of its critical point, 
that is, a point where its differential is rank deficient. 
 
A visualization of the critical values is given by the \textbf{\textit{field of values}} or 
\textbf{\textit{numerical range}} of 
$H:=H_0+\imath H_1$ defined as $\mathcal{F}(H):=\{\<z|H|z\>:|z\> \in \mathbb{S}^{2N-1}\}$,
where $N$ is the size of the Hamiltonian.   
usually of the form $2^n$ where $n$ is the number of spins. 
By the Toeplitz-Hausdorff theorem~\cite{Hausdorff,Toeplitz}, $\mathcal{F}(H)$ is compact and convex. 
It turns out that the (convex) boundary curve somehow represents the ground state along 
the circular path visiting $H_0$ and $H_1$. 
Foundational in the differential topological viewpoint is that   
this boundary is a critical value curve of the mapping $|z\> \to \<z|H_0+\imath H_1|z\>$ 
defined over the $(2N-1)$-sphere. 
In addition, there are other critical value curves in the interior; in particular,  
the critical value curve closest to the boundary  
represents the first excited state along the 
circular path (see Fig.~\ref{f:geometry}). 
In this setting, the energy gap $\lambda_2(s)-\lambda_1(s)$ is the ``distance" between the two critical value curves, 
in a sense illustrated by Fig.~\ref{f:unfolding}(b). 

\begin{remark}
Following in the footsteps of~\cite{GutkinJonckheereKarow,adiabatic,JonckheereAhmadGutkin}, 
Spitkovsky and Weiss~\cite{newest_from_Weiss} developed a quantum phase transition interpretation of the boundary of $\mathcal{F}(H)$.  
\end{remark}

\begin{remark}
The plots of critical values of smooth functions defined over a compact differentiable 
 manifold (e.g., the unit sphere) was demonstrated to be a powerful differential topological tool 
in both the stratification of the space of differentiable functions~\cite{Cerf1970} and robust control~\cite{Jonckheere1997}. 
\end{remark}

\subsection{Swallow tails and new invariant (Sec.~\ref{s:swallow_tail})}

Besides providing a new graphical representation of the energy levels, 
the {\it new feature} revealed by the numerical range is the presence of {\it cusps} forming {\it swallow tails} 
on the first excited critical value curve 
(see Fig.~\ref{f:geometry}). 
A \textit{\textbf{cusp}} is a stable singular point where two branches of a curve converge to a common tangency point, 
the two branches being on either side of the tangent (Sec.~\ref{s:swallow_tails}).   
A \textbf{\textit{swallow tail}} is a generic singularity phenomenon consisting of two cusps 
that are connected by an arching edge abutting the tangents at the cusps 
and a crossing of the two branches abutting the tangents on the other side of the connecting arc (see Fig.~\ref{f:unfolding}(b)). 

Most importantly, in the class of problems classically characterized by a (super)steep energy anti-crossing, 
the gap in this new formulation occurs between the arching edge connecting the two cusps of the swallow tail of the first excited state and the boundary curve of the ground state 
(see Fig.~\ref{f:unfolding_details} for a theoretical illustration and Fig.~\ref{f:high_barrier_small_y}for a real adiabatic problem). 

Regarding the ``mild" gap, it is topologically different in that the local minimum between the boundary 
and the first excited critical value curve does not occur at a swallow tail,   
as shown by Fig.~\ref{f:barrier_high_Hamming}, top left-hand panel. 

In the $s$-parameterization of the critical curve of the first excited level, 
the cusps are located at the $s$-solutions of $\lambda_2(s)+\lambda_2''(s)=0$,   
which may or may not exist. 
Existence of a pair of solutions is precisely the topological invariant necessary for a steep gap 
and is the gateway to the supersteep gap.  
The boundary would require inspection of $\lambda_1(s)+\lambda_1''(s)$, which generically does not go to zero, 
but could take small values creating points of extreme curvature in the boundary. 
It is precisely the closeness of those extreme curvature points on the boundary and the cusps that 
determines the morphology of gap.  
For this reason, it is convenient to plot the  $\lambda_k(s)+\lambda_k''(s)$, $k=1,2$, curves as done in  
Figs.~\ref{f:first_case_no_barrier_no_y}-~\ref{f:barrier_high_Hamming}.

Regarding inflection points, it is futile to look for \emph{existence} of inflection points, 
because a theorem by Tabachnikov~\cite{Arnold1994},~\cite{Tabachnikov_original} implies that 
the equation $\lambda_k''(s)=0$ with periodic left-hand side always has at least 2 solutions.  
This implies that characterizing a (super)steep gap by existence of inflection points is inadequate, 
unless the distance between inflection points, their number,  
and their locations along the adiabatic path are taken into consideration.

\begin{remark}
A swallow tail typically occurs when tomographically projecting a 3-dimensional object 
(e.g., a glass torus) to a 2-dimensional space;  
see~\cite{Arnold1994,Arnold1993,ArnoldGusein-ZadeVarchenko1985,arnold_birthday} for precise exposition. 
\end{remark}

\subsection{Legendrian classification (Sec.~\ref{s:Legendrian})}

The critical value curves, all closing on themselves with crossings and other singularities, 
are classified up to Reidemeister moves in accordance with (i) Arnold's winding number and  
Maslov index and (ii) a ``modification" of the Thurston-Bennequin number (see Table~\ref{t:classification}). 
This requires an \textbf{\textit{orientation,}} that is, 
a traveling direction along the critical value curve and a \textbf{\textit{co-orientation}} vector orthogonal to the curve  
(see Fig.~\ref{f:Legendrian}).  
The co-orientation vector is continuous across the cusps and defines a local argument 
$\theta$ of the curve in the complex plane.  
With the contact element, the critical value curves in the complex plane 
are lifted to the 3-dimensional contact space with coordinates $(\Re, \Im, \theta)$,    
to become Legendrian curves allowing for a global analysis to reveal self-knotting of a single curve and linking of two curves. 

This part entails a theoretical new concept: 
the modification of the critical value curve $\gamma_2$ to comply with the Thurston-Bennequin number 
\emph{that does not allow vertical tangents.} 
This modification where vertical tangents are replaced by cusps 
is illustrated in Fig.~\ref{f:breaking_vertical_tangents}.

\subsection{Hamming weight plus barrier: emergence of swallow tail (Sec.~\ref{s:Hamming_plus_weight})}

We introduce a specialized Quadratic Binary Optimization (QBO) problem  
that minimizes a ``Hamming weight plus barrier" function.  
The height and the position of the ``barrier" are manipulated to illustrate 
the various gap topologies and how they relate to swallow tails (Sec.~\ref{s:constant_to_steep}). 

Probably the ultimate quantum adiabatic interpretation of a swallow tail, especially when it  coalesces on the boundary, is that, 
for the ``Hamming weight plus barrier" Hamiltonian~\cite{reichardt-adiabatic} 
and conjecturally other problems~\cite{Understanding_Quantum_Tunneling_through_Quantum_Mo},  
it is a differential-topological signature of tunneling.  
The parameters of the barrier can indeed be adjusted so that the 
adiabatic process tracking the ground state of $H_0p_o(s)+H_1p_1(s)$ is forced to tunnel through the barrier to reach the ground state, 
leaving a swallow tail signature on the critical value curves (Fig.~\ref{f:Hamming_plus_barrier_geometry}).

\begin{figure}[t]
\begin{center}
\scalebox{0.5}{\includegraphics{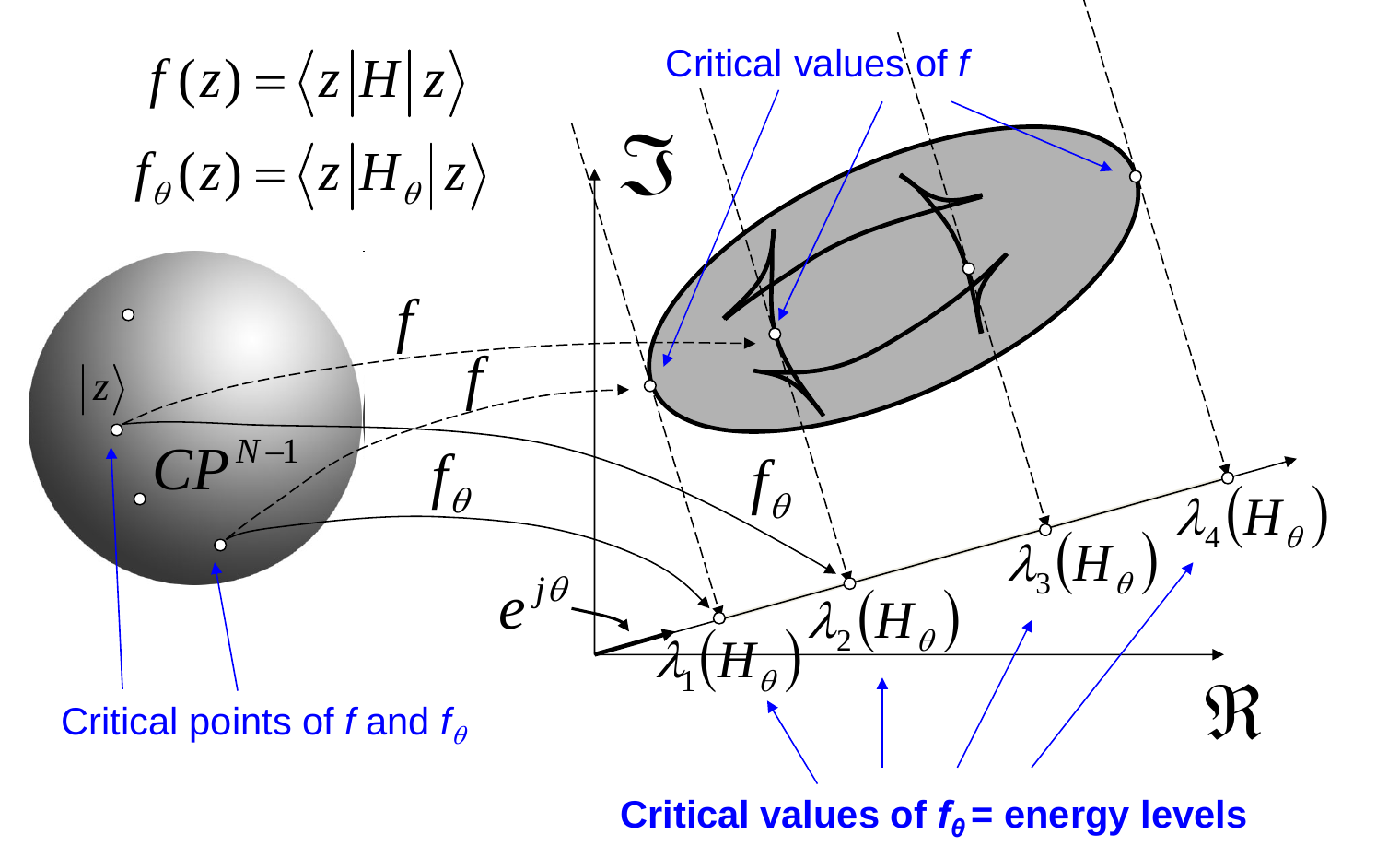}}
\end{center}
\caption{Geometry of the critical values. 
The representation of $\mathbb{C}\mathbb{P}^{N-1}$ as a sphere is only valid for $N=2$ as $\mathbb{C}\mathbb{P}^1\cong \mathbb{S}^2$. }
\label{f:geometry}
\end{figure}

\section{Energy level adiabatic gap}
\label{s:energy_levels}

Traditionally, the adiabatic path is defined as $p_0(s)=1-s, p_1(s)=s$, $s\in [0,1]$. To investigate the global homotopy properties, it can easily be extended to a rhombus passing through the fixed points, or vertices,  
$\mathcal{V}=\{(1,0),(0,1),(-1,0),(0,-1)\}\subset \mathbb{R}^2$. 
Call this path $p_\mathrm{rh}$. 
However, the lack of differentiability at the fixed points make this path inadequate for differential topological investigations. Here, as in~\cite{adiabatic}, 
we will take the following ``master" path, an embedding of $[0,1]$ into $\mathbb{R}^2\setminus (0,0)$,
\begin{equation}
\label{e:path}
H_{\frac{\pi s}{2}}:=H_0\cos\frac{\pi s}{2} +H_1\sin\frac{\pi s}{2}, \quad s \in [0,1],   
\end{equation}
easily extendable from $s\in[0,1]$ to $s\in [0,4]$ as a circle passing through the fixed points. 
Call this path $p_\mathrm{circ}$. The above is referred to as {\it master} path, 
because by various isotopies it could it be enlarged to yield other paths 
that would yield the same results regarding the morphology of the gap, 
hence removing the dependency of the analysis on a specific path.  

There is however a caveat that is easily seen by observing that~\eqref{e:path} is more than an embedding 
in the traditional sense of~\cite[Sec. 1.3]{Hirsch1976}; 
in addition to the traditional first order property of immersion, 
the Hessian~\cite[Sec. 6.1]{Hirsch1976} has constant inertia $(+,+,...,+)$. Therefore, if we define an isotopy 
$F_\tau: (0,1) \times [0,1]\hookrightarrow \mathbb{R}^2\setminus (0,0)$ 
from $\overset{\circ}{p}_\mathrm{circ}=p_\circ \setminus \{(1,0),(0,1)\}$ to 
$F_\tau \overset{\circ}{p}_\mathrm{circ}$ this new embedding should preserve the inertia as $(+,+,...,+)$. 
Moreover, the recover the fixed point, set $\lim_{s\downarrow 0}F_\tau \overset{\circ}{p}_\mathrm{circ}(s)=(1,0)$, 
with a similar restriction for the other fixed points. Note that with such an isotopy, one cannot quite get 
from $p_{\mathrm{circ}}$ to $p_{\mathrm{rh}}$ because the inertia of the latter is (0,0,...,0); however, 
one can get arbitrarily close to it. To summarize, one can argue with a variety of paths, all isotopic to 
$p_{\mathrm{circ}}\cong \mathbb{S}^1$, all giving the same differential topological morphology of the gap. 

Note that any path $p$ in the plane maps to a path in the space $\mathbb{R}^{N^2}$ of Hermitian matrices
with the same differential topological properties.

\subsection{Genericity of anti-crossing}\label{s:genericity}

Let us generalize the problem by bringing intermediate Hamiltonians, 
$H_{i_1},...,H_{i_{m-1}}$ in the adiabatic path in an attempt to open the gap. Such a path would evolve 
in $\mbox{span}\{H_0,H_{i_1},..., H_{i_{m-1}},H_1\}$ and be of the form 
\[ H_{p(s)}:=p_0(s)H_0+p_{i_1}(s)H_{i_1}+...+p_{i_{m-1}}(s)H_{i_{m-1}}+p_1(s)H_1, \] 
where, as for the $1$-dimensional case, 
it is convenient to restrict ourselves to those paths whose homotopy deformation to 
the $\mathbb{S}^m$ sphere is an isotopy.  Restricting the path to $\mathbb{S}^m$ brings the following often overlooked result: 
\begin{theorem}[von Neumann-Wigner adiabatic theorem~\cite{von_neumann_wigner}]
\label{t:von_neumann-wigner}
For $m\leq 2$, the subset of Hamiltonians $\{H_p: p \in \mathbb{S}^m, \lambda_i(H_p)\ne \lambda_j(H_p) \mbox{ for } i \ne j\}$ with single eigenvalues is open and dense (generic) 
in the set of all Hamiltonians $\{H_p: p \in \mathbb{S}^m\}$, but for $m \geq 3$ the genericity of the ``no multiple eigenvalue"  property fails. $\blacksquare$
\end{theorem}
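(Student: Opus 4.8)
The plan is to reduce the statement to the classical von Neumann--Wigner codimension count for the locus of Hermitian matrices with a repeated eigenvalue, and then to read off the two regimes $m\le 2$ and $m\ge 3$ from a dimension argument. First I would fix the ambient picture: under the standing restriction to isotopic (diffeomorphic radial) paths, the family $p\mapsto H_p$ identifies $\{H_p:p\in\mathbb{S}^m\}$ with the unit sphere of the linear subspace $V:=\mbox{span}_{\mathbb{R}}\{H_0,H_{i_1},\dots,H_{i_{m-1}},H_1\}\subset\Herm(N)$, a space of (generic) dimension $m+1$. Let $\Sigma\subset\Herm(N)$ be the set of matrices with at least one repeated eigenvalue. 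Because $\Sigma$ is the zero set of the discriminant of the characteristic polynomial, it is a closed real-algebraic subvariety, so its complement is open; pulling back by the continuous map $p\mapsto H_p$ immediately gives openness of $\{H_p:p\in\mathbb{S}^m,\ \lambda_i(H_p)\ne\lambda_j(H_p)\ \forall\,i\ne j\}$ for every $m$. Note also that $\Sigma$ is scale-invariant, hence a cone with vertex at the origin.

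The heart of the matter is the codimension of $\Sigma$, which I would analyze by stratifying it according to the partition of eigenvalue multiplicities. On the principal stratum $\Sigma_0$ --- one eigenvalue of multiplicity $2$, all others simple --- the standard local computation shows $\Sigma_0$ is a smooth submanifold of codimension $3$: restricting a perturbation to the two-dimensional degenerate eigenspace, the traceless part $\bigl(\begin{smallmatrix}a & b-\imath c\\ b+\imath c & -a\end{smallmatrix}\bigr)$ of its Hermitian block must vanish, i.e.\ $a=b=c=0$, three independent real equations. Every deeper stratum (a triple eigenvalue, two distinct double eigenvalues, and so on) imposes strictly more equations and so has codimension $>3$. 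Hence $\mbox{codim}_{\Herm(N)}\Sigma\ge 3$, with equality only along $\Sigma_0$. (In the real-symmetric setting the off-diagonal entry is real, $c$ disappears, the codimension drops to $2$, and the borderline dimension would become $m=1$.)

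Next I would apply a parametric transversality argument: since the generators $H_0,H_{i_1},\dots,H_1$ can be perturbed independently, Thom's theorem shows a generic subspace $V$ is transverse to every stratum of $\Sigma$, hence $\dim\bigl((V\setminus\{0\})\cap\Sigma\bigr)\le\dim V-3=m-2$. If $m\le 2$ this is $\le 0$; since $V\cap\Sigma$ is a cone, it then reduces to $\{0\}$, which misses the unit sphere, so every $H_p$, $p\in\mathbb{S}^m$, has simple eigenvalues. Together with the openness established above, this proves genericity for $m\le 2$. If instead $m\ge 3$, then $(V\setminus\{0\})\cap\Sigma_0$ has dimension $m-2\ge 1$ and, being a cone, necessarily meets the unit sphere; moreover the intersection is transverse, hence it persists under $C^1$-small perturbations of the generators, so an entire open set of families carries a degeneracy somewhere along the path. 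Consequently the ``no multiple eigenvalue'' property is no longer dense, i.e.\ genericity fails for $m\ge 3$. (Alternatively one could simply exhibit the explicit $3$-parameter degeneracy constructed by von Neumann and Wigner.)

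The step I expect to be the main obstacle is the codimension bookkeeping of the second paragraph: verifying rigorously that $\Sigma$ is a Whitney-regular stratified set and that every non-principal stratum has codimension strictly greater than $3$, so that only $\Sigma_0$ controls the transversal dimension count. This is precisely the classical, somewhat delicate analysis of the geometry of Hermitian matrices due to von Neumann and Wigner, which I would cite rather than reprove; the remaining steps are a routine transversality/cone dimension argument.
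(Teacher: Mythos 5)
The paper does not actually prove Theorem~\ref{t:von_neumann-wigner}; it is stated as a quotation of the classical von Neumann--Wigner result and supported only by the citation, so there is no internal proof to compare against. Your reconstruction is the standard and correct argument: openness from the closedness of the discriminant locus $\Sigma$, the codimension-$3$ computation on the principal stratum via the traceless Hermitian $2\times2$ block, higher codimension of the deeper strata, parametric transversality of the spanned subspace $V$, and the cone structure of $V\cap\Sigma$ to pass from the dimension bound $m-2$ to emptiness of the intersection with the unit sphere when $m\le 2$. This also matches how the paper actually uses the theorem (the corollary about anti-crossing surviving only up to $m=2$), and your reading is the only consistent one, since a literal ``dense in $\mathbb{S}^m$'' reading would hold for all $m$.

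The one point to tighten is the $m\ge 3$ direction: transversality alone gives an upper bound on $\dim\bigl((V\setminus\{0\})\cap\Sigma_0\bigr)$ and the stability of a transverse intersection, but it does not by itself produce a family whose sphere actually meets $\Sigma_0$; without at least one such family, the open set of ``bad'' families you need in order to defeat density could be empty. You flag this yourself and note that the explicit three-parameter degeneracy of von Neumann and Wigner supplies the required witness; with that parenthetical made into an actual step (or with a direct count showing that for $m\ge3$ the set of $(m+1)$-tuples spanning a $V$ that meets $\Sigma_0$ transversally and nontrivially is open and nonempty), the proof is complete.
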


The corollary is that the anti-crossing property is generic for $m=1$, 
but this genericity survives only up to $m=2$, that is, only one 
intermediate Hamiltonian preserves the genericity of the anti-crossing. 
As of $m=3$, the genericity of the anti-crossing is lost, 
and the $\lambda_1$ and $\lambda_2$ curves could intersect. 

The same threshold appears in the crucial convexity property of the $(m+1)$-block joint numerical range:
$$ \{\left(\<z|H_0|z\>,\<z|H_{i_1}|z\>,..., \<z|H_{i_{m-1}}|z\>,\<z|H_1|z\>\right):\<z|z\>=1,z\in \mathbb{C}^{N}\}. $$ 
While under the realistic assumption that $N \geq 2$ the joint range is convex for $m=2$, 
the situation becomes drastically different for $m \geq 3$ 
(see~\cite{GutkinJonckheereKarow}).

\section{Energy levels as critical values}
\label{s:diff_category}

Defining $\theta:=\frac{\pi s}{2}$, clearly, the various energy levels along the extended master path ($0 \leq s \leq 4$) 
are the eigenvalues of 
$H_0 \cos \theta + H_1 \sin \theta=:H_\theta$. This motivates the definition of the function
\begin{eqnarray*}
f_\theta: \mathbb{S}^{2N-1}/\mathbb{S}^1 \cong \mathbb{C}\mathbb{P}^{N-1} &\to& \mathbb{R},\\
z & \mapsto&  \<z|H_0 \cos \theta + H_1 \sin \theta|z\> .
\end{eqnarray*}
Note that the domain of definition of the function $f_\theta$ has been redefined as  
the $(2N-1)$-dimensional sphere quotiented out by the unit circle phase factor,  
which has no effect on the quadratic form.  
The various energy levels are recovered by taking $|z\>$ to be the various eigenvectors. 

Recall that the \textit{\textbf{differential}} $d_z f_\theta: T_z \mathbb{CP}^{N-1} \to \mathbb{R}$ 
of $f_\theta$ at $z$ is the unique linear form 
such that 
\[ \frac{d_z f_\theta(\delta)-(\<z+\delta|H_\theta|z+\delta\>-\<z|H_\theta|z\>)}{\|\delta\|} \to 0, \quad 
\mbox{as } \delta \in T_z \mathbb{C}\mathbb{P}^{N-1} \to 0,\] 
with explicit expression $\<z|H_\theta|\delta\>+\<\delta|H_\theta|z\>$. 
A \textit{\textbf{critical point}} $z^0$ is defined as 
$d_{z^0}f_\theta(\delta)=0$, $\forall \delta \in T_z \mathbb{CP}^{N-1}$, and $f_\theta(z^0)$ is the 
\textit{\textbf{critical value}}.  
In the following, we can safely identify $\mathbb{C}\mathbb{P}^{N-1}$ and $\mathbb{S}^{2N-1}$, 
so that $T_z \mathbb{CP}^{N-1}=z^\perp$.

\begin{lemma}
The critical points (values) of $f_\theta$ are eigenvectors (eigenvalues) of $H_\theta$ and vice versa. 
\end{lemma}
\begin{proof}
If $\ket{z}$ is a critical point, that is, 
$d_z f_\theta(\delta)=2\<\delta|H_\theta|z\>=0$, $\forall \delta \perp z$, 
the decomposition $H_\theta\ket{z}=\alpha \ket{z}+\beta \delta$ yields  $\<\delta|H_\theta|z\>=\beta\|\delta\|^2=0$. Hence $\beta =0$ and $\ket{z}$ is an eigenvector with eigenvalue $\alpha=\lambda$. Conversely, if $H_\theta\ket{z}=\lambda \ket{z}$, 
we have $d_zf_\theta(\delta)=2\<\delta|H_\theta|z\>= \lambda\<\delta|z\>=0$ .
\end{proof}

Clearly, 
$H_\theta= \begin{pmatrix} \cos\theta & \sin \theta\end{pmatrix}\begin{pmatrix}H_0 \\ H_1\end{pmatrix}$, 
but to remain in the analytic function framework, we prefer to rewrite it as
\[ f_\theta(z)=\Re\left(e^{-\imath \theta} \<z|H_0+\imath H_1|z\> \right). \]
Geometrically (see Fig.~\ref{f:geometry}), this means that $f_\theta(z)$ is 
the orthogonal projection of $\<z|H_0+\imath H_1|z\>$ on $\mathbb{R}e^{\imath \theta}$, 
the half-line with argument $\theta$ drawn from the origin of $\mathbb{C}$.  
Taking $|z\>$ to be the $k$th eigenvector, $\<z|H_0+\imath H_1|z\>$ projects onto the corresponding eigenvalue, 
that is, the energy level $\lambda_k(\theta)$. 

More formally, defining
\begin{eqnarray*}
f: \mathbb{S}^{2N-1}/\mathbb{S}^1 \cong \mathbb{C}\mathbb{P}^{N-1} &\to& \mathbb{C},\\
z & \mapsto&  \<z|H_0+\imath H_1|z\> ,
\end{eqnarray*}
the function $f_\theta$ decomposes as 
$f_\theta=p_{\mathbb{R}e^{\imath \theta}}\circ f$. 
Defining the \textit{\textbf{numerical range}} or \textit{\textbf{field of values}}~\cite{JonckheereAhmadGutkin}   
of $H:=H_0+\imath H_1$ as $\mathcal{F}(H):=f(\mathbb{C}\mathbb{P}^{N-1})$,  
the decomposition of $f_\theta$ can be depicted, more formally than by Fig~\ref{f:geometry}, 
by the following diagram:
\begin{equation}\label{e:triangular_diagram}
\begin{array}{ccl}
\mathbb{C}\mathbb{P}^{N-1} & \stackrel{f}{\longrightarrow} & \mathcal{F}(H) \\
         & \stackrel{f_\theta}{\searrow} & 
                         \downarrow  p_{\mathbb{R}e^{\imath \theta}}\\
&&\\
& &        \mathbb{R}e^{\imath \theta}
\end{array}
\end{equation}
By the Toeplitz-Hausdorff theorem~\cite{Toeplitz_Hausdorff}, 
$\mathcal{F}$ is compact and convex.

The numerical range mapping has critical points, that is, points $z^0$ where $\mathrm{rank} (d_{z^0}f)<2$,
where  $d_{z^0}f : T_{z^0}\mathbb{C}\mathbb{P}^{N-1} \to \mathbb{C}$ denotes the differential of $f$ at $z^0$.
The differentials of $f$ and $f_\theta$ are related as 
$d_{z}f_\theta=\Re \left(e^{-\imath \theta} d_z f  \right)$. 
This immediately leads to 
\begin{corollary}\label{eigenHvsHtheta}
A critical point of $f_\theta$ is a critical point of $f$. 
Conversely, if $z$ is a critical point of $f$  
such that $\mbox{rank}(d_z f)=1$, there exists a unique $\theta$ (up to a multiple of $\pi$) such that 
$z$ is a critical point of  $f_\theta$. 
If  $\mbox{rank}(d_z f)=0$, $z$ is a critical point of $f_\theta$ 
for all $\theta$'s.
\end{corollary}

\begin{proof}
Observe that 
$df_\theta(\delta)=2\begin{pmatrix}\cos \theta & \sin \theta\end{pmatrix}
\begin{pmatrix}\Re\<z|H_0|\delta\> \\ \Re\<z|H_1|\delta\>\end{pmatrix}$
is linear (but not complex analytic) in $\delta$. 
Likewise, $d_zf(\delta)=2\begin{pmatrix}1 & \imath \end{pmatrix}
\begin{pmatrix}\Re\<z|H_0|\delta\> \\ \Re\<z|H_1|\delta\> \end{pmatrix}$ 
is also linear (but not complex analytic) in $\delta$. 
From there, the result should be obvious.
\end{proof}

The locus of the $f(z^0)$'s for the various critical points $z^0$ splits into several critical values curves 
embedded in $\mathcal{F}$.  
One such critical value curve is the boundary curve, $\partial \mathcal{F}$, 
while the other curves are in the interior of $\mathcal{F}$. 
The main distinction between the boundary curve and the others is that the former is generically smooth, 
while the others are singular, with cups combining to form swallow tails. 
These topics are addressed in the forthcoming section 
in much more details than in~\cite{adiabatic,JonckheereAhmadGutkin}.

The critical value curves of Fig.~\ref{f:geometry} were constructed as the envelopes of lines orthogonal to $\mathbb{R}e^{\imath \theta}$. Now we prove the converse:

\begin{proposition}
\label{p:projection}
The line orthogonally projecting 
the critical value $f(z_\theta^0)$ onto $\mathbb{R}e^{\imath \theta}$  
is tangent to the critical value curve 
passing through $f(z_\theta^0)$. 
\end{proposition}

\begin{proof}
Consider the critical point $z_\theta^0 \in \mathbb{CP}^{N-1}$,  
associated with some eigenvalue of $H_\theta$, along with the critical curve 
$\gamma$ passing through it.  To prove the theorem, it suffices to show that 
\[ d_{z^0_\theta} \left( f_\theta | \gamma\right) = 0. \]
But since $z_\theta^0$ is a critical point of $f_\theta$, $d_{z^0_\theta} f_\theta=0$. Hence the proof. 
\end{proof}

Finally, we look at the rank 0 critical values. 

\begin{lemma}\label{l:eigenH}
The $\mathrm{rank} (d_{z^0}f)=0$ critical points of $f$ are the eigenvectors of $H$.
\end{lemma}
\begin{proof}
Let $[z]$ denote the equivalent class $\{ze^{\imath \theta}: \theta \in [0,2\pi)\}$.  
Observe that $T_z\mathbb{C}\mathbb{P}^{N-1}$ is the space $[z^\perp]$ orthogonal to $z$ in the sense that 
 $\<[z]|[z^\perp]\>_{\mathbb{C}\mathbb{P}^{N-1}}=\left[\<z|z^\perp\>_{\mathbb{S}^{2N-1}}\right]=0$. 
Hence, if $z^0$ is an eigenvector and $[\delta] \in [z^{0\perp}]$,  it follows that 
$d_{z^0}f(\delta)=\<[z^{0}]|H|[\delta]\> + \< [\delta] |H|[z^{0}] \>=0$ since each term vanishes.  
Conversely, if $d_{z^0}f(\delta)=0$, 
pulling the arbitrary global factors of $[\delta]$ and $[z^0]$ 
from the terms of the sum $\<[z^{0}]|H|[\delta]\> + \< [\delta] |H|[z^{0}]\>$, 
which becomes $\Re(\<z^0|H|\delta\>\exp(\imath \phi))=0$, 
where $\phi$ is the consolidated phases of $\delta$ and $z^0$, 
it follows that the latter sum equals zero only if each term vanishes. Hence $z^0$ is an eigenvector. 
\end{proof}

\begin{remark}
The proof of Lemma~\ref{l:eigenH} is already in~\cite{JonckheereAhmadGutkin}, 
but is here considerably simplified by arguing in the complex projective space as opposed to the sphere. 
\end{remark}

\section{Critical value curves and their singularities}
\label{s:swallow_tail}

\begin{figure}[t]
	\centering
	\mbox{
		\subfigure[Fairly generic $\gamma_1$, $\gamma_2$ critical value curves. 
Observe the swallow tail at a 45$^\circ$ deg angle where the minimum gap occurs. 
Also observe the tendency of the edge of the $\gamma_2$ swallow tail to align itself 
with an area of minimum curvature of $\gamma_1$.]
{\scalebox{0.6}{\includegraphics{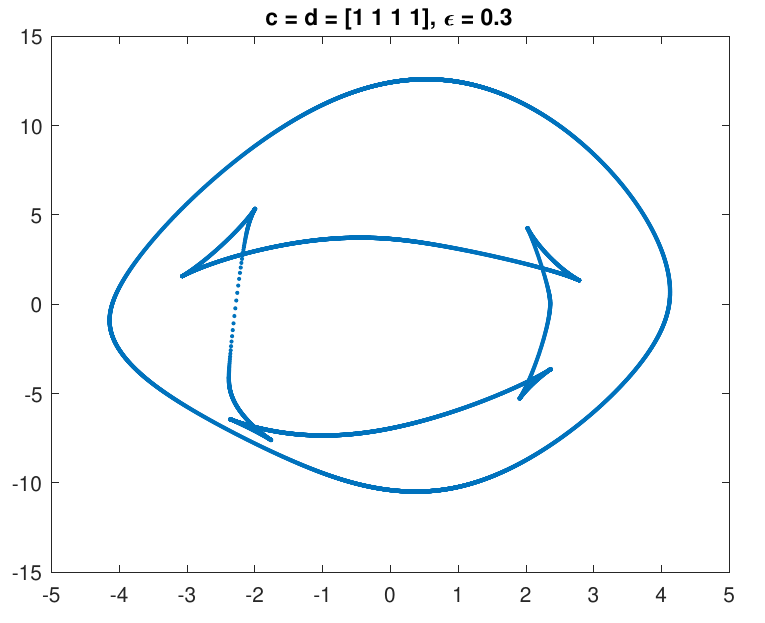}}}
\quad		
\subfigure[Less generic case revealing the unexpected configuration of double swallow tail.  
This pairing is caused by the four sides of the boundary quadrangle each consisting of two branches of small curvature at a shallow angle with the edges of the swallow tails attempting to connect to the low curvature branches.]{\scalebox{0.6}{\includegraphics{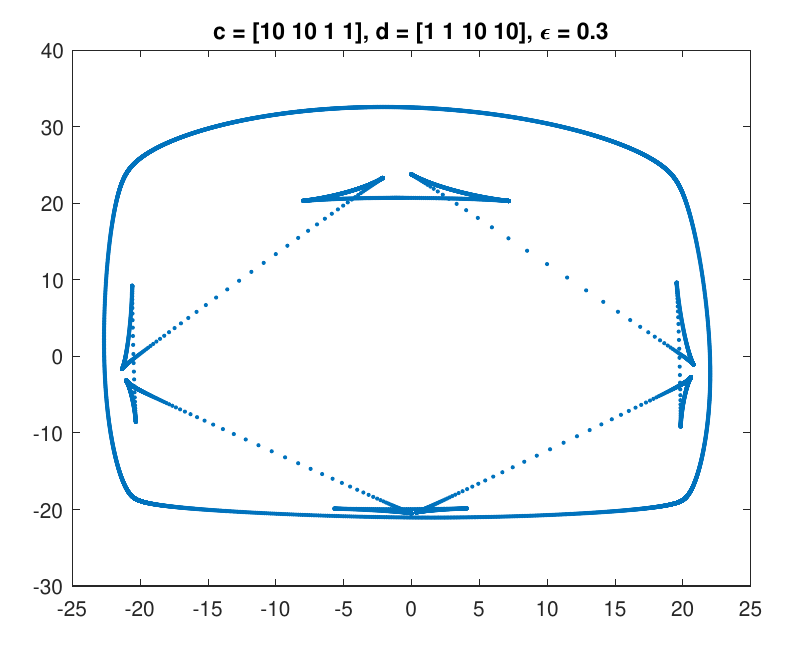}}}
	}
	\caption{Critical value curve of 4-qubit Ising chain $H_1$ with $y$-transverse field $H_0$, 
onsite potentials $c$, $d$ on $H_1$, $H_0$, resp., 
and random perturbations $[-\epsilon,+\epsilon]$ on $H_1$ and $H_0$}
	\label{f:realistic}
\end{figure}

The overarching assumption while crafting the singularity landscape is that the eigenvalues of $H_\theta$ are pairwise distinct and listed as 
\begin{equation}\label{e:overarching}
\lambda_1(\theta) < \lambda_2(\theta) < \lambda_3(\theta) < ... < \lambda_N(\theta), 
\quad \forall \theta \in [0,2\pi].
\end{equation} 
This does not mean that $\lambda_2$ stays at a safe numerical distance from $\lambda_1$; 
quite on the contrary, in general, there will be some angle at which $\lambda_2-\lambda_1$ 
will be numerically indistinguishable from $0$. This is illustrated in Fig~\ref{f:realistic} 
where some swallow tails get very close to the boundary.   
\subsection{Analytic category}

Since $\cos(\theta)=(e^{\imath \theta}+e^{-\imath \theta})/2$ and 
$\sin(\theta)=(e^{\imath \theta}-e^{-\imath \theta})/2\imath$, the operator 
$H_\theta$ can be redefined as $H(e^{\imath \theta})=H_0(e^{\imath \theta}+e^{-\imath \theta})/2 + H_1 (e^{\imath \theta}-e^{-\imath \theta})/2\imath$,  
and 
the eigenvalues $\lambda_k(e^{\imath \theta})$ of $H(e^{\imath \theta})$ can be viewed as functions of $z=e^{\imath \theta}$ defined on the unit circle $\mathbb{T}$ of the complex plane, 
soon to be extended to a (nonsimply connected) neighborhood $\mathcal{N}(\mathbb{T})$ of the unit circle. 

\begin{lemma}
Under Conditions~\eqref{e:overarching} on the eigenvalues, 
$\lambda_k(z)$ is complex analytic (but real valued) in a neighborhood $\mathcal{N}(\mathbb{T})$ of the unit circle. Moreover, 
$\lambda_k(e^{\imath \theta})$ viewed as a function of $\theta \in [0,2\pi]$ is real analytic. 
\end{lemma}
\begin{proof}
Following Kato~\cite[Chap 2, Sec. 1]{Kato1995}, 
the $\lambda$-roots of $\det(\lambda I - H(e^{\imath \theta}))=0$ 
are branches of complex analytic functions of 
$e^{\imath \theta}$ restricted to a simply connected domain, say $\theta \in [0,2\pi)$. 
Since the eigenvalues are pairwise distinct, all such complex analytic functions are 
single-valued.
Moreover, since $H(e^{\imath \theta})$ is periodic, 
so are its eigenvalues, all of which are \emph{real-valued.}  
Hence the domain can be extended to the whole $\mathbb{T}$, 
even to a neighborhood $\mathcal{N}(\mathbb{T})$ sufficiently small 
so that Conditions~\eqref{e:overarching} still hold off $\mathbb{T}$. 
As for the real analyticity, consider the following composition of real analytic functions:
$\theta \to (\cos(\theta),\sin(\theta)) \to (\Re(H_\theta),\Im(H_\theta))\to \lambda_k(\theta)$. 
By the Fa\`a di Bruno formula~\cite[Proposition 1.4.2]{primer_analytic}, this composition of real analytic functions is real analytic.  
\end{proof}

\begin{lemma}\label{l:analytic}
Under the same conditions~\eqref{e:overarching} on the eigenvalues, 
the eigenprojections $P_k(e^{\imath \theta})$ of the eigenvalues 
$\lambda_k(e^{\imath \theta})$ are complex analytic in a neighborhood $\mathcal{N}(\mathbb{T})$ of the unit circle. 
Moreover, the eigenvectors $z_k: \mathcal{N}(\mathbb{T}) \to \mathbb{C}\mathbb{P}^{N-1}$ are complex analytic. 
\end{lemma}
\begin{proof}
The first part is in Kato~\cite[Chap. 2, Sec. 4]{Kato1995}. 
For the second part, observe that $P_kv$ is complex analytic, $\forall v\in \mathbb{C}^N$. 
Since $P_k=v_kv_k^\dagger$, it follows that $P_kv=z_k (z_k^\dagger v)$ is complex analytic. 
It remains to show that $z_k$ is complex analytic. 
Assume by contradiction that one of its component, say $z_{k1}$, is not. 
Pick $v=(1,0,...,0)^\dagger$. It follows that $v_{k,1}v_{k,1}^\dagger$ is not complex analytic. 
But this contradicts the analyticity of $(P_k)_{11}$. 
\end{proof}

While the preceding lemma guarantees \emph{existence} of complex analytic eigenvectors, 
we now develop an explicit \emph{construction} method:

\begin{corollary}\label{c:the_theorem}
Under the same pairwise distinct condition~\eqref{e:overarching} on the eigenvalues of $H_\theta$, 
for any $\ell \in \{1,...,N\}$ 
and $\forall \theta \in [0,2\pi)$, 
all $(N-1)\times(N-1)$ principal minors  of $H_\theta-\lambda_\ell I$ are nonvanishing. 
\end{corollary}
\begin{proof}
Since $\lambda_1(\theta) <...< \lambda_N(\theta)$ are pairwise distinct, so are 
\begin{equation}\label{e:inequalities}
\lambda_1(\theta)-\lambda_\ell(\theta) < ... < \lambda_{\ell-1}(\theta)-\lambda_\ell(\theta) < 0 
< \lambda_{\ell +1}(\theta)-\lambda_\ell(\theta) <...< \lambda_N(\theta)-\lambda_\ell(\theta),\; \forall \theta.
\end{equation} 
Assume, by contradicting hypothesis, that a principal minor vanishes at $\theta^*$. 
Without loss of generality, assume that this minor is in the top left-hand corner position in  $\lambda_\ell(\theta^*) I-H_{\theta^*}$. 
By local unitary transformations (they need not be analytic), 
the last row and columns of the $N \times N$ matrix 
can be brought to vanishing form to reflect the cancellation of the determinant at $\theta^*$. 
Since this top left-hand corner submatrix is Hermitian and singular, 
by another series of local unitary transformations (they need not be analytic), 
the $(N-1)^{th}$ row and column of the 
top left-hand corner block can be manipulated to vanish. 
But then it is evident that the rank of $\lambda(\theta^*)I-H_{\theta^*}$ drops to $N-2$, 
which can only occur when 
among $\{\lambda_k(\theta^*)-\lambda_\ell(\theta^*)\}_{k=1,\ne \ell}^N$ a difference of eigenvalues  
vanishes. But this contradicts~\eqref{e:inequalities}. 
\end{proof}

To put it simply, if we were to plot \emph{all} $(N-1) \times (N-1)$ minors of 
$\lambda_\ell(\theta) I- H_\theta$ versus $\theta$, \emph{none} of them would cross the 0-level. 

Choosing a particular $(N-1)\times (N-1)$ minor specified by the  position it occupies in $\lambda_\ell(\theta) I - H_\theta$ 
and guaranteed to be nonsingular across the unit circle $\mathbb{T}$,  
an eigenvector 
$\ket{z_k(e^{\imath \theta})}$ can be constructed using elementary linear algebra. 
Temporarily ignoring the normalization, construction of $\ket{z_k(e^{\imath \theta})}$ only entails rational operations and $\ket{z_k(e^{\imath \theta})}$ is analytic. 
The normalization requires the nonanalytic operation of conjugation; 
however, it can be made analytic 
by setting $e^{\imath \theta}=z$ and  $e^{-\imath \theta}=z^{-1}$. 
(For a formalization of this \emph{para-conjugation} operation, 
$(e^{\imath \theta})^\ddagger=z^{-1}$), see~\cite{Belevitch}.) 
Hence $z_k^\ddagger z_k$ is analytic and by proper scaling its square root can also be made analytic. 
Therefore, the normalized eigenvector is analytic 
in $z$ in a neighborhood of the unit circle $\mathbb{T}$.

\begin{remark}
Lemma~\ref{l:analytic} has been proved in the simplest possible setting. It is noted that this lemma is just a manifestation of a broader set of results 
along the lines of the analytic Dole\v{z}al theorem~\cite{SilvermanBucyDolezal} 
and the spectral factorization~\cite{analytic_spectral_factorization}. 
Moreover, the latter reveals that the eigenvector $z_k$ is defined only up to an inner factor. 
\end{remark}

\begin{remark}
There is an overlap between between Lemma~\ref{l:analytic} and Corollary~\ref{c:the_theorem}, 
in the sense that the analyticity result of the former could be used to prove existence of a nonsingular 
principal minor. Indeed, if such a minor would not exist, the construction of the eigenvector $z_k(\theta)$ that is analytical over the entire unit circle would require a partition of unity, which does not exist in the analytical category. 
\end{remark}

\subsection{Curvature}
\label{s:theta_parameterization}

As per Proposition~\ref{p:projection}, and as illustrated in Fig.~\ref{f:geometry}, 
the  critical $\lambda_k$-value curve is the envelope  
of the lines with complex argument $\theta \pm \pi/2$ drawn from $\lambda_k(\theta)e^{\imath \theta}$. 
The equations of the  critical $\lambda_k$-value curve $\gamma_k(\theta)=x_k(\theta)+\imath y_k(\theta)$ are given by
\[\begin{array}{cccc}
  & y_k \sin \theta + x_k \cos \theta -\lambda_k(\theta)&   & =0,\\
( & y_k \sin \theta + x_k \cos \theta -\lambda_k(\theta)&)' & =0,
\end{array}\]
where $(\cdot)'$ denotes the derivative relative to $\theta$. 
After some manipulation, this yields the parametric equations
\begin{eqnarray*}
x_k(\theta)&=&\lambda_k(\theta) \cos \theta - \lambda'_k(\theta) \sin \theta, \\
y_k(\theta)&=& \lambda_k(\theta) \sin \theta + \lambda'_k(\theta) \cos \theta .
\end{eqnarray*}
A \textit{\textbf{singular point}} of the critical $k$-value curve is a point such that $x_k'(\theta)=y_k'(\theta)=0$. Some further manipulation on the above yields
\begin{eqnarray}
x_k'(\theta)&=& -(\lambda_k(\theta)+\lambda''_k(\theta))\sin \theta,\label{e:singular_points_1}\\
y_k'(\theta)&=& (\lambda_k(\theta)+\lambda''_k(\theta))\cos \theta. \label{e:singular_points_2}
\end{eqnarray}
Therefore, the singular points of the $k$-critical value curve are given by $\lambda_k(\theta)+\lambda''_k(\theta)=0$.

The arc length of the $k$-curve is given by 
\begin{equation}
\label{e:ds2dtheta2} 
ds_k^2 = (\lambda_k(\theta)+\lambda''_k(\theta))^2 d\theta^2. 
\end{equation} 
The curvature of the $k$-curve is given by
\begin{equation}\label{e:curvature} 
\kappa_k:=\frac{d\theta}{ds_k}=\pm \frac{1}{\lambda_k(\theta)+\lambda''_k(\theta)}. 
\end{equation}

The quantity $\lambda_k(\theta)+\lambda_k''(\theta)$ therefore has a two-fold importance:
First, by Eq.~\eqref{e:singular_points_1}-~\eqref{e:singular_points_2}, $\lambda_k(\theta)+\lambda_k''(\theta)=0$ 
means that $(x_k(\theta),y_k(\theta))$ is a singular point of the critical $k$-value curve and, secondly, by the above, $\lambda_k(\theta)+\lambda_k''(\theta)$
dictates the curvature of the critical $k$-value curve. To determine the sign of the curvature, first observe that $\theta$ is measured trigonometrically, but in addition an \textbf{\textit{orientation}}, that is, a direction to circulate along the curve, 
more precisely, an assignment of a sign to $ds_k$, is necessary. It is common practice to take the orientation trigonometric as well. To solve the $\pm$ ambiguity, the following technical result, which relates the curvature to the eigenstructure of the Hamiltonian, is needed:
\begin{lemma}\label{l:Karow}
\label{l:karow}
\begin{subequations} 
	\begin{align}
\lambda_k(\theta)+\lambda_k''(\theta)
&=2\<z_k(\theta)|H'_\theta(\lambda_k(\theta)I-H_\theta)^+H_\theta'|z_k(\theta)\>,\label{e:fromKarow}\\
&=2\sum_{\ell \ne k} |q_{k,\ell}(\theta)|^2
\frac{1}{\lambda_k(\theta)-\lambda_\ell(\theta)},\label{e:frommine}
\end{align}
\end{subequations} 
where $z_k(\theta)$ is the (normalized) eigenvector associated with $\lambda_k(\theta)$, 
$(\cdot)'$ denotes the derivative relative to $\theta$, $(\cdot)^+$ denotes the Moore-Penrose pseudo-inverse, and 
\begin{equation}\label{e:theQs}
q_{k,\ell}(\theta):=\<z_\ell(\theta) |H'_\theta| z_k(\theta)\>.
\end{equation} 
\end{lemma}
\begin{proof}
Eq.~\eqref{e:fromKarow} is a corollary of a general result~\cite[Th. 3.7(.3)]{GutkinJonckheereKarow} about numerical ranges; 
for the specific details related to adiabatic computation, see~\cite[Sec. 4.2.2]{adiabatic}. 
To keep the exposition self-contained, the relevant material is reviewed in Appendix~\ref{a:differentiability}. Eq.~\eqref{e:frommine} stems from the expansion $H'_\theta \ket{z_k}=\sum_\ell q_{k,\ell}(\theta)z_\ell(\theta)$  in terms of the eigenbasis of $H_\theta$ and the definition of the Moore-Penrose pseudo-inverse as 
\[\left(\lambda_k I - H_\theta\right)^+=U\mathrm{diag}
\left\{\frac{1}{\lambda_k-\lambda_1},\cdots,
\frac{1}{\lambda_{k}-\lambda_{k-1}},0,
\frac{1}{\lambda_{k}-\lambda_{k+1}},\cdots,
\frac{1}{\lambda_k-\lambda_N}\right\}U^\dagger,\] 
where $U$ is the matrix of eigenvectors of $H_\theta$. 
\end{proof}

To resolve the $\pm$ ambiguity in Eq.~\eqref{e:ds2dtheta2}, start with the boundary $\gamma_1$ in the generic case where it is smooth. 
An elementary geometric argument reveals that the curvature is positive, 
but by Lemma~\ref{l:karow} $\lambda_1(\theta)+\lambda_1'' (\theta)<0$.  Therefore, the negative sign prevails for $k=1$. 

Besides the $k=1$ case for which the curvature is obviously nonnegative, generically positive, 
Lemma~\ref{l:karow} as it stands does not shed light on the curvature of the $\gamma_{k>1}$ curves. 
However, averaging over $k>1$ and $\theta$ allows us to identify those segments of the first and higher excitation curves that have negative curvature. 
We need the following lemma:
\begin{lemma}\label{l:correspondence}
The critical value curves $\gamma_k$ and $\gamma_{N-k+1}$  have the same images; precisely, 
$\gamma_k([0,2\pi))=\gamma_{N-k+1}([0,2\pi))$.  Moreover, there exists an involution 
$\gamma_k([0,2\pi)) \to \gamma_k([0,2\pi))$ given by 
$\gamma_k(\theta)\mapsto \gamma_k(\theta+\pi)$. 
\end{lemma}
\begin{proof}
On the $\lambda$-coordinate axes with arguments $\theta$ and $\theta+\pi$, 
it is easily seen that $\lambda_k(\theta)=-\lambda_{N-k+1}(\theta+\pi)$, as \emph{coordinates.} 
However, if $\bm \lambda_k(\theta)$ denotes the \emph{geometrical} point with corrdinate $\lambda_k(\theta)$, 
we have $\bm\lambda_k(\theta)=\bm\lambda_{N-k+1}(\theta+\pi)$. 
If $\gamma[\bm\lambda(\theta)]$ denotes the point of contact of a critical value curve with 
the tangent raised from $\bm\lambda(\theta)$ perpendicular to the \emph{line} of argument $\theta$ or $\theta+\pi$, 
we have $\gamma[\bm\lambda_k(\theta)]=\gamma[\bm\lambda_{N-k+1}(\theta+\pi)]$. 
The left-hand side can be associated with the $k$-curve 
while the righ-hand side is associated with $(N-k+1)$-curve. Hence 
$\gamma_k([0,2\pi))=\gamma_{N-k+1}([0,2\pi))$. From there we define the involution 
$\gamma_k(\theta)\mapsto\gamma_k(\theta+\pi)$. 
\end{proof}

In simple terms, the involution takes a point with normal $\theta$ on a critical value curve and maps it to the point on the same curve with normal $\theta+\pi$. 

To proceed to the averaging procedure, observe that the Fourier expansion of $\Tr(H_\theta)$ has only the fundamental 
without constant term. Therefore, $\Tr(H_\theta)+\frac{d^2}{d\theta^2} \Tr(H_\theta)=0$, and in further, 
\[\sum_{k=1}^N \left({\lambda}_k(\theta)+ {\lambda}_k''(\theta)\right)=0. \]
Assuming that $N$ is even as is the case in adiabatic quantum computations, we group the various terms as follows:
\[  \sum_{k=1}^{N/2} \left(\left({\lambda}_k(\theta)+ {\lambda}_k''(\theta)\right)
+ \left({\lambda}_{N-k+1}(\theta)+ {\lambda}_{N-k+1}''(\theta)\right)\right)=0, \]
the idea being that, via the involution, every (closed) curve has two points of contact with the tangent orthogonal to $e^{\imath \theta}$,  
as see in Lemma~\ref{l:correspondence}. Clearly, $k=1$ corresponds to the boundary. 
Since it is generically smooth and convex, its curvature is positive, viz., 
\[  \left(\left({\lambda}_1(\theta)+ {\lambda}_1''(\theta)\right)
+ \left({\lambda}_{N}(\theta)+ {\lambda}_{N}''(\theta)\right)\right) < 0, \]
But then
\[  \sum_{k=2}^{N/2} \left(\left({\lambda}_k(\theta)+ {\lambda}_k''(\theta)\right)
+ \left({\lambda}_{N-k+1}(\theta)+ {\lambda}_{N-k+1}''(\theta)\right)\right) > 0, \]
which implies that the average curvature is negative---if the $\pm$ ambiguities can be resolved.

Using Lemma~\ref{l:correspondence}, the inequality for the ground curve can be rewritten
\[ \left( {\lambda}_1(\theta)- {\lambda}_{1}(\theta+\pi) \right)
 + \left( {\lambda}_1''(\theta)
- {\lambda}_{1}''(\theta+\pi)\right) < 0. \]
Regarding the higher excitation curves, we get
\[\sum_{k=2}^{N/2} \left(\left( {\lambda}_k(\theta)- {\lambda}_{k}(\theta+\pi)\right)
+\left( {\lambda}_k''(\theta)
- {\lambda}_{k}''(\theta+\pi)\right)\right) > 0. \]
Noting that $\lambda_1(\theta)$ and $(-\lambda_1(\theta+\pi))$ have the same sign, 
with a similar observation for the other terms, we get
\begin{proposition}
\begin{equation}\label{eq:average_int}
\int\limits_0^{2\pi} \left( \lambda_1(\theta)+\lambda_1''(\theta)\right) d\theta =
-\int\limits_0^{2\pi}~\sum_{k=2}^{N/2} \left(\lambda_k(\theta)+\lambda_k''(\theta) \right)d\theta .
\end{equation}
\end{proposition}

To translate the above into a \emph{curvature} statement, an orientation and a \textit{\textbf{co-orientation}}, that is, a direction normal to the curve, must be agreed upon 
\emph{consistently} for both the ground and the higher excitation curves. For the ground level, the orientation is trigonometric and the co-orientation $e^{\imath \theta}$  points outward from the closed curve. The co-orientation $e^{\imath \theta}$ of the higher excited levels points outward as for the ground level. The orientation of the higher excitation levels requires some deeper consideration to make it consistent with 
that of the ground level. It is a phenomenon, justified later, that around tight gaps, the edges of the swallow tails get close to the boundary, making it imperative to make the orientation of the higher excitation levels at the swallow tail edges consistent with the ground level. With these considerations, with $\lambda_1+\lambda_1''<0$ yet the curvature positive, it follows that the left-hand side of Eq.~\eqref{eq:average_int} is $-\kappa_1^{-1}$. For the higher excitation curves, since $\lambda_k+\lambda_k''$ changes sign, yet the sign of the curvature remains constant, there is a need to change the $\pm$ sign  across the cusps. Let the cusps be labelled with an odd number if the orientation departs from the cusp of a swallow tail along its edge and with an even number of the orientation of the edge terminates at the cusp. Hence the integral of the right-hand side of Eq.~\eqref{eq:average_int} should be broken as 
\[\sum_{m~\mathrm{odd}}^{} \left(
 \int_{ \theta_{m \mathrm{~odd}} }^{ \theta_{ m+1 \mathrm{~even}} } (\cdot)d\theta
-\int_{ \theta_{m+1 \mathrm{~even}} }^{ \theta_{m+2 \mathrm{~odd}} }(\cdot)d\theta\right),\]
where the sum is extended over one cycle around every $k$-curve. 
Hence,
\begin{proposition}
\[ -\int\limits_0^{2\pi} \kappa_1^{-1}(\theta)d\theta =
-\sum_{m~\mathrm{odd}}^{} \left(
\pm \int_{ \theta_{m \mathrm{~odd}} }^{ \theta_{ m+1 \mathrm{~even}} } \sum_{k=1}^{N/2}\kappa_k^{-1}(\theta)d\theta
\mp\int_{ \theta_{m+1 \mathrm{~even}} }^{ \theta_{m+2 \mathrm{~odd}} }\sum_{k=1}^{N/2}\kappa_k^{-1}(\theta)d\theta \right),\]
where the $\pm,\mp$ ambiguities must be resolved 
from Eq.~\ref{e:curvature} on a case-to-case basis. 
\end{proposition}

\subsection{Boundary curve}
\label{s:smoothness_boundary}

The smoothness---or the lack thereof---of the boundary of  $\mathcal{F}(H)$ was investigated in~\cite{JonckheereAhmadGutkin} and the relevant results are summarized in Proposition~\ref{p:smoothness_boundary_review} of Appendix~\ref{a:review}. 
Here we address essentially the same problem, but from the novel point of view of 
$\lambda_1(\theta)+\lambda''_1(\theta)$.

\begin{theorem}
\label{t:dramatic_commutator}
If the matrix commutator $[H_0,H_1]$ has no eigenvalues at $0$, $\lambda_1(\theta)+\lambda_1''(\theta)\ne 0$, $\forall \theta$.  
Conversely, 
if $[H_0,H_1]|z_1(\theta)\>=0$ for $|z_1(\theta)\>$ the eigenvector of $H_\theta$ associated with $\lambda_1(\theta)$, then $\lambda_1(\theta)+\lambda_1''(\theta)= 0$ for that particular $\theta$. 
\end{theorem}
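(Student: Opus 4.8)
The plan is to work directly from Lemma~\ref{l:karow}, which expresses
\[ \lambda_1(\theta)+\lambda_1''(\theta)=2\<z_1(\theta)|H_\theta'(\lambda_1(\theta)I-H_\theta)^\dagger H_\theta'|z_1(\theta)\>. \]
First I would compute $H_\theta'=\frac{d}{d\theta}(H_0\cos\theta+H_1\sin\theta)=-H_0\sin\theta+H_1\cos\theta$, so that $H_\theta'$ is itself a point $H_{\theta+\pi/2}$ on the same circle. The key observation is that the quadratic form on the right is built from the operator $A_\theta:=H_\theta'(\lambda_1(\theta)I-H_\theta)^\dagger H_\theta'$ sandwiched by the ground eigenvector. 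Since $(\lambda_1(\theta)I-H_\theta)^\dagger$ is the pseudo-inverse restricted to the orthogonal complement of the $\lambda_1$-eigenspace, and its nonzero eigenvalues are $(\lambda_1(\theta)-\lambda_j(\theta))^{-1}<0$ for $j\ge 2$ (because $\lambda_1$ is the smallest eigenvalue), the operator $(\lambda_1 I-H_\theta)^\dagger$ is negative semidefinite. Hence $A_\theta$ is negative semidefinite, and the quadratic form $\<z_1(\theta)|A_\theta|z_1(\theta)\>$ is $\le 0$, vanishing \emph{iff} $H_\theta'|z_1(\theta)\>$ lies in the kernel of $(\lambda_1 I-H_\theta)^\dagger$, i.e.\ in the $\lambda_1$-eigenspace (or is zero). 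So $\lambda_1(\theta)+\lambda_1''(\theta)=0$ precisely when the component of $H_\theta'|z_1(\theta)\>$ orthogonal to the ground eigenspace vanishes, which (for a simple eigenvalue) means $H_\theta'|z_1(\theta)\>$ is proportional to $|z_1(\theta)\>$.

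Next I would translate the condition ``$H_\theta'|z_1\>\in\mathrm{span}\{|z_1\>\}$'' into a statement about the commutator $[H_0,H_1]$. Writing $|z_1\>$ as the ground eigenvector of $H_\theta=H_0\cos\theta+H_1\sin\theta$, the condition that both $H_\theta|z_1\>=\lambda_1|z_1\>$ and $H_\theta'|z_1\>=\mu|z_1\>$ hold says that $|z_1\>$ is a simultaneous eigenvector of $H_\theta$ and $H_{\theta+\pi/2}$, hence (taking linear combinations) a simultaneous eigenvector of $H_0$ and $H_1$. But $H_0$ and $H_1$ share an eigenvector $|v\>$ with eigenvalues $a,b$ exactly when $[H_0,H_1]|v\>=(ab-ba)|v\>=0$. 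Thus $H_\theta'|z_1(\theta)\>\in\mathrm{span}\{|z_1(\theta)\>\}$ is equivalent to $[H_0,H_1]|z_1(\theta)\>=0$. This immediately gives the converse direction of the theorem: if $[H_0,H_1]|z_1(\theta)\>=0$ then the orthogonal component of $H_\theta'|z_1\>$ is zero, so $\lambda_1(\theta)+\lambda_1''(\theta)=0$.

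For the forward (contrapositive) direction, I would argue: if $\lambda_1(\theta)+\lambda_1''(\theta)=0$ for some $\theta$, then by the above $[H_0,H_1]|z_1(\theta)\>=0$, i.e.\ $|z_1(\theta)\>$ is a null vector of the skew-Hermitian operator $[H_0,H_1]$; equivalently $\imath[H_0,H_1]$ (which is Hermitian) has $0$ as an eigenvalue. Contrapositively, if $[H_0,H_1]$ has no eigenvalue at $0$ then no such $\theta$ exists, so $\lambda_1(\theta)+\lambda_1''(\theta)\ne 0$ for all $\theta$. One technical point to handle cleanly is the case where $\lambda_1(\theta)$ is \emph{not} simple at some $\theta$: there Lemma~\ref{l:karow} as literally stated needs the eigenvector $z_1(\theta)$, and one should either invoke the convention that $\lambda_1(\theta)$ is simple along the master path (generic by Theorem~\ref{t:von_neumann-wigner}), or note that at a degeneracy $\lambda_1'$ and $\lambda_1''$ need reinterpretation via the analytic branches; I would state the simple-eigenvalue hypothesis explicitly, consistent with Proposition~\ref{p:projection}.

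The main obstacle I anticipate is the negative-semidefiniteness step: one must be careful that the Moore--Penrose pseudo-inverse $(\lambda_1 I-H_\theta)^\dagger$ acts only on the range of $\lambda_1 I-H_\theta$ (the orthogonal complement of the $\lambda_1$-eigenspace) and is zero on that eigenspace, so that the quadratic form picks up exactly the ``off-diagonal'' part $\sum_{j\ge 2}\frac{|\<z_j|H_\theta'|z_1\>|^2}{\lambda_1-\lambda_j}$, a manifestly nonpositive sum that is zero iff every numerator vanishes, i.e.\ iff $H_\theta'|z_1\>$ has no component along any $|z_j\>$, $j\ge2$. Making this spectral decomposition explicit is the crux; everything else is the commutator bookkeeping described above.
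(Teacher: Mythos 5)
Your proposal follows essentially the same route as the paper: apply Lemma~\ref{l:karow}, use the negative semidefiniteness of $(\lambda_1(\theta)I-H_\theta)^\dagger$ to reduce the vanishing of $\lambda_1+\lambda_1''$ to the condition that $H_\theta'|z_1(\theta)\>$ lies in the $\lambda_1$-eigenspace of $H_\theta$, and then convert that into the commutator condition; your explicit spectral sum $\sum_{j\ge 2}|\<z_j|H_\theta'|z_1\>|^2/(\lambda_1-\lambda_j)$ is just a more detailed rendering of the paper's terser argument. One step in your converse direction is stated incorrectly, though: the claim that $H_0$ and $H_1$ share the eigenvector $|v\>$ \emph{exactly when} $[H_0,H_1]|v\>=0$ is false in general (a null vector of the commutator need not be a common eigenvector --- e.g.\ the kernel vector of a generic real antisymmetric $3\times 3$ commutator $[\diag(1,2,3),H_1]$ is not a coordinate axis). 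The implication you actually need is saved by the extra hypothesis you have but do not invoke at that point, namely $H_\theta|z_1(\theta)\>=\lambda_1(\theta)|z_1(\theta)\>$: as in the paper, $[H_0,H_1]|z_1\>=0$ gives $[H_\theta,H_\theta']|z_1\>=0$, hence $(\lambda_1 I-H_\theta)H_\theta'|z_1\>=0$, so $H_\theta'|z_1\>$ is annihilated by the pseudo-inverse and the lemma yields $\lambda_1+\lambda_1''=0$; you should route the converse through this computation rather than through the false general equivalence.
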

\begin{proof}
We prove the first statement by contradiction. 
By the Lemma~\eqref{l:Karow}, 
$\lambda_1(\theta)+\lambda_1''(\theta)= 0$ for some $\theta$  
together with $\lambda_1(\theta)I-H_\theta \leq 0$ implies that 
$(\lambda_1(\theta)I-H_\theta)^\dagger H_\theta'|z_1(\theta)\>=0$.
The preceding in turn implies that 
$H_\theta'|z_1(\theta)\>$ must be the eigenvector of $(\lambda_1(\theta)I-H_\theta)$ associated with the $0$ eigenvalue, that is, the eigenvector $z_1(\theta)$ of $H_\theta$ associated with $\lambda_1(\theta)$. 
Thus $H_\theta'|z_1(\theta)\>=\nu_1(\theta)|z_1(\theta)\>$ for some, 
possibly vanishing,  
$\nu_1(\theta)$. Premultiplying 
$H_\theta'|z_1(\theta)\>=\nu_1(\theta)|z_1(\theta)\>$ by $H_\theta$, premultiplying 
$H_\theta|z_1(\theta)\>=\lambda_1(\theta)|z_1(\theta)\>$ by $H'_\theta$ and subtracting the resulting equalities yields
$[H_0,H_1]|z_1(\theta)\>=0$. \\
Conversely, proceeding from $H_0H_1|z_1(\theta)\>=H_1H_0|z_1(\theta)\>$, multiplying both sides of the equality by $\cos^2\theta+\sin^2\theta$, 
and rearranging terms yields $H_\theta H'_\theta |z_1(\theta)\>=H'_\theta H_\theta|z_1(\theta)\>$. 
Recalling that $H_\theta |z_1(\theta)\>=\lambda_1(\theta) |z_1(\theta)\>$,  
the preceding becomes 
$H_\theta H'_\theta |z_1(\theta)\>=\lambda_1(\theta)H'_\theta |z_1(\theta)\>$ and in turn 
$\left(\lambda_1(\theta)I-H_\theta \right)H'_\theta)|z_1(\theta)\>=0$. 
By the definition of the Moore-Penrose pseudo-inverse, this implies 
$\left(\lambda_1(\theta)-H_\theta \right)^\dagger H'_\theta |z_1(\theta)\>=0$ and in turn $\lambda_1(\theta)+\lambda''_1(\theta)=0$. 
\end{proof}
\\
~

Th.~\ref{t:dramatic_commutator} linked $\lambda_1+\lambda_1''=0$ to the the singularity of the commutator $[H_0,H_1]$. In yet another interpretation, $\lambda_1+\lambda_1''=0$ can be reformulated in terms of a common eigenvector of $H_0$ and $H_1$ or alternatively a common eigenvector of $H$ and $H^\dagger$.

\begin{theorem}
\label{t:dramatic_eigenvectors}
$\lambda_1(\theta)+\lambda''_1(\theta)=0$ for some $\theta$ implies that $H_0$ and $H_1$ have a common eigenvector (or alternatively, $H$ and $H^\dagger$ have a common  eigenvector, $H\ket{z}=\mu \ket{z}$, $H^\dagger \ket{z}= \mu^* \ket{z}$). 
Conversely, if $H_0$ and $H_1$ have $|z_1(\theta)\>$ as common eigenvector, then $\lambda_1(\theta)+\lambda''_1(\theta)=0$. 
\end{theorem}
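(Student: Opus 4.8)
The plan is to run essentially the same computation as in Theorem~\ref{t:dramatic_commutator}, but to retain a sharper conclusion than the mere vanishing of $[H_0,H_1]|z_1(\theta)\rangle$. For the forward implication I would start from Lemma~\ref{l:karow}: the hypothesis $\lambda_1(\theta)+\lambda_1''(\theta)=0$ says precisely that the quadratic form $\langle z_1(\theta)|H_\theta'(\lambda_1(\theta)I-H_\theta)^\dagger H_\theta'|z_1(\theta)\rangle$ vanishes. Since $\lambda_1(\theta)$ is the least eigenvalue of $H_\theta$, we have $\lambda_1(\theta)I-H_\theta\leq 0$, hence so is its Moore--Penrose pseudo-inverse, and a negative semidefinite form annihilating a vector forces that vector into the kernel of the form; thus $(\lambda_1(\theta)I-H_\theta)^\dagger H_\theta'|z_1(\theta)\rangle=0$. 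A Hermitian matrix and its pseudo-inverse share the same kernel, so $H_\theta'|z_1(\theta)\rangle\in\ker(\lambda_1(\theta)I-H_\theta)$, which under the standing genericity that $\lambda_1(\theta)$ is a simple eigenvalue is the line $\mathbb{R}|z_1(\theta)\rangle$. Therefore $H_\theta'|z_1(\theta)\rangle=\mu_1(\theta)|z_1(\theta)\rangle$ for some real $\mu_1(\theta)$, i.e.\ $|z_1(\theta)\rangle$ is simultaneously an eigenvector of $H_\theta=H_0\cos\theta+H_1\sin\theta$ and of $H_\theta'=-H_0\sin\theta+H_1\cos\theta$.

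The second step is a one-line rotation argument: since $H_0=H_\theta\cos\theta-H_\theta'\sin\theta$ and $H_1=H_\theta\sin\theta+H_\theta'\cos\theta$, the vector $|z_1(\theta)\rangle$ is automatically an eigenvector of $H_0$ with eigenvalue $\lambda_1(\theta)\cos\theta-\mu_1(\theta)\sin\theta$ and of $H_1$ with eigenvalue $\lambda_1(\theta)\sin\theta+\mu_1(\theta)\cos\theta$, which is the claimed common eigenvector. For the converse I would simply reverse this: if $H_0|z_1(\theta)\rangle=a|z_1(\theta)\rangle$ and $H_1|z_1(\theta)\rangle=b|z_1(\theta)\rangle$ with $|z_1(\theta)\rangle$ the ground eigenvector of $H_\theta$ (so $a\cos\theta+b\sin\theta=\lambda_1(\theta)$), then $H_\theta'|z_1(\theta)\rangle=(b\cos\theta-a\sin\theta)|z_1(\theta)\rangle$, hence $(\lambda_1(\theta)I-H_\theta)H_\theta'|z_1(\theta)\rangle=(b\cos\theta-a\sin\theta)(\lambda_1(\theta)I-H_\theta)|z_1(\theta)\rangle=0$; therefore $H_\theta'|z_1(\theta)\rangle$ lies in $\ker(\lambda_1(\theta)I-H_\theta)$, which the pseudo-inverse kills, and Lemma~\ref{l:karow} immediately gives $\lambda_1(\theta)+\lambda_1''(\theta)=0$. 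This half duplicates the second paragraph of the proof of Theorem~\ref{t:dramatic_commutator} with a stronger hypothesis in place of the commutator condition.

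The one place I would be careful, and the only genuine obstacle, is the chained inference ``vanishing of a negative semidefinite form $\Rightarrow$ kernel membership $\Rightarrow$ collinearity with $|z_1(\theta)\rangle$'': the last arrow uses simplicity of $\lambda_1(\theta)$, the generic situation maintained throughout this section. If $\lambda_1(\theta)$ were degenerate one could only conclude that $H_\theta'|z_1(\theta)\rangle$ lies in the higher-dimensional ground eigenspace, which still forces a common eigenvector after diagonalizing the restriction of $H_\theta'$ to that eigenspace, but the bookkeeping is heavier; I would therefore prove the theorem under the simple-eigenvalue hypothesis and dispatch the degenerate case in a remark. Everything else is routine linear algebra about pseudo-inverses of Hermitian matrices and the rotation relating $(H_\theta,H_\theta')$ to $(H_0,H_1)$.
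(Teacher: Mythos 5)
Your proof is correct and follows essentially the same route as the paper's: Lemma~\ref{l:karow} plus semidefiniteness to force $H_\theta'|z_1(\theta)\rangle$ into the $\lambda_1$-eigenspace, then the rotation relating $(H_\theta,H_\theta')$ back to $(H_0,H_1)$, with the converse obtained by reversing the steps. Your explicit remark that collinearity of $H_\theta'|z_1(\theta)\rangle$ with $|z_1(\theta)\rangle$ requires simplicity of $\lambda_1(\theta)$ is a useful clarification the paper leaves implicit.
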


\begin{proof}
$\lambda_1(\theta)+\lambda''_1(\theta)=0$ for some $\theta$ implies that 
$\<z_1(\theta)|H_\theta'(\lambda_1(\theta) I - H_\theta)^\dagger H_\theta'|z_1(\theta)\>=0$, 
which in turn implies that $H_\theta'|z_1(\theta)\>$ is the eigenvector of $H_\theta$ corresponding to $\lambda_1(\theta)$. 
Thus $|z_1(\theta)\>$ is an eigenvector of $H'_\theta$ and 
$H_\theta'\ket{z_1(\theta)}=\nu_1(\theta) \ket{z_1(\theta)}$ 
for some possibly vanishing $\nu_1(\theta)$.  
$H_\theta$ and $H_\theta'$ sharing the eigenvector $|z_1(\theta)\>$ can be rewritten as 
\begin{eqnarray*}
(H_0 \cos \theta + H_1 \sin \theta ) |z_1(\theta)\> &=& \lambda_1(\theta) |z_1(\theta)\>, \\
(-H_0 \sin \theta + H_1 \cos \theta ) |z_1(\theta) \>&=& \mu_1 |z_1(\theta)\>.
\end{eqnarray*}
Multiplying the first equality by $\cos \theta$, the second by $\sin \theta$, and subtracting the resulting inequalities yields
\[ H_0 |z_1(\theta)\>=(\lambda_1(\theta) \cos \theta -\mu_1 \sin \theta) |z_1(\theta) \>.\]
Likewise, multiplying the first equality by $\sin \theta$, the second by $\cos \theta$, and adding the resulting equalities yields
\[ H_1 |z_1(\theta)\>=(\lambda_1(\theta) \sin \theta + \mu_1 \cos \theta) |z_1(\theta)\>. \]
Clearly, $H_0$ and $H_1$ have a common eigenvector. 
\\
~
Conversely, if for some $\theta$ $|z_1(\theta)\>$ is an eigenvector common to $H_0$ and $H_1$ with eigenvalues 
$a$, $b$, resp., then $|z_1(\theta)\>$ is an eigenvector of $H_\theta$ 
with eigenvalue $a\cos \theta +b \sin \theta=\lambda_1(\theta)$. 
But it is also an eigenvector of $H_\theta'$ with eigenvalue
$-a \sin \theta + b \cos \theta$. 
Hence by Lemma~\ref{l:karow} $\lambda_1(\theta)+\lambda_1''(\theta)=0$. 
\end{proof}

The most striking consequence of the preceding material is the following:

\begin{corollary}\label{c:back_to_Gutkin}
A common eigenvector of $H$ and $H^\dagger$ implies that $[H_0,H_1]$ is singular with a sharp point in the boundary.
\end{corollary} 
\begin{proof}
The first statement is trivial. The second one is derived from~\cite{JonckheereAhmadGutkin}.
\end{proof}

\noindent{\bf Example:} As an extreme case of the preceding, consider $H=\diag\{\mu_\ell\}=\diag\{a_\ell+\imath b_\ell\}$, which is associated with the case $H_0=\diag\{a_\ell\}$ and $H_1=\diag\{b_\ell\}$, 
for which $[H_0,H_1]=0$. 
From the classical point of view of the field of values, 
it is well known~\cite{JonckheereAhmadGutkin} that $\mathcal{F}(H)$ is the convex hull of the $\mu_\ell$'s in the complex plane. 
Let $\{\mu_{\ell_m}\}_{m=1}^M$ be the subset of $\{\mu_\ell\}$ making the boundary of the convex hull. Let the $\mu_{\ell_m}$'s be ordered trigonometrically, in the sense that circulating along the boundary from $\mu_{\ell_m}$ to $\mu_{\ell_{m+1}}$ results in a trigonometric orientation. 
Except for a finite set of angles $\theta_{\ell_m}:=\tan^{-1}\frac{a_{\ell_{m+1}}-a_{\ell_m}}{b_{\ell_{m+1}}-b_{\ell_m}}$, 
the lines orthogonal to $\theta$ 
passing through $\lambda_{1,N}(\theta)e^{\imath \theta}$ 
are making contact with $\partial \mathcal{F}$ at the vertices of the convex hull, 
also referred to as \textit{\textbf{sharp points}}~\cite{JonckheereAhmadGutkin}, defined as points where the tangent to the curve is not continuous. 
At $\theta_{\ell_m}$, the projection line contains the whole $[\mu_{\ell_m},\mu_{\ell_{m+1}}]$ segment.  
Clearly, because there is no one-to-one correspondence between $\theta$ and the boundary, it cannot be parameterized by $\theta$. Nevertheless, 
since $\lambda_1(\theta)=a_{\ell_m} \cos \theta + b_{\ell_m} \sin \theta$ for  
$\theta\in (\theta_{\ell_{m-1}},\theta_{\ell_m})$, it follows that 
$\lambda_1(\theta)+\lambda_1''(\theta) =0$, 
 with the $\theta$-restriction to prevent the projection lines to make other contact with the field of values than the $\theta_{\ell_m}$ sharp point.  
The latter may be interpreted as {\it the boundary having infinite curvature at its sharp points.} 
$\square$

\subsection{Ground versus first excitation curves at awallow tail}
\label{s:swallow_tails}

We analyze the generic senario where the ``gap" occurs between the boundary curve and a nearby swallow tail of the first excitation curve. Quantitatively, we look at the situation where the dominant term in the right-hand side of Eq.~\ref{e:frommine} is  $|q_{k,\ell}|^2/(\lambda_k-\lambda_\ell)$ for $(k,\ell)=(1,2)$ and $(k,\ell)=(2,1)$, that is,
\begin{subequations}
\begin{align}
\lambda_1(\theta)+\lambda_1''(\theta)\approx 
            \frac{|q_{1,2}(\theta)|^2}{\lambda_1(\theta)-\lambda_2(\theta)}\label{e:1nominal},\\
\lambda_2(\theta)+\lambda_2''(\theta)\approx 
           \frac{|q_{2,1}(\theta)|^2}{\lambda_2(\theta)-\lambda_1(\theta)}\label{e:2nominal},
\end{align} 
\end{subequations}
respectively, for a range of $\theta$'s  that will be delineated soon. From Eq.~\ref{e:theQs}, it follows that $q_{1,2}=q^\dagger_{2,1}$, hence $|q_{1,2}|^2=|q_{2,1}|^2$, 
and the curvature of $\gamma_1$ and $\gamma_2$ have the same amplitude, but of opposite signs.  
Since $\gamma_1$ is positively curve, $\gamma_2$ is negatively curved. 

The next step is to identify a nominal $\theta$ around which the gap develops. This is a corollary of the following:
\begin{proposition}
Under the asymptotic conditions~\eqref{e:1nominal}-\eqref{e:2nominal}, 
the $\gamma_1$, $\gamma_2$ curves have a common normal at an angle
\[\theta_\perp= \arg \min_\theta d(\gamma_1(\theta),\gamma_2(\theta)).\]
Moreover, there exists a neighborhood $(\theta_-,\theta_+)\ni \theta_\perp$ such that 
\begin{equation}\label{e:cusps_exist}
\lambda_2(\theta_-)+\lambda_2''(\theta_-)=0, \quad \lambda_2(\theta_+)+\lambda_2''(\theta_+)=0.
\end{equation}
Finally,
\begin{equation}\label{e:argmin} 
\theta_\perp = \arg \min_{\theta \in(\theta_-,\theta_+)} |\lambda_1(\theta)-\lambda_2(\theta)|.
\end{equation}
\end{proposition}
\begin{proof}
Because of the curvature conditions on $\gamma_1$ and $\gamma_2$, 
$\min d(\gamma_1(\theta),\gamma_2(\theta))$ exists and defines the angle $\theta_\perp$.  
Moreover, 
the first order conditions require the vector $\overrightarrow{\gamma_1(\theta_\perp)\gamma_2(\theta_\perp)}$ to abut  
$\gamma_1$ and $\gamma_2$ perpendicularly. 
The second order conditions are consequential to the common normal property 
that implies concavity of $d(\gamma_1(\theta_\perp,\gamma_2(\theta))$  
and $d(\gamma_1(\theta),\gamma_2(\theta_\perp)$ viewed as functions of $\theta$. 
The singularities~\eqref{e:cusps_exist} on $\gamma_2$ are necessary to prevent $\gamma_2 \subset \mathcal{F}(H)$ 
to escape the field of values. 
To prove~\eqref{e:argmin}, 
let us draw the osculating circles $O\gamma_1, O\gamma_2$ of $\gamma_1, \gamma_2$  at  
$\gamma_1(\theta_\perp), \gamma_2(\theta_\perp)$, resp.~\cite[pg. 65]{oneill}.  
In this configuration, $\overrightarrow{\gamma_1(\theta_\perp)\gamma_2(\theta_\perp)}$ is aligned with the line joining the centers $C_1, C_2$ of the osculating circles $O\gamma_1, O\gamma_2$ 
at $\gamma_1(\theta_\perp), \gamma_2(\theta_\perp)$, resp.  
The projection lines of $\overrightarrow{\gamma_1(\theta_\perp)\gamma_2(\theta_\perp)}$ 
on $\lambda_1(\theta_\perp),\lambda_2(\theta_\perp)$ are tangent 
to $O\gamma_1$, $O\gamma_2$ at $\gamma_1(\theta_\perp), \gamma_2(\theta_\perp)$, resp.  
Consider a $d\theta$ rotation of the projection lines tangent to $O\gamma_1, O\gamma_2$ 
at $\gamma_1(\theta_\perp +d\theta), \gamma_2(\theta_\perp +d\theta)$, resp. 
The perturbed gap $\lambda_2(\theta_\perp +d\theta)-\lambda_1(\theta_\perp +d\theta)$ is measure as the distance between the rotated projection lines along their common normal 
passing through $\gamma_1(\theta_\perp +d\theta)$, $O_1$, and intercepting the tangent to $O\gamma_2$ at $x$. 
The crucial point is that $x$ lies outside the disk bounded by $O\gamma_1$. 
If $O_1$ denotes the center of $O\gamma_1$, we have 
\begin{align}
\lambda_2(\theta_\perp +d\theta)-\lambda_1(\theta_\perp +d\theta)
&=\|\overrightarrow{\gamma_1(\theta_\perp+d\theta)O_1}\|+\|\overrightarrow{O_1x}\|
=\|\overrightarrow{\gamma_1(\theta_\perp)O_1}\|+\|\overrightarrow{O_1x}\|\\
&>\|\overrightarrow{\gamma_1(\theta_\perp)\gamma_2(\theta_\perp)}\|
=\lambda_2(\theta_\perp)-\lambda_1(\theta_\perp),
\end{align}
and the result is proved. 
\end{proof}

Eqs.~\ref{e:cusps_exist} reveal existence of singularities at the distal ends of the curved edge 
$\gamma_2([\theta_-,\theta_+])$. It remains to classify those singularities.

\begin{definition}
	\label{d:cusp}
	A \textbf{3/2-cusp point} on a parameterized curve $\theta \mapsto (x(\theta),y(\theta))$ is a singular point $\theta^o$, 
	that is, a point characterized by $x'(\theta^o)=y'(\theta^o)=0$,  
	such that locally around $(x(\theta^o),y(\theta^o))$ there is a diffeomorphic change of coordinates $(x,y) \leftrightarrow (u,v)$  
	such that in the new coordinates the curve takes the canonic form $u^3=v^2$.  
\end{definition}
It is easily seen that a 3/2-cusp has 2 branches tangent to each other at the singular point 
with the two branches \emph{on either side} of the common tangent. 
There is also a 5/2-cusp or \emph{ramphoid} with canonical form $u^5=v^2$, with the difference that the two branches are \emph{on the same side} of the common tangent. 

\begin{theorem}
The generic singularities where $\lambda_k(\theta^0)+\lambda_k''(\theta^0)=0$ are 3/2-cusps. 
\end{theorem}

\begin{proof}
Following Arnold~\cite{Arnold1993}, there are two stable singularities: the cusp or 3/2 singularity and the  ramphoid or 5/2 singularity. We first show that the latter cannot happen. 
Indeed, if the singularity were a ramphoid, 
under the motion $\theta \uparrow \theta^0$ of the point of contact of the tangent to 
the singularity at $\theta^0$, it would follow that the tangent could not be prolonged for $\theta > \theta^0$, which contradicts the construction that the tangent exists for all $\theta$'s. 
Having ruled out the ramphoid, we now show that the only generic singularity is the 3/2-cusp. 
By the construction of the $k$-curve, its normal is $\theta$. 
Assume, contrary to the cusp hypothesis, that 
the normal vector has a discontinuity jumping from $\theta^-$ to $\theta^+$ at the singularity.  
For $\theta \in [\theta^-, \theta^+]$, the projection line normal to the direction $\theta$ erected from $\lambda_k(\theta)\exp(\imath \theta)$ passes through the same point, $(x(\theta^0),y(\theta^0))$, 
a degenerate subset of the envelope. This implies that $\lambda_k(\theta)$ has pure sinusoidal behavior for $\theta \in [\theta^-, \theta^+]$. 
We show that under such circumstances, 
the sinusoidal behavior \emph{persists $\forall \theta \in [0,2\pi)$} 
and the singularity reduces to a point, a nongeneric case. 
To develop an analyticity argument, the $\theta$-parameterization is replaced by the 
$e^{\imath \theta}$ parameterization, extendable to a neighborhood of the unit circle $z\in \mathbb{T}$. 
Restricting the argument of $z$ to $\theta \in [\theta^-,\theta^+]$, 
with $\cos\theta=(z+z^{-1})/2$ and $\sin\theta= (z-z^{-1})/2\imath$, we have the eigendecomposition
\begin{equation}\label{e:eigendecomp}
H_0\frac{z+z^{-1}}{2} + H_1\frac{z-z^{-1}}{2\imath}=U_z
\begin{pmatrix}
\lambda_k(z) & 0\\
0 & \Lambda_{\bar{k}}(z)
\end{pmatrix}U_z^{-1},
\end{equation}
where $\lambda_k(e^{\imath \theta})=a \cos\theta+b\sin \theta$ and $U$ is the matrix of eigenvectors, 
orthogonal but not normalized to preserve analyticity of all functions. 
Classically, this decomposition is constructed  
by identifying a nonsingular principal $(N-1)\times (N-1)$ submatrix. 
But by Corollary~\eqref{c:the_theorem}, the same $(N-1)\times (N-1)$ principal submatrix remains nonsingular  $\forall z=e^{\imath \theta}$.  
Therefore, Eq.~\eqref{e:eigendecomp} can be extended $\forall z \in \mathbb{T}$, 
and the sinusoidal behavior of $\lambda_1(z)$ persists along $\mathbb{S}^1=\mathbb{T}$; 
in other words, the tangent bundle $T\gamma_1 \to \mathbb{S}^1$ is trivial~\cite{Dolezal_vector_bundle},  
and the critical curve is reduced to a point, a nongeneric situation. 
\end{proof}

\begin{corollary}
	The cusps come in pairs, forming swallow tails.
\end{corollary}
\begin{proof}
Since $\lambda_k(\theta)+\lambda''(\theta)$ is periodic and, if it changes sign at a singular point, 
say $\theta^0_1$, it has to revert to the original sign at the next singular point $\theta_{2}^0 > \theta_1^0$. 
Hence the cusps appear in pairs. 
\end{proof}

\begin{remark}
Cusps have been found to be typical singularities of envelopes~\cite{envelope_and_cusp}. 
However, here, we have very peculiar $\lambda_{k\geq 2}$-envelopes that deserve special attention. 
Regarding the $\lambda_1$-envelope, it is the boundary of $\mathcal{F}(H)$, 
here defined as the envelope of a family of lines. In~\cite{Applications_of_envelopes}, $\mathcal{F}(H)$  
is shown to be the union of a family of circles, 
so that $\partial \mathcal{F}(H)$ is the envelope of a family of \emph{circles,} rather than \emph{lines}. 
\end{remark}

\subsection{Exact crossing}

On the eigenenergy plots, exact crossing is stated as $\lambda_2(\theta_\times)-\lambda_1(\theta_\times)=0$; 
on the critical value curves, it means that the $\gamma_2$ first excited curve 
tangentially contacts the boundary curve $\gamma_1$ in a nongeneric phenomenon.  
(This phenomenon can be reinterpreted in terms of Arnold's $J^+$-theory~\cite{Arnold1994}.)  
Fig.~\ref{f:barrier_low_Hamming}, top left-hand panel, illustrates such situation. 
Fig.~\ref{f:barrier_high_Hamming} shows that a change of parameters creates a swallow tail 
(at $\theta=\pi/2$) in the first excited curve and disconnects it from the boundary. 

\subsection{Classification using  
the $\mathrm{roots}(\lambda_2+\lambda_2'')$ invariant}\label{s:new_invariant}

The classification of all adiabatic paths from the conventional point of view of plotting the energy levels is driven by the inflection points in the $\lambda_{1}(\theta)$, $\lambda_2(\theta)$ plots. 
The problem is that a classification based on existence of solutions to $\lambda_{1}''(\theta)=0$ and $\lambda_2''(\theta)=0$ is inadequate because such equations \emph{always} have  roots 
as a consequence of the following theorem:
\begin{theorem}[Tabachnikov~\cite{Arnold1994}~\cite{Tabachnikov_original}]
Let $f(\theta)=\sum_{k=k_{\mathrm{min}}}^\infty (a_k \cos k \theta + b_k \sin k \theta)$ be a $2 \pi$ periodic function with its 
Fourier series starting at $k_{\mathrm{min}}$. Then $f(\theta)=0$ has at least $2k_{\mathrm{min}}$ roots in $[0,2\pi)$. $\blacksquare$
\end{theorem}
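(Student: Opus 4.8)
The plan is to prove this via a counting argument on the Fourier expansion combined with the classical theory of Sturm on trigonometric polynomials, extended to the full Fourier series by an approximation/degree argument.

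First I would reduce to the trigonometric-polynomial case. Write $f(\theta)=\sum_{k=k_{\min}}^{K}(a_k\cos k\theta+b_k\sin k\theta)$ assuming first that the series terminates, with $K$ the highest harmonic actually present and with $(a_{k_{\min}},b_{k_{\min}})\neq(0,0)$. Introducing $w=e^{\imath\theta}$ on the unit circle, one has $f(\theta)=w^{-K}P(w)$ where $P$ is an ordinary polynomial in $w$ of degree $2K$ that vanishes to order exactly $k_{\min}$ at $w=0$ (its lowest-degree term comes from $a_{k_{\min}},b_{k_{\min}}$) and whose coefficients satisfy the Hermitian-symmetry relation $\overline{P(1/\bar w)}\,\bar w^{2K}=P(w)$ coming from the reality of $f$. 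Hence $P$ has $2K$ roots counted with multiplicity, of which $k_{\min}$ sit at the origin; the remaining $2K-k_{\min}$ nonzero roots come in pairs $\{r,1/\bar r\}$ symmetric about the unit circle, so the number of them lying strictly off the circle is even. Consequently the number of roots of $P$ on the unit circle itself, counted with multiplicity, is congruent to $2K-k_{\min}\equiv -k_{\min}\pmod 2$; I would massage this to show it is at least... — actually the parity alone is not enough, so this is where the real work sits.

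The key step, and the main obstacle, is promoting a mod-$2$ parity count of circle-roots into the sharp lower bound $2k_{\min}$. Here I would invoke the argument principle / winding-number idea that Arnold attributes to this circle of results: the image curve $\theta\mapsto f(\theta)$ (or rather the associated curve on the circle in the $P$-picture) must wind, and the vanishing-order $k_{\min}$ at the two "ends" forces the real function $f$ to change sign enough times. Concretely, the cleanest route is: the support of the Fourier series starting at $k_{\min}$ means $f$ is $L^2$-orthogonal to $1,\cos\theta,\sin\theta,\dots,\cos(k_{\min}-1)\theta,\sin(k_{\min}-1)\theta$, a $(2k_{\min})$-dimensional space; a function orthogonal to the span of the first $2m$ trigonometric monomials and not identically zero must have at least $2m$ sign changes on $[0,2\pi)$ — this is the Hurwitz/Sturm oscillation theorem. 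I would cite this as the engine: if $f$ had fewer than $2k_{\min}$ sign changes, one could build a nonzero trigonometric polynomial $g$ of degree $<k_{\min}$ with the same sign as $f$ everywhere (put its zeros at the sign-change points), whence $\int_0^{2\pi} f g\,d\theta\neq 0$, contradicting orthogonality since every harmonic of $g$ is killed by integration against $f$. Each sign change is a root, giving the claimed $2k_{\min}$ roots.

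Finally I would dispose of the infinite-series case. If the Fourier series does not terminate, truncate at harmonic $K$ to get $f_K$; $f_K\to f$ uniformly (assuming enough regularity for the Fourier series to make sense pointwise, e.g. $f$ continuous of bounded variation, which is the ambient setting since our $f=\lambda_2+\lambda_2''$ is smooth). The orthogonality of $f$ to $1,\cos\theta,\dots,\sin(k_{\min}-1)\theta$ is a property of $f$ itself, not of any truncation, so the oscillation argument of the previous paragraph applies verbatim to $f$ — no limiting argument is even needed, one simply runs the Hurwitz/Sturm comparison directly on $f$. (The truncation remark is therefore only a sanity check, not a logical necessity.) This completes the proof; the one place requiring genuine care is the construction of the comparison polynomial $g$ when some sign changes are tangential (even-order zeros of $f$), which I would handle by the standard perturbation trick of replacing $f$ by $f+\varepsilon\cos k_{\min}\theta$ (still orthogonal to the low harmonics) and letting $\varepsilon\to 0$, using that roots are counted with multiplicity in the statement.
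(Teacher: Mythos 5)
The paper does not actually prove this theorem: it is imported as a cited result (Arnold, Tabachnikov), with a footnote crediting the underlying oscillation theorem to Sturm and Hurwitz. So there is no in-paper proof to compare against, and your proposal should be judged on its own. On that basis it is essentially correct, and it is exactly the classical Hurwitz argument: the hypothesis that the Fourier series starts at $k_{\mathrm{min}}$ says precisely that $f$ is $L^2$-orthogonal to every trigonometric polynomial of degree at most $k_{\mathrm{min}}-1$; the number of sign changes of a continuous periodic function is even; and if $f$ had only $2m\le 2k_{\mathrm{min}}-2$ sign changes at $\theta_1<\dots<\theta_{2m}$, the comparison polynomial $g(\theta)=\prod_{j=1}^{2m}\sin\tfrac{\theta-\theta_j}{2}$ has degree $m\le k_{\mathrm{min}}-1$ and makes $fg$ single-signed, so $\int_0^{2\pi}fg\,d\theta\ne 0$, contradicting orthogonality. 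Three small corrections. First, the space of trigonometric polynomials of degree $<k_{\mathrm{min}}$ has dimension $2k_{\mathrm{min}}-1$, not $2k_{\mathrm{min}}$; this is harmless because the argument uses the degree bound on $g$, not the dimension. Second, your opening reduction via $w=e^{\imath\theta}$ and the polynomial $P(w)$ delivers only a parity statement, as you yourself note, and is then abandoned; it can be deleted. Third, the closing perturbation by $\varepsilon\cos(k_{\mathrm{min}}\theta)$ is unnecessary and slightly risky (you would have to track roots in the limit $\varepsilon\to 0$): the orthogonality argument already produces $2k_{\mathrm{min}}$ genuine \emph{sign changes}, each of which is a distinct root, so no multiplicity bookkeeping is required. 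The only hypothesis you should state explicitly is continuity of $f$ (so that sign changes are roots and the integral argument applies); in the paper's application $f=\lambda_k+\lambda_k''$ is continuous, so this is satisfied.
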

Clearly, the Fourier series of $\lambda_1''(\theta)$, $\lambda_2''(\theta)$ have no constant terms  
and therefore each function has at least two roots.   
While $\lambda_{1}(\theta)$, $\lambda_2(\theta)$ always have inflection points, 
they need not appear in pairs of closely spaced roots, nor is it guaranteed that the inflection points of $\lambda_1(\theta)$ and $\lambda_2(\theta)$ are aligned to create a steep gap. 
Clearly, these roots do not allow us to distinguish the various mild to steep cases. 

From the novel point of view of Section~\ref{s:theta_parameterization}, 
the classification is driven by cusp points and swallow tails. 
The latter are identified by pairs of roots of  
$\lambda_2(\theta)+\lambda_2''(\theta)=0$. 
Since in general the Fourier expansion of $\lambda_2$ has a constant term, 
Tabachnikov's theorem is of no help, and therefore 
$\lambda_2(\theta)+\lambda_2''(\theta)=0$ may or may not have roots, making 
$\lambda_2(\theta)+\lambda_2''(\theta)$ a stronger invariant than $\lambda_2''$.

As already proved in Sec.~\ref{s:smoothness_boundary}, $\lambda_1(\theta)+\lambda_1''(\theta)$ 
generically has no roots. 
Therefore, $\lambda_2(\theta)+\lambda_2''(\theta)$ is the only function left that may or may not 
  have roots (as illustrated in Section~\ref{s:constant_to_steep}). 
The existence and the number of such roots is a topological invariant.

We now strengthen the role of $\lambda_2+\lambda_2''$ in locating the inflection points of the  classical $\lambda$-energy level plots.  
We first consider the $\lambda_2$-case.

\begin{theorem}\label{t:lambda2}
Consider a swallow tail embedded in the numerical range 
together with the arc joining its two cups located at consecutive roots of $\lambda_2(\theta)+\lambda_2''(\theta)=0$. 
Assume that, along that arc, $\lambda_2>0$ and  
that the maximum of $\lambda_2$ is visited. 
Then along the arc joining the two consecutive cusps of the swallow tail, 
there are two $\lambda_2''=0$ inflections points. 
\end{theorem}

\begin{proof}
Since $\lambda_2+\lambda_2''=0$ at the cusps and since $\lambda_2>0$ 
along the arc joining the cusps, it follows that at the cusps $\lambda_2''<0$. 
But in a neighborhood of the local $\lambda_2$-minimum, $\lambda_2''>0$. 
Therefore, along the arc joining the cusps of the swallow tail, $\lambda_2''$ must cancel at at least 2 points, that is, 2 inflections points. 
\end{proof}

We now look at the $\lambda_1$ singular representation, 
that is, the boundary of the numerical range. 
Instead of relying on consecutive cusps points of \emph{infinite} curvature, 
here we rely on consecutive boundary ``vertices" of \emph{extremal} curvature.
A point of extremal curvature is referred to as \textit{\textbf{vertex}}.
Vertices are guaranteed to exist by the 4-vertex theorem:  
\begin{theorem}[4-vertex theorem~\cite{4_vertex_revisited}]\label{t:4-vertices}
	Any smooth, simple, closed curve in the plane has at least 4 vertices. $\blacksquare$
\end{theorem}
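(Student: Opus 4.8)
\emph{Proof proposal.} The plan is to work entirely with the curvature function of the curve, viewed as a periodic function, and to extract the count of vertices from a single integral identity. Parametrize a representative smooth closed curve $\gamma$ by arc length $s\in\mathbb{R}/L\mathbb{Z}$, write $\gamma(s)=(x(s),y(s))$, let $T=\gamma'$, $N=J T$ with $J$ the rotation by $\pi/2$, and recall the Frenet relation $\gamma''=\kappa N$, i.e.\ $x''=-\kappa y'$, $y''=\kappa x'$. A \emph{vertex} is a critical point of the $L$-periodic function $\kappa$. First I would establish the key lemma: for every affine function $\ell(p)=\langle a,p\rangle+c$ on the plane,
\begin{equation}
\label{e:4v-identity}
\oint \kappa'(s)\,\ell(\gamma(s))\,ds=0 .
\end{equation}
This follows by one integration by parts (the boundary term vanishes by periodicity of $\kappa$ and $\ell\circ\gamma$), which turns the integral into $-\oint \kappa\,\langle a,\gamma'\rangle\,ds$, and then the Frenet relations identify $\langle a,\kappa\gamma'\rangle=\tfrac{d}{ds}(a_1 y'-a_2 x')$, whose integral over the closed curve is again zero.

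Next I would run the contradiction argument, first for a convex curve, which is the illustrative core. A non-constant periodic $\kappa$ has an even number $\ge 2$ of critical points, the maxima and minima alternating along the curve; suppose there were exactly two, a maximum at $P=\gamma(s_1)$ and a minimum at $Q=\gamma(s_2)$, with $\kappa$ strictly monotone on each of the two open arcs $P\widearc Q$. Choose $\ell$ to vanish precisely on the line through $P$ and $Q$. Since the curve is convex, that chord meets $\gamma$ only at $P$ and $Q$, so $\ell$ has one sign on one arc and the opposite sign on the other, while $\kappa'$ likewise has a fixed sign on each arc and changes sign exactly at $P$ and $Q$. Replacing $\ell$ by $-\ell$ if necessary, the integrand $\kappa'\,(\ell\circ\gamma)$ is then $\ge 0$ everywhere and positive on a set of positive measure, contradicting \eqref{e:4v-identity}. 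Hence $\kappa$ is either constant — and then the curve is a circle, with every point a vertex, so certainly at least four — or it has at least four critical points. For the curves $\Gamma_k:\theta\mapsto\lambda_k(\theta)e^{\imath\theta}$ of interest here, note that whenever $\lambda_k$ does not change sign (after adding a constant to the Hamiltonian if needed) $\Gamma_k$ is a graph in polar coordinates, hence an embedded convex-position curve is not automatic but $\Gamma_k$ is at least simple, which is what the general theorem requires.

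The main obstacle is the passage from convex to general (non-convex, and in our setting possibly non-simple) closed curves: the chord $PQ$ may then cross $\gamma$ at additional points, so $\ell\circ\gamma$ changes sign more than twice and the sign bookkeeping above collapses. To handle this I would follow the standard remedy: replace the line $PQ$ by a suitably chosen circle — most cleanly, the circumscribed (smallest enclosing) circle, which is tangent to $\gamma$ from outside at two or more points — and apply the analogue of \eqref{e:4v-identity} with $\ell$ replaced by the signed squared-distance-to-center function, together with a refined count of the sign changes of that function along $\gamma$; this is exactly the content worked out in \cite{4_vertex_revisited}, whose statement I invoke for the fully general case. A secondary point to check is the behaviour at points where $\gamma$ is only $\mathcal{C}^2$ rather than $\mathcal{C}^\infty$, so that "vertex'' is taken in the sense of local extrema of a merely continuous $\kappa$; the identity \eqref{e:4v-identity} and the alternation argument survive verbatim under this regularity, since they use only $\kappa\in\mathcal{C}^1$ and $\gamma\in\mathcal{C}^2$, which matches the standing smoothness assumptions of Section~\ref{s:energy_levels}. $\blacksquare$
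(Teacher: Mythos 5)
The paper offers no proof of this statement at all: it is quoted as a classical result with a citation to Tabachnikov's survey, so there is nothing internal to compare your argument against. What you have written is the standard Herglotz-style proof, and the part you carry out in full --- the integral identity $\oint \kappa'(s)\,\ell(\gamma(s))\,ds=0$ for affine $\ell$, obtained by one integration by parts plus the Frenet relations, followed by the sign/alternation contradiction when $\kappa$ has only two critical points on a convex curve --- is correct. Two remarks. First, for the general (non-convex) simple closed curve you ultimately ``invoke the statement of \cite{4_vertex_revisited}'', so your proof of the theorem as stated is really only self-contained in the convex case; since the paper itself does no better, this is acceptable, but you should present it as such rather than as a complete proof. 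Second, and more substantively: the theorem as stated in the paper (``any smooth curve in the plane'') is false without an embeddedness hypothesis --- a non-simple closed curve such as a limac\,on-type polar curve whose radius changes sign can have exactly two vertices. Your parenthetical observation that $\Gamma_k:\theta\mapsto\lambda_k(\theta)e^{\imath\theta}$ is a polar graph, hence simple, precisely when $\lambda_k$ (after an energy shift) does not change sign is therefore not a side remark but the hypothesis that makes the theorem applicable to the curves of interest; it deserves to be stated as such, since the critical value curves of the excited levels in this paper do acquire self-intersections (swallow tails), for which the four-vertex conclusion is not automatic.
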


\begin{theorem}\label{t:lambda1}
Under the same conditions as Theorem~\ref{t:lambda2}, consider the boundary arc 
between two consecutive ``vertices" of maximum curvature characterized by $\lambda_1+\lambda_1''\approx 0$. 
Then one distinguishes the following two cases:
\begin{enumerate}
\item If $\lambda_1>0$, inflection points near those of $\lambda_2$ cannot be guaranteed, 
as shown by Fig.~\ref{f:gaps}(a).
\item If $\lambda_1<0$, then nearby $\lambda_1$ inflection points are guaranteed, 
as shown by Fig.~\ref{f:gaps}(b).
\end{enumerate}
\end{theorem}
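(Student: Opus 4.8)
The plan is to recast the statement entirely in terms of the sign of $\lambda_1''$ along the relevant $\theta$-window, exactly as the proof of Theorem~\ref{t:lambda2} did for $\lambda_2''$, and then to feed in the two facts already available: $\lambda_1(\theta)+\lambda_1''(\theta)\le 0$ for all $\theta$ (Section~\ref{s:swallow_tails}, from Lemma~\ref{l:karow} together with $\lambda_1(\theta)I-H_\theta\le 0$), and the curvature reading $\kappa_1=\pm 1/(\lambda_1+\lambda_1'')$, so that at a maximum-curvature vertex of $\partial\mathcal{F}(H)$ one has $\lambda_1+\lambda_1''$ as close to $0$ as the unfolding geometry allows, hence $\lambda_1''\approx -\lambda_1$ there. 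Throughout I keep the standing anticrossing configuration of Theorem~\ref{t:lambda2} in force (Fig.~\ref{f:gaps}): on the arc $\lambda_2$ passes through a local minimum and, dually, $\lambda_1$ through a local maximum, so $\lambda_2''>0$ and $\lambda_1''<0$ at the respective (nearby) critical abscissae, and the arc between the two maximum-curvature vertices is the short $\theta$-window that also carries the two cusps of the $\lambda_2$ swallow tail, hence by Theorem~\ref{t:lambda2} the two $\lambda_2$-inflection points.

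For case~(1), $\lambda_1>0$ on the arc: the inequality $\lambda_1+\lambda_1''\le 0$ gives at once $\lambda_1''\le -\lambda_1<0$ at every point of the arc --- at the two bounding vertices (consistent with $\lambda_1''\approx -\lambda_1<0$) and across the anticrossing. Thus $\lambda_1''$ keeps constant sign on the arc, so it has no root there and $\lambda_1$ has no inflection point on the arc; a fortiori none can be forced to lie near the $\lambda_2$-inflection points. This is precisely the steep, non-supersteep gap, matching the benchmark of Fig.~\ref{f:gaps}.

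For case~(2), $\lambda_1<0$ on the arc: now $\lambda_1+\lambda_1''\le 0$ no longer pins the sign of $\lambda_1''$, so I would evaluate $\lambda_1''$ at three abscissae. At each of the two bounding maximum-curvature vertices, $\lambda_1+\lambda_1''\approx 0$ forces $\lambda_1''\approx -\lambda_1>0$. Somewhere in the interior --- at the local maximum of $\lambda_1$, equivalently at the minimum-curvature vertex guaranteed between the two maxima by the 4-vertex theorem, which in the near-degenerate configuration of Section~\ref{s:deformation} is the almost-flat bulged face where $\lambda_1+\lambda_1''$ is large and negative --- one has $\lambda_1''<0$, even though $\lambda_1<0$, since $\lambda_1''=(\lambda_1+\lambda_1'')-\lambda_1$ is dominated by the first, very negative, term. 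Hence $\lambda_1''$ changes sign at least twice on the arc between the two vertices, i.e. it has at least two roots there, two inflection points of $\lambda_1$; being confined to that short $\theta$-window they are ``nearby'' the $\lambda_2$-inflection points in the sense claimed. This is the supersteep gap.

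The step I expect to be the main obstacle is the one that is special to case~(2): justifying $\lambda_1''<0$ at some abscissa strictly between the two maximum-curvature vertices when $\lambda_1<0$, which does not follow from the convexity bound alone. I would lean on the standing anticrossing geometry --- either the local-maximum reading of $\lambda_1$ from Fig.~\ref{f:gaps}, or, more robustly, the unfolding identity of Section~\ref{s:deformation}, which in the regime where the anticrossing is governed by a two-dimensional block makes $\lambda_1+\lambda_1''$ and $\lambda_2+\lambda_2''$ approximate negatives of each other, so that the steepness $\lambda_2+\lambda_2''\gg 0$ at the anticrossing forces $\lambda_1+\lambda_1''\ll 0$ and hence $\lambda_1''<0$. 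A lesser subtlety is the ``$\approx$'' in $\lambda_1+\lambda_1''\approx 0$ at a maximum-curvature vertex: it is an equality only in the degenerate limit, so $\lambda_1''\approx -\lambda_1$ at the vertex carries an $O(\epsilon)$ error in the unfolding parameter $\epsilon$, which is harmless because in case~(2) $\lambda_1$ stays bounded away from $0$.
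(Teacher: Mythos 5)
Your proof is correct and follows essentially the same route as the paper: evaluate the sign of $\lambda_1''$ at the maximum-curvature vertices via $\lambda_1+\lambda_1''\approx 0$ and at the interior local maximum of $\lambda_1$, then count sign changes --- the paper treats case (1) exactly this way and disposes of case (2) as ``the same pattern as Theorem~\ref{t:lambda2} after reversing the signs,'' which is precisely your case-(2) argument. Your case (1) is in fact slightly sharper than the paper's, since the global convexity bound $\lambda_1+\lambda_1''\le 0$ excludes any inflection point on the arc when $\lambda_1>0$, rather than merely showing that a sign change cannot be guaranteed.
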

\begin{proof}
Regarding the $\lambda_1>0$ case, at the two vertices,  
the curvature is maximum so that $\lambda_1+\lambda_1''\approx 0$;  
therefore, $\lambda_1''<0$ in a neighborhood of the vertices. 
But at the maximum of $\lambda_1$, $\lambda_1''<0$. Therefore, contrary to the $\lambda_2$-case, 
a change of sign in $\lambda_1''$ cannot be guaranteed. Regarding the $\lambda_1<0$ case, the proof follows the same pattern as that of Theorem~\ref{t:lambda2} (after reversing the signs) and is omitted. 
\end{proof}

\begin{remark}
Arnold gives credit to Tabachnikov for this result~\cite{Arnold1994}, 
although its trigonometric polynomial version was know to Sturm, and the general result had already been proved by Hurwitz, 
before being rediscovered by Kellogg and Tabachnikov~\cite{Arnold_on_Tabachnikov}.
\end{remark}

\begin{remark}
A detailed account of the 4-vertex theorem is available in~\cite{fronts_of_Legendrian_links}. 
The same reference also introduces a 4-cusp theorem for the equidistant curve. 
\end{remark}

\subsection{Stability of critical value curves}\label{s:stability}
\label{s:deformation}

\begin{figure}[t]
	\centering
	\mbox{
		\subfigure[Nongeneric $4\times 4$ $H_0+\imath H_1$  field of values consisting of the convex hull of the field of values of a generic $2\times 2$ matrix  and $\{\mu_1,\mu_2\}$ outside the elliptical field. 
The case of the Grover search is similar, but more singular as $\mu_1=\mu_2$.]{\scalebox{0.5}{\includegraphics{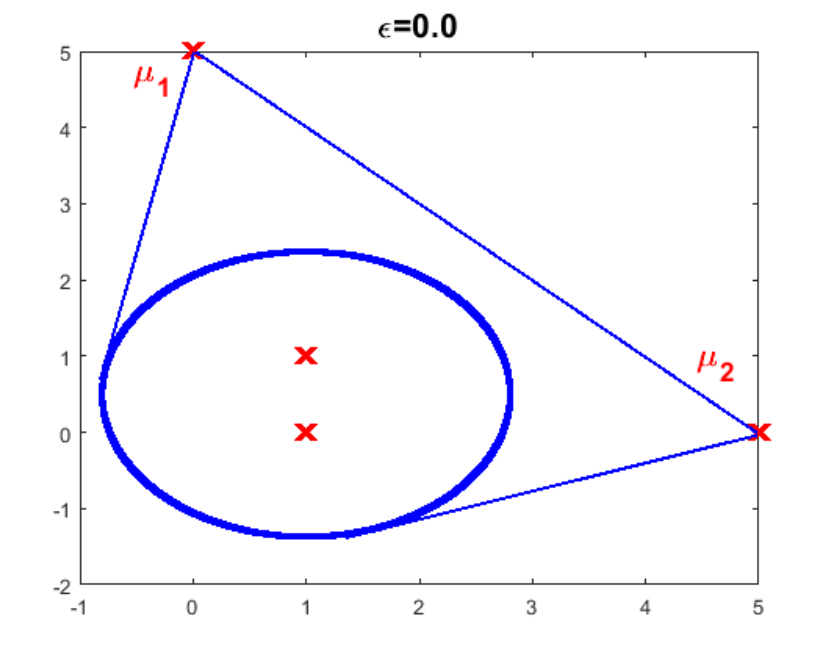}}}
\quad		\subfigure[Stable configuration of the field of values after perturbation. Increasing the resolution reveals that the boundary consists of a smooth arc passing nearly $\mu_1$ and $\mu_2$. Observe that four of the cusps of the three swallow tails are ``attracted" by  $\mu_1$ and $\mu_2$.]{\scalebox{0.5}{\includegraphics{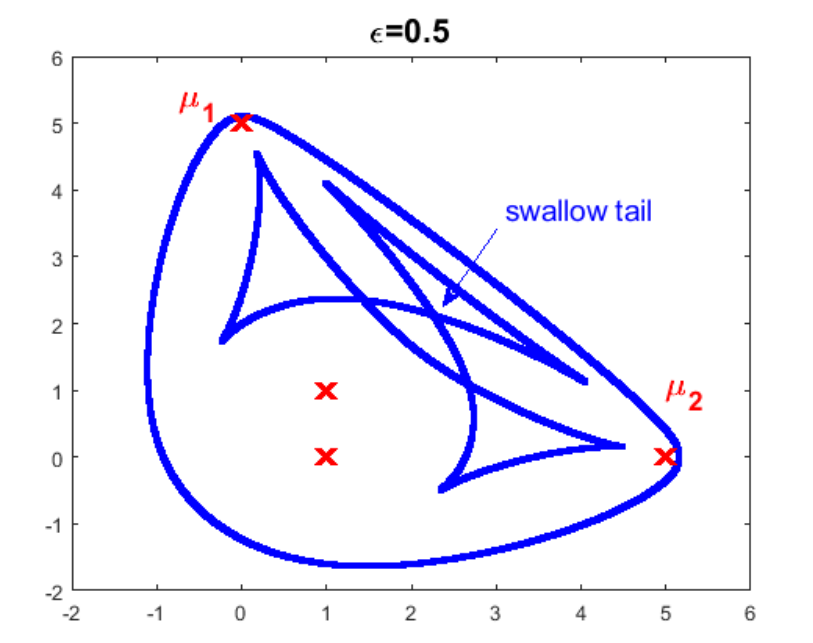}}}
	}
	\caption{Unfolding of an unstable numerical range}
	\label{f:unfolding}
\end{figure}

\begin{figure}[t]
	\centering
	\mbox{
		\subfigure[Emergence of the ideal hyperbolic triangle from $\mathcal{F}(S_{33}$)]{\scalebox{0.5}{\includegraphics{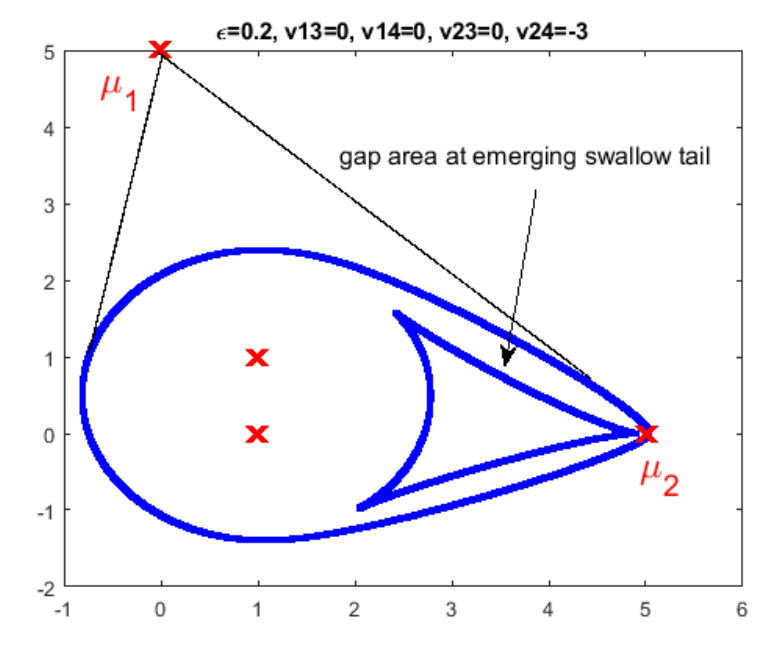}}}
\quad		\subfigure[Emergence of swallow tail from ideal hyperbolic trianngle]{\scalebox{0.5}{\includegraphics{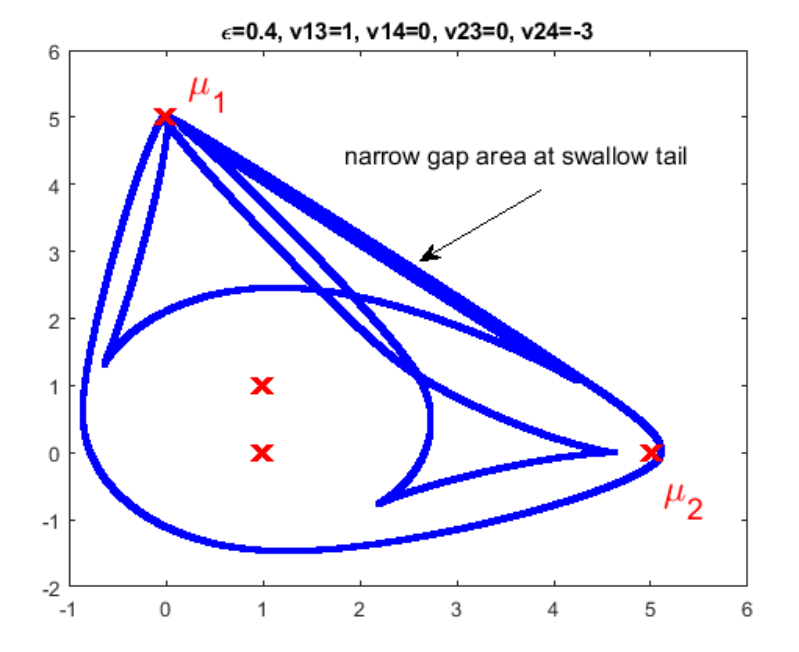}}}
	}
	\caption{Details of the unfolding of the unstable numerical range 
of Fig.~\ref{f:unfolding} between (a) and (b) of the same figure}
	\label{f:unfolding_details}
\end{figure}

\textbf{\textit{Stability}} of the critical value curves refers to their structure-preserving, continuous deformation under variation of some extraneous parameters, for example, the qubit couplings in the D-Wave~\cite{adiabatic_nature,quantum_wireless_II}. 
The corresponding (quadratic) map is then said to be \textbf{\textit{stable}}. 
Conversely, emergence of a swallow tail from an otherwise simple boundary curve is an \emph{unstable} phenomenon, 
illustrated by the passage from Fig.~\ref{f:unfolding}(a) to Fig.~\ref{f:unfolding}(b) 
with the details shown in Fig.~\ref{f:unfolding_details}.  
More technically speaking, this phenomenon is referred to as \emph{unfolding}~\cite{CastrigianoHayes1993} 
in the theory of stability of maps~\cite{GolubitskyGuillemin1973} (here $z \mapsto \<z|H|z\>$) 
and their singularities~\cite{ArnoldGusein-ZadeVarchenko1985,ArnoldGusein-ZadeVarchenko1988}. 

\subsubsection{Grover search catastrophe}

To show the potentially catastrophic consequences of Th.~\ref{t:dramatic_commutator} in quantum adiabatic computations, 
consider such benchmark examples as Grover search, inversion of Toeplitz matrices, or any problem 
that can be 
formulated in the present context as  
$H_0=I-|a\>\<a|$ and $H_1=I-|b\>\<b|$, where $\< a|a\>= \< b|b\>=1$. 
This simple structure makes the gap easy to compute~\cite{adiabatic}, 
leading us {\it to believe} that it scales as $1/\sqrt{N}$,  
but the big problem is that the map $|z\> \to \<z|H|z\>$ is unstable~\cite{GolubitskyGuillemin1973}, 
as evidenced by its nongeneric numerical range---the convex hull 
of a vertex point and an ellipse~\cite[Fig. 2]{adiabatic}. 
This has the ineluctable consequence that the unstable singularity at the vertex 
will break into \emph{stable} cusp and swallow tail singularities~\cite{GolubitskyGuillemin1973}.  
Now the gap, in its critical value curve interpretation, has to be re-computed around a swallow tail singularity, 
which is much more complicated, 
with the resulting gap scaling deviating from the nominal one.  
A similar, more recent observation was made in~\cite{Itay_Hen}. 

Precisely, 
if we take $|c\>$ of unit norm in the orthogonal complement of the span of $|a\>$ and $|b\>$, 
it is easily verified that 
$[H_0,H_1]|c\>=0$ and, moreover, $H_{\theta=5\pi/4}|c\>=-\sqrt{2}|c\>$. 
Thus the eigenvector $|c\>\in \mbox{span}\{|a\>,|b\>\}^\perp$ achieves the eigenvalue $-\sqrt{2}$ at $\theta=5\pi/4$. 
It remains to show that the other eigenvectors of $H_\theta$ at $\theta =5\pi/4$ achieve eigenvalues greater than $-\sqrt{2}$ so that 
$|c\>=|z_1(5\pi/4)\>$ is the ground state.  
The other eigenvectors of $H_\theta$ are orthogonal to $\ket{c}$, 
hence in the span of $\ket{a}, \ket{b}$, and  are of the form $\alpha |a\>+\beta|b\>$. 
Then it is easily verified that 
\[H_\theta (\alpha |a\>+\beta|b\>)=
|a\>(\alpha \sin \theta -\beta \<a|b\> \cos \theta)+
|b\>(\beta  \cos \theta -\alpha \<b|a\> \sin \theta). \]
Setting $\theta=5\pi/4$, the above yields $\alpha=\pm \beta$, and the eigenvalue associated with 
$\alpha |a\>+\beta|b\>$ becomes $\lambda=(1\pm \<a|b\>)(-\sqrt{2}/2)$. Clearly, 
$\lambda \geq -\sqrt{2}$, with equality achieved only in the unrealistic case $a=b$. 
It follows that any eigenvector $\alpha |a\>+\beta|b\>$ achieves   
an eigenvalue $>-\sqrt{2}$. 
Therefore, $-\sqrt{2}$ is the minimum eigenvalue $\lambda_1(5\pi/4)$ with eigenvector 
$|c\>=|z_1(\theta=5\pi/4)\>$.    
Moreover, $\<c|H_0 \cos(\theta)+H_1 \sin(\theta)|c\>=\cos \theta + \sin \theta$, so that 
$\lambda_1(\theta)+\lambda_1''(\theta)=0$, $\forall \theta$. 
In agreement with Theorems~\ref{t:dramatic_commutator},~\ref{t:dramatic_eigenvectors} 
and Corollary~\ref{c:back_to_Gutkin},  
the boundary has a sharp point, the vertex.  
This was already noted, using another approach, in~\cite{JonckheereAhmadGutkin}. 

As still noted in~\cite{JonckheereAhmadGutkin}, a perturbation of the $H_0,H_1$ data will smooth over the boundary, but at the expense of creating swallow tails, one of which in the gap area. 

\subsubsection{Singularity of convex hull}

Assume $H=\oplus_i H^{(i)}$ 
and $\mathcal{F}(H)=\mbox{conv}\{\cup_i \mathcal{F}(H^{(i)})\}$
to generalize the case of the Grover search where the numerical range is the convex hull of a point and an ellipse~\cite{adiabatic}. If $\cup_i \mathcal{F}(H^{(i)})$ is not convex, 
its convexification will add some edges to its boundary, 
either an edge connecting two isolated eigenvalues of $H$, 
or an edge made up with one isolated eigenvalue and the tangent to a convex, smooth $\partial\mathcal{F}(H^{(i)})$, 
or an edge made up of the common tangent to $\partial\mathcal{F}(H^{(i)})$ and $\partial\mathcal{F}(H^{(j)})$, 
assuming none of the fields of values contains the other. 
Write $[\nu_1,\nu_2]$ any such closed segment. 
This closed segment embedded in the boundary is nongeneric~\cite{JonckheereAhmadGutkin}. 
Generic or nongeneric, subsets of the boundary are still made up of critical values. 

\begin{theorem}
Regardless of whether the boundary $\partial \mathcal{F}(H)$ has nongeneric components, 
it is still composed of critical values of $f: \mathbb{CP}^{N-1} \to \mathbb{C}$.
\end{theorem}
\begin{proof}
The very general fact behind this result is the leitmotiv of~\cite{Jonckheere1997}, 
where many proofs in different contexts 
(e.g., the Brouwer domain invariance~\cite{Brouwer_domain_invariance}) are proposed. 
Closer to the present exposition, we could follow the argument depicted in Fig~\ref{f:geometry} 
taking $\theta$ to be the argument of the normal to $[\nu_1,\nu_2]$. 
We sketch a very general proof that relies exclusively on the smoothness of 
$f: \mathbb{S}^{2(N_1+N_2)-1} \to \mathbb{R}^2$, 
where $N_1$, $N_2$ denote the sizes of $H^{(1)}$, $H^{(2)}$, resp.  
Assume by contradiction that $\nu \in [\nu_1,\nu_2]$ is not critical. Hence, 
$d_{f^{-1}(\nu)}f: T_{f^{-1}(\nu)}\mathbb{S}^{2(N_1+N_2)-1}\to T_\nu \mathbb{R}^2$ has rank 2 and is hence submersive. 
Take a neighborhood $\mathcal{N}(\nu) \subset \mathbb{R}^2$; 
define $\mathcal{O}=f^{-1}(\mathcal{N}(\nu))$; 
by the implicit function theorem, 
$f(\mathcal{O})=\mathcal{N}(\nu)$. 
Hence some elements of the field of values would spill outside the convex hull, a contradiction. 
\end{proof}

In $\mbox{conv}\{\cup_i \mathcal{F}(H^{(i)})\}$, let $[\nu_1,\nu_2]$ be a convexifying line segment tangent to $\partial\mathcal{F}(H^{(1)})$ and $\partial\mathcal{F}(H^{(2)})$ 
at $\nu_1$, $\nu_2$, resp. 
Define $w_1\in \mathbb{S}^{2N_1-1}$, $w_2 \in \mathbb{S}^{2N_2-1}$ 
as the preimage points of $\nu_1,\nu_2$, that is, $f_1(w_1):=\<z_1|H^{(1)}|z_1\>=\nu_1$ 
and $f_2(w_2):=\<z_2|H^{(2)}|z_2\>=\nu_2$. In the bigger context of the direct sum of the two matrices, 
we have 
$\<w_1\oplus 0|H^{(1)}\oplus H^{(2)}|w_1\oplus 0\>=\nu_1$ 
and $\<0 \oplus w_2|H^{(1)}\oplus H^{(2)}|0 \oplus w_2\>=\nu_2$. 
Set $H:=H^{(1)}\oplus H^{(2)}$ to simplify the notation. 
Setting $z:=(z_1,z_2)\in\mathbb{S}^{2(N_1+N_2)-1}$, we define $f(z):=\<z|H|z\>$. 
To operate from $\mathbb{S}^{2N_1-1}$ and $\mathbb{S}^{2N_2-1}$ to $\mathbb{S}^{2(N_1+N_2)-1}$, we need to consider the embeddings
\[ \mathbb{S}^{2N_i-1} \hookrightarrow \mathbb{S}^{2(N_1+N_2)-1}, \quad i=1,2, \]
which is done as follows:
\[  \mathbb{S}^{2N_1-1} \subset \{z\in \mathbb{S}^{2(N_1+N_2)-1}: z_2=0\}, 
\quad \mathbb{S}^{2N_2-1} \subset \{z \in \mathbb{S}^{2(N_1+N_2)-1}: z_1=0\}. \]
With this embedding, observe that $\mathbb{S}^{2N_1-1}\cap \mathbb{S}^{2N_2-1}=\emptyset$. 
We are now in a position to compute $f^{-1}([\nu_1,\nu_2])$. 
Set $\nu_\lambda=\lambda \nu_1+(1-\lambda)\nu_2$. With $\|w_1\|=\|w_2\|=1$, it is easily seen that 
$f^{-1}(\nu_\lambda)=\sqrt{\lambda}(w_1,0)+\sqrt{1-\lambda}(0,w_2)$. 
This preimage follows a track between the two spheres 
$\mathbb{S}^{2N_1-1}$ and $\mathbb{S}^{2N_2-1}$.

\subsubsection{Unfolding}

Let $H_0+\imath H_1$ have two eigenvalues $\mu_1$, $\mu_2$  
with respective eigenvectors $z_1$, $z_2$ that are also eigenvectors of $H^\dagger$. 
Such eigenvectors are called \emph{normal eigenvectors} in~\cite[Def. 6]{JonckheereAhmadGutkin}. 
From~\cite[Th. 2]{JonckheereAhmadGutkin}, such eigenvectors are \emph{rank 0 critical points} 
in the sense that $\mbox{rank}(d_{z_i}f)=0$; moreover, from~\cite[Th. 2]{JonckheereAhmadGutkin}, 
$\mu_i\in\partial \mathcal{F}(H)$ and are \emph{sharp points}, that is, points where 
$T_{f(z_i)} \partial \mathcal{F}(H)$ is not defined. 
Such a situation is nongeneric~\cite[Ths. 10, 11; Cor. 3]{JonckheereAhmadGutkin}. 
Assuming further that $\mu_1-\mu_2^* \ne 0$, 
a little linear algebra shows that $\<z_1|z_2\>=0$. Via the unitary transformation $U=(z_1,z_2)\oplus I$, 
the matrix $H$ can be block-diagonalized as $\diag(\mu_1,\mu_2) \oplus H_{33}$. 
Moreover, $H_{33}$ can be put in upper-triangular Schur form $S_{33}$ by a further unitary transformation 
$I_{2 \times 2} \oplus U_{33}$. 
Next, since the numerical range is invariant under unitary transformation,  
the numerical range of $H$ is the numerical range of its upper-triangular Schur form~\cite[Lemma 1]{JonckheereAhmadGutkin},  
\begin{equation}\label{e:S} 
S=\left(\begin{array}{cc|cc}
\mu_1 & 0 & 0 \\
0 & \mu_2 & 0 \\\hline
0 & 0 & S_{33},  
\end{array}\right), \quad \left(S_{33}\right)_{i>j}=0.  
\end{equation}
It then follows that $\mathcal{F}(H)=\mathrm{conv}\{\mu_1,\mu_2,\mathcal{F}(S_{33})\}$.  
We further assume that $[\mu_1,\mu_2]\cap\mathcal{F}(S_{33})= \emptyset$, 
so as to expose the line segment $[\mu_1,\mu_2]$ as part of the boundary 
with $\mu_1, \mu_2 \in \partial \mathcal{F}(H)$ as sharp points. 
A line segment embedded in the boundary is nongeneric~\cite[Th. 12]{JonckheereAhmadGutkin}.

Next, we confirm the previous lack of genericity by the $\theta$-analysis. 
We need to formalize the passage from $S_0+\imath S_1$ to $S_0 \cos \theta + S_1\sin\theta$, 
where $S_0$ denotes the Hermitian part of $S$ and $\imath S_1$ its skew-Hermitian part. 
Proceeding from $H_0+\imath H_1=U(S_0+\imath S_1)U^\dagger$, we obtain
\begin{subequations}
\begin{align}
H_0\cos\theta+ H_1\sin\theta&=\mbox{Herm}\left((H_0+\imath H_1)e^{-\imath \theta}\right)\\
              &= \mbox{Herm}\left(U(S_0 + \imath S_1)e^{-\imath \theta}U^\dagger\right)\\
&= U\underbrace{(S_0\cos \theta + S_1 \sin\theta)}_{=:S_\theta}U^\dagger, 
\end{align} 
\end{subequations}
where $\mathrm{Herm}$ denotes the Hermitian part. 
Write $\mu_i=a_i+\imath b_i$ 
from which it follows that the matrix $H_0\cos \theta + H_1 \sin \theta$ is unitarily equivalent to 
\begin{align*}  
&S_\theta=\\
&\left(\begin{array}{cc|c}
a_1 \cos \theta + b_1 \sin \theta & 0 & 0 \\
0 & a_2 \cos \theta + b_2 \sin \theta & 0 \\\hline
0 & 0 & \mbox{Herm}(S_{33})\cos \theta +(\mbox{Kerm}(S_{33})/i)\sin \theta
\end{array}\right), 
\end{align*}
where $\mbox{Kerm}$ denotes the skew-Hermitian parts of a matrix. 
From the above, it follows that $\lambda_i(\theta)=a_i\cos\theta+b_i\sin\theta$, 
and further, $\lambda_i(\theta)+\lambda''_i(\theta)\equiv 0$, $i=1,2$.
 Clearly, the gap evolves as $\lambda_2(\theta)-\lambda_1(\theta)=(a_2-a_1)\cos \theta +(b_2-b_1)\sin \theta$. 
The Fourier expansion of 
$\lambda_2(\theta)-\lambda_1(\theta)$ therefore starts at $k_{\mathrm{min}}=1$, and 
by Tabachnikov's theorem 
$\lambda_2(\theta)-\lambda_1(\theta)$ has at least two roots in $[0,2\pi)$, two {\it exact} crossings, 
confirming the lack of genericity of the situation~\cite[Def. 9]{JonckheereAhmadGutkin}. 

Instead of constructing a perturbed 
$\tilde{S}_\theta=\Herm(\tilde{S})\cos \theta +(\Kerm(\tilde{S})/\imath)\sin \theta$ 
with the objective of observing that the exact crossing 
$\tilde{\lambda}_2(\theta_X)-\tilde{\lambda}_1(\theta_X)=0$ disappears,  
we argue on the eigenvectors $z_1$, $z_2$,  
removing the property that 
these eigenvectors are also eigenvectors of $H^\dagger$. 
This is equivalent to perturbing the Schur form to 
\[ 
\tilde{S}=\left(\begin{array}{cc|c}
\mu_1 & 0 & 0 \\
0 & \mu_2 & 0 \\\hline
0 & 0 & S_{33}
\end{array}\right)
+u_{1,2} E_{1,2}+\sum_{i=1}^2 \sum_{j=3}^4 u_{i,j}E_{i,j}, 
\]
where $E_{i,j}$ is the $N \times N$ elementary matrix made up of $0$s everywhere except for a $1$ in position $(i,j)$. 
The $u_{i,j}$'s are the so-called {\it unfolding or control parameters}~\cite[Chap. 7]{CastrigianoHayes1993}.  
These are perturbations to ``unfold" the unstable boundary of $\mathcal{F}(S)$, 
like the one depicted in Fig.~\ref{f:unfolding}(a),  
to the stable critical value curves of $\mathcal{F}(\tilde{S})$, 
as shown in Fig.~\ref{f:unfolding}(b). 
It is convenient to factor the control parameters as $u_{i,j}=\varepsilon v_{i,j}$,
where $\{v_{i,j}\}$ denotes a properly normalized \emph{structure} and $\varepsilon$ a global \emph{scaling}. 
An unfolding is said to be \textbf{\textit{versal}} 
if it exposes all possible stable structures. 
Existence of a versal unfolding is here guaranteed by the unitary equivalence of any matrix to its Schur form. 
The difficult part is to chose a \emph{minimal} set of versal unfolding parameters, 
in which case the unfolding is said to be \textbf{\textit{universal}}~\cite[Chap. 7]{CastrigianoHayes1993}. 

A universal unfolding manifests itself globally or locally. We start with the global picture. 
In this specific setting of singular value curves of quadratic forms, 
the aim is to find a minimum set of perturbations of the Schur form that unfold the nongeneric situation 
illustrated in Fig.~\ref{f:unfolding}(a) into stable situations as shown in Fig.~\ref{f:unfolding}(b). 
The process is combinatorially finite: in total there are $2(1+4)=10$ real parameters and their combinations, 
amounting to $\sum_{i=1}^{10} {10 \choose i}$ possibilities. 
Nevertheless, the search can be reduced by remarking that it suffices to remove the property 
that the eigenvectors of $H$ associated with $\mu_1,\mu_2$ are also eigenvectors of $H^\dagger$. 
Removal of the $\mu_1$ common eigenvector of $H, H^\dagger$ is accomplished with $v_{12}$, $v_{13}$, or $v_{14}$, but the resulting field of values still has a convexifying line segment in its boundary. 
To remove this latter situation, an additional $v_{23}$ or $v_{24}$ achieves a stable $\lambda_2$-singular value curve. Therefore, only one unfolding parameter is needed: 
any linear combination of one element of $\{v_{12}, v_{13}, v_{14}\}$  and one element of $\{v_{23}, v_{24}\}$. 
Except for some exceptional linear combinations of all 5 elements, the critical value curves remain stalbe. 
This situation is illustrated in Fig.~\ref{f:unfolding}, where
\[\tilde{S}=\left(\begin{array}{cc|cc}
 5\imath & 0 & \epsilon\imath & \epsilon 2 \\
0       & 5 & \epsilon 1      & -\epsilon 3 \\\hline
0       & 0 & 1      & 2+3\imath\\
0       & 0 & 0      & 1+\imath
\end{array}\right).\]

Locally, the universal unfolding can be computed as follows: 
Write $\<z|S|z\>$ in terms of real \emph{local} coordinates as $f(x)$, $x\in \mathbb{R}^{2N-1}$, 
and by a translation of coordinates $f(0)=0$.  
Let $\{e_1,e_2\}$ be the natural basis of $\mathbb{R}^2$.
Let $\mathbb{R}[[x]]$ be the ring of formal power series in the indeterminate $x$, 
and let $\left(\mathbb{R}[[x]] \right)^2:=\mathbb{R}[[x]]e_1+ \mathbb{R}[[x]]e_2$. 
Let $\mathfrak{J}=\<\frac{\partial f}{\partial x_1},\cdots,\frac{\partial f}{\partial x_{2N-1}}\>$ be the Jacobi ideal generated by the partial derivatives. Then
\begin{theorem}[Fundamental theorem on stability and universal unfolding]\label{t:stability_unfolding}
The map $f:(\mathbb{R}^{2N-1},0)\to (\mathbb{R}^2,0)$ is stable if the quotient module 
\begin{equation}\label{e:quotient_module}
\left(\mathbb{R}[[x]] \right)^2\left/ \left\{  \mathfrak{J}, \left\{f^i e_j\right\}_{j=1,...,2N-1}^{i=1,2}\right\}\right.
\end{equation}
is generated over $\mathbb{R}$ by $\{e_1,e_2\}$. 
Next, assuming that $0$ is critical in the sense that the partial derivatives of $f$ vanish.  
If the quotient module is finitely generated by  $\{e_1,e_2\}$ and $\{g_i(x)\}_{i=1}^r$,  
the map is unstable and a universal unfolding is given by 
\[f(x)+\sum_{i=1}^ru_i g_i(x), \quad u_i \in \mathbb{R}. \]
\end{theorem}
\begin{proof}
The first part related to stabilty is in Arnold~\cite[Section 6.6]{ArnoldGusein-ZadeVarchenko1985}. 
The second part related to unfolding is in Arnold~\cite[Section 9.3]{ArnoldGusein-ZadeVarchenko1985}. 
\end{proof}

\subsubsection{Steep gap near unfolding}
Here we develop a geometric picture 
of the paradigm of a swallow tail close to the boundary creating a steep gap. 
The forthcoming Figures~\ref{f:high_barrier_small_y},~\ref{f:barrier_low_Hamming} 
of Sec.~\ref{s:constant_to_steep} further illustrate this situation and probably more importantly 
interpret the swallow tails as symptomatic of the tunneling.  
More specifically here, we proceed from a nongeneric case and show how, under perturbation, a swallow tail unfolds from the boundary. 
This pattern is quite visible 
while perturbing the unstable singularity of the Grover search~\cite[Figs. 4, 5]{adiabatic}.  

\begin{remark}
Theorem~\ref{t:stability_unfolding} is pivotal   
in the context of robust stability when uncertainties cause the problem to traverse an unfolding~\cite[Th. 23.19]{Jonckheere1997},\cite{robust_performance_open} and in the context of   
the stability of the Riccati equation~\cite{Nanaz_SanDiego,Nanaz_NotreDame};
computationally speaking, Ref.~\cite{Nanaz_Alaska} promotes Gr\"obner bases as the polynomial algebra solution to the quotient module problem~\ref{e:quotient_module}. 
\end{remark}

\begin{remark}
Broadly speaking, the latter subsection has developed the theory of stability of maps in the restrictive context of stability of the critical value curves under perturbation of the entries of the quadratic form matrix, 
where by ``stability" it is meant that the topology (number of swallow tails, cusps, and crossings) of the critical value curves remains the same under perturbation.  
This viewpoint is justified by~\cite[Th. 23.12]{Jonckheere1997}.  
The exposition was not meant to cover the theory of stable maps and their singularities~\cite{GolubitskyGuillemin1973}. 
From a deeper standpoint, 
what has been developed in Sec.~\ref{s:stability} is the stability of the Legendrian front 
as developed in~\cite[Th. 5.3]{Legendrian_unfolding}, 
where a criterion similar to that of Theorem~\ref{t:stability_unfolding} is developed 
with a global extension springing out of the local theory. 
\end{remark}

\section{Legendrian classification} 
\label{s:Legendrian}

\begin{figure}[t]
	\centering
	\mbox{
		\subfigure[The notion of contact element at $\gamma(\theta_2)$, 
		a choice between the two half-planes bounded by the tangent, 
		and the co-orientation vector pointing to the chosen half-plane]{\scalebox{0.4}{\includegraphics{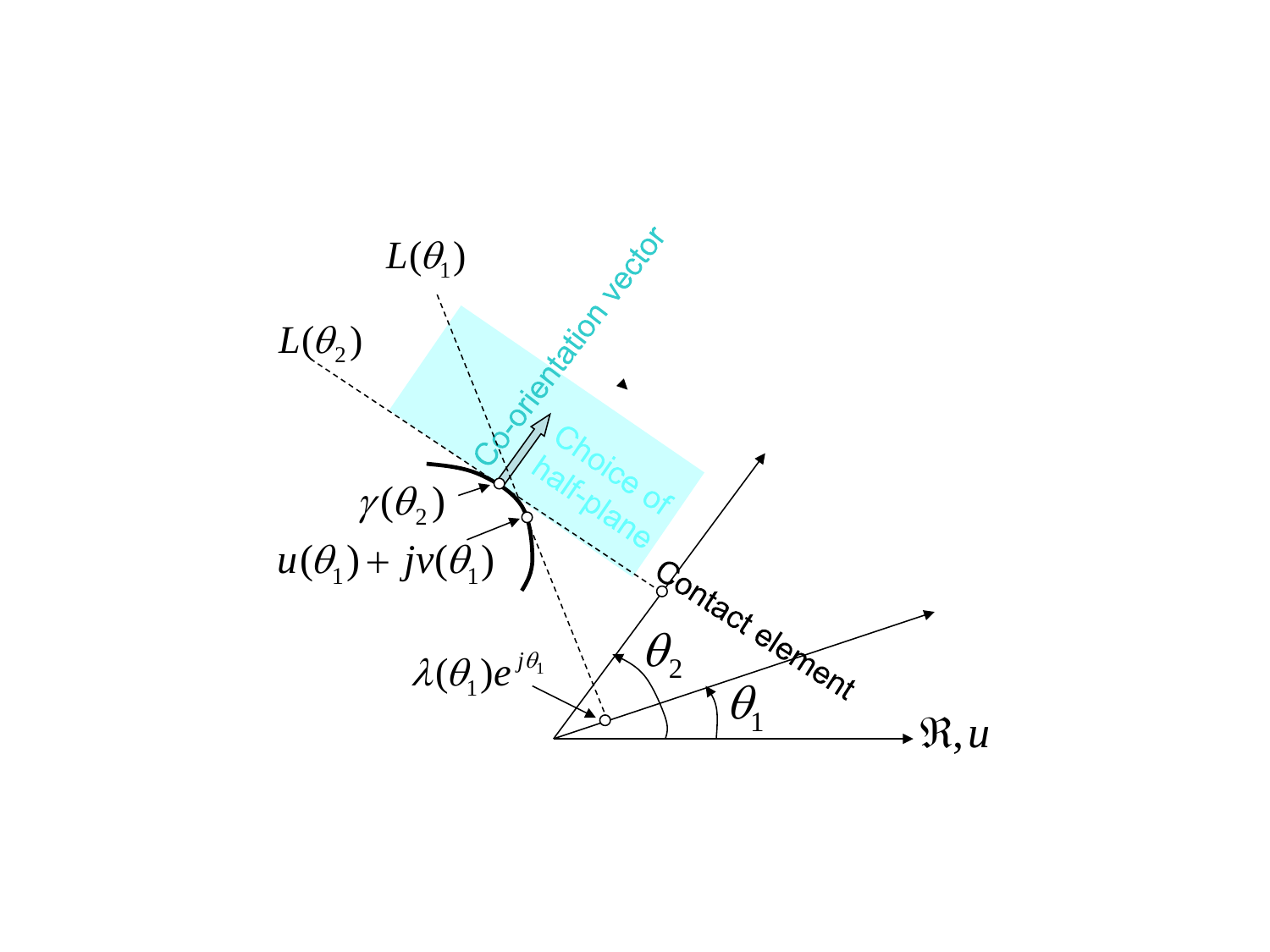}}}
		\quad		\subfigure[Orientation and co-orientation. Note that the co-orientation vector remains continuous at the cups. Also note that the co-orientation vector evolves trigonometrically along the orientation.]{\scalebox{0.4}{\includegraphics{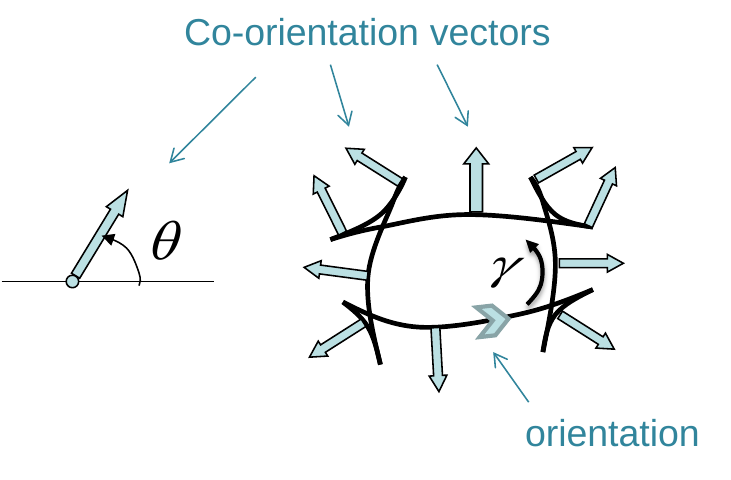}}}
	}
	\caption{Fundamental concepts of Legendrian approach}
	\label{f:Legendrian}
\end{figure}

So far, the main point is that the positioning of a swallow tail relative to the boundary of the numerical range dictates the steepness of the gap. But this approach is still local---involving \emph{one} swallow tail while there might be other swallow tails interconnected in some \emph{global} pattern. When  circulating along a singular path heading to a crossing, 
it becomes necessary to have a rule to decide whether the current branch lies above or below the one being intersected to determine whether 
the combination of such crossings could result in a \emph{knot}.  
Still more globally speaking, considering \emph{two} excited energy level critical curves, their above/below crossing patterns  could result in the two curves being \emph{linked}. This global viewpoint is precisely 
the Legendrian approach promoted by Arnold~\cite{Arnold1994},  
Thurston and Bennequin~\cite{Thurston_Bennequin_Maslov}, 
and Eliashberg~\cite{Eliashberg_original} and Chekanov~\cite{Chekanov_original}.

Central in the Legendrian approach is the concept of 
\textit{\textbf{contact element}} in $\mathbb{C}$, that is, a line $L_k(\theta)$ tangent to the curve 
$\theta \mapsto \gamma_k(\theta)=(x_k(\theta), y_k(\theta))$ of interest (typically $k=1,2$), 
as illustrated in Fig.~\ref{f:Legendrian}(a). Note that the contact element is still well-defined at the cusps. 
The contact element defines two half-planes bounded by such contact element, and 
the \textbf{\textit{co-orientation}} is a choice of one of the two half-planes bounded by the tangent $L_k(\theta)$ 
consistently along the critical values $(x_k(\theta),y_k(\theta))$ and across its cusps.  
A \textbf{\textit{co-orientation vector}} is a unit vector normal to the curve, 
pointing towards the half-plane determined by the co-orientation, 
as illustrated in Fig.~\ref{f:Legendrian}(a).   
Note that the co-orientation vector remains continuous across the cusps, 
as illustrated in Fig.~\ref{f:Legendrian}(b).   
The \textbf{\textit{orientation}} of the critical value curve is a choice of the direction of travel along 
the curve  
as illustrated in Fig.~\ref{f:Legendrian}(b).   
 
\subsection{Legendrian front: Arnold $J^+$-classification}

With a co-orientation, 
one can lift the critical value curves 
to the space $\mathbb{C} \times \mathbb{S}^1$ by the immersion  
$\gamma_k(\theta)=(x_k(\theta),y_k(\theta)) \mapsto (x_k(\theta),y_k(\theta),\theta \bmod 2\pi)=:\bar{\gamma}_k(\theta)$. 
An immediate result is that the lifted curve is dissingularized, 
in the sense that, if we take $\theta$ as the parameter of the lifted curve, 
the partial derivatives of $\bar{\gamma}_k$ relative to $x_k,y_k,\theta$ do not simultaneously vanish at any point.  
The manifold $M^3:=\mathbb{S}T^*\mathbb{C}$, the spherization of the cotangent bundle of $\mathbb{C}$ 
(diffeomorphic to $\mathbb{C} \times \mathbb{S}^1$), 
is referred to as $\textbf{\textit{contact space}}$ when it is endowed with 
a formal \textit{\textbf{contact structure,}} that is, 
an exterior differential form $\alpha: TM^3 \to \mathbb{R}$ 
that is maximally nonintegrable, that is,  
$\alpha_{\mathrm{nat}} \wedge d\alpha \ne 0 $. 
Here we take the \emph{natural} form in $\mathbb{S}T^*\mathbb{C}$, $\alpha_{\mathrm{nat}} = \cos \theta dx + \sin \theta dy$, 
for which it is easily verified that $\alpha_{\mathrm{nat}} \wedge d\alpha_{\mathrm{nat}} = -dx dy d\theta<0$.  
The kernel of this form defines a field of planes $\xi \in \mbox{ker}(\alpha_{\mathrm{nat}})$ in $\mathbb{C} \times \mathbb{S}^1$. 
The lifted curve is said to be \textbf{\textit{Legendrian}} as it is tangent to the field of planes, in other words, $\alpha(\gamma')=0$.  
The curve in the contact space is also referred as \textit{\textbf{Legendrian knot.}} 
The curve in $\mathbb{C}$ is referred to as the \textbf{\textit{Legendrian front}}.
 
It is important to observe that from their very construction the critical value curves have exactly 2 vertical tangents, 
that is, points where $dy/dx=\pm \infty$. This fact is also reflected by the natural from, 
from which it follows that the equation of the critical value curves 
is $dy/dx=-\cot \theta$ with diverging solution at $\theta=0,\pi$.

As already said, sorting out the knotting and linking requires a rule to determine which branch is above, 
which one is below, at a crossing in the front. 
By convention, at a self-crossing, $\gamma_k(\theta_{k,a})=\gamma_k(\theta_{k,b})$, 
the branch in the neighborhood of $\gamma_k(\theta_{k,a})$ will be said to be \textbf{\textit{above (below)}} 
the $\gamma(\theta_{k,b})$ branch if 
$\theta_{k,a}-\theta_{k,b}<0$ ($\theta_{k,a}-\theta_{k,b}>0$). 
At a crossing $\gamma_k(\theta_{k,a})=\gamma_\ell(\theta_{\ell,a})$ of different energy levels, 
the branch in the neighborhood of $\gamma_k(\theta_{k,a})$ will be said to be \textbf{\textit{above (below)}} 
the $\gamma(\theta_{\ell,a})$ branch if 
$\theta_{k,a}-\theta_{\ell,a}<0$ ($\theta_{k,a}-\theta_{\ell,a}>0$).

The nongeneric singularities of Legendrian fronts are triple crossings, self-tangencies, 
cusp birds, and cusp crossings~\cite[Chap. 2]{Arnold1994}. 
The self-tangencies are classified as positive if the two branches have consistent orientations 
and negative in case of opposite orientation. 
A self-tangency is referred to as \textbf{\textit{dangerous}} 
if both branches have consistent co-orientation at the self-tangency point~\cite[Sec. 10]{Arnold1994};
otherwise, it is said to be \textbf{\textit{safe}}~\cite[Fig. 3]{wave_front_Legendrian_knots}. 
The latter concept is illustrated in Fig.~\ref{f:J_minus}.
An invariant is said to be of  $J^+$-type if does not change under homotopies 
that involve no dangerous self-tangencies~\cite{wave_front_Legendrian_knots}. 
The $J^+$-theory classifies Legendrian fronts without self-tangencies of consistent orientation. 
The soon-to-be-developed winding number and Maslov index are invariants in the sense that they classify 
fronts up to isotopies that do not produce dangerous self-tangencies. 

While dangerous self-tangencies are ruled out, one could nevertheless consider self-tangencies 
in a limiting sense, but then the following restrictions emerge:

\begin{proposition}
On the $\gamma_1$ curve, a limiting move to self-tangency would reveal 
either a dangerous one with branches of opposite orientations 
or a safe one also with branches of opposite orientations (as shown in Fig.~\ref{f:J_minus}).
\end{proposition}
\begin{proof}
$\gamma_1$ has constant curvature, and it is easily sen that a self-tangency either safe or dangerous 
with branches of consistent orientations are not allowed 
as one of the branches would have positive curvature and the other negative curvature. 
\end{proof}

\subsubsection{Winding number}

\begin{definition}\label{d:index}
Given a co-orientation of an oriented generic critical value curve $\gamma_k$ in the plane,  
its \textbf{index}~\cite{Aicardi,Arnold1994}, $\mbox{ind}(\gamma_k)$, 
is the winding number of the co-orientation vector $e^{\imath \theta}(x_k(\theta),y_k(\theta))$ 
as the contact point travels around the curve from $\theta=0$ to $\theta=2\pi$. 
The winding number is positive if the contact point travels consistently with the orientation, 
negative otherwise. 
\end{definition}

Because of the geometry of Fig.~\ref{f:geometry}, 
one round trip around a critical value adiabatic curve is enough for $\theta$ to sustain a $2\pi$ change. 
 However, there are some cases, outside the realm of adiabatic processes, 
where it is necessary to circulate twice around the planar curve for $\theta$ 
to return to $2\pi$. 
This typically happens when $N$ is odd, say $N=3$. 
In this case, the singular curve in the field of values is 
an ideal hyperbolic triangle~\cite[Fig. 4]{JonckheereAhmadGutkin}, 
and upon one round trip $\oint_{\gamma_k} d\theta = \pi$.  
However, in the contact space, one travel around the curve yields $\oint_{\bar{\gamma}_k} d\theta = 2\pi$. 
Clearly, Definition~\ref{d:index} reworded for $\bar{\gamma}_k$ in the contact space becomes much more transparent. 

\begin{theorem}
The index of any critical value curve of an even-sized $H_0+\imath H_1$ is $\pm 1$. 
\end{theorem}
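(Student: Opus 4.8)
The plan is to compute the winding number of the co-orientation vector directly from the parametric description of the critical value curve established in Section~\ref{s:theta_parameterization}. Recall that the $\lambda_k$ critical value curve is parameterized by $\theta\in[0,2\pi)$ as $\gamma_k(\theta)=(x_k(\theta),y_k(\theta))$ with $x_k=\lambda_k\cos\theta-\lambda_k'\sin\theta$ and $y_k=\lambda_k\sin\theta+\lambda_k'\cos\theta$, and that from Eqs.~\eqref{e:singular_points_1}--\eqref{e:singular_points_2} the velocity is $(x_k',y_k')=-(\lambda_k+\lambda_k'')(\sin\theta,-\cos\theta)$. The co-orientation vector, being the unit normal pointing into the chosen half-plane and chosen (as stipulated in Section~\ref{s:Legendrian}) to rotate trigonometrically along the orientation, is up to sign $(\cos\theta,\sin\theta)=e^{\imath\theta}$; the sign convention fixes it as exactly $e^{\imath\theta}$ away from cusps, and by the continuity of the co-orientation vector across cusps noted in Fig.~\ref{f:Legendrian}(b) it extends continuously to all of $[0,2\pi)$.

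First I would make precise that the co-orientation vector is globally $\theta\mapsto e^{\imath\theta}$ (or its negative, depending on the fixed orientation convention): away from the singular points $\lambda_k+\lambda_k''=0$ the normalized velocity is $\mp(\sin\theta,-\cos\theta)$, whose rotation by $+\pi/2$ (or $-\pi/2$) gives $\pm(\cos\theta,\sin\theta)$; the sign in front is constant on each arc between consecutive cusps because the continuity requirement on the co-orientation vector forbids a jump, and a sign flip of $\lambda_k+\lambda_k''$ at a cusp is exactly compensated by the reversal of the ``forward'' direction of the front there. Hence on all of $[0,2\pi)$ the co-orientation vector is $\epsilon\, e^{\imath\theta}$ with a single fixed $\epsilon\in\{+1,-1\}$. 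Second, I would simply read off the winding number: as $\theta$ runs from $0$ to $2\pi$, $\epsilon e^{\imath\theta}$ makes exactly one full turn, so the winding number is $\epsilon=\pm1$. That gives $\mathrm{ind}(\gamma_k)=\pm1$.

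The main obstacle — really the only subtle point — is justifying the constancy of the sign $\epsilon$ across the cusps, i.e. that the co-orientation vector genuinely winds monotonically as $e^{\imath\theta}$ rather than partially unwinding. At a cusp the front $\gamma_k$ reverses its direction of travel in $\mathbb{C}$ (the parameter $\theta$ keeps increasing but the tangent line $L_k(\theta)$ passes through, and $x_k',y_k'$ change sign because $\lambda_k+\lambda_k''$ changes sign there), yet the contact element $L_k(\theta)$ and the chosen half-plane vary smoothly — this is the content of the lift to $M^3=\mathbb{S}T^*\mathbb{C}$ being an immersion. So the co-orientation vector, which lives in the unit tangent/cotangent circle and is literally the $\mathbb{S}^1$-coordinate $\theta$ of the Legendrian lift, rotates with $\theta$ and never reverses. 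I would phrase this by noting that the co-orientation vector is precisely the projection of the Legendrian lift $\gamma_k(\theta)\mapsto(\gamma_k(\theta),\theta\bmod 2\pi)$ onto its $\mathbb{S}^1$ factor, whence its winding is $\deg(\theta\mapsto\theta\bmod2\pi)=1$ (with sign depending on the orientation convention), independently of the number and location of the cusps and independently of which critical value curve $\gamma_k$ we picked.

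Thus the index equals $+1$ or $-1$ according to whether the orientation was chosen so that the co-orientation vector coincides with $e^{\imath\theta}$ or with $-e^{\imath\theta}$; in either case $\mathrm{ind}(\gamma_k)=\pm1$, which is the claim. I expect the write-up to be short once the identification ``co-orientation vector $=\,\pm e^{\imath\theta}$ globally'' is nailed down; everything after that is the trivial observation that $e^{\imath\theta}$ winds once.
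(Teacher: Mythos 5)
Your proposal is correct and follows essentially the same route as the paper: identify the co-orientation vector with $\pm e^{\imath\theta}$ so that the winding number is $\tfrac{1}{2\pi}\int_0^{2\pi}d\theta=\pm1$. You spell out the one point the paper leaves implicit — that the sign is constant across cusps because the Legendrian lift is an immersion and the co-orientation is its $\mathbb{S}^1$-coordinate — which is a worthwhile addition but not a different argument.
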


\begin{proof}
By its very construction, the curve $\gamma_k$ is parameterized 
by the argument $\theta \in [0,2\pi)$ of the co-orientation vector.  
By the geometry shown in Fig.~\ref{f:geometry}, 
as $\theta$ runs from $0$ to $2\pi$, the contact point on the curve circulates \emph{once} around the curve. 
Hence, the winding number is clearly $(1/2\pi)\oint_{\gamma_k} d\theta=\pm 1$,  
depending on whether the co-orientation vector 
circulates consistently with, or opposite to, the orientation, resp. 
\end{proof}

\subsubsection{Maslov index}

\begin{definition}\label{d:Maslov_index}
The \textbf{Maslov index of a cusp point} is said to be positive 
if the half-plane determined by the co-orientation vector at the cusp point contains the cusp branch going away from the cusp; 
otherwise, it is said to be negative~\cite{Aicardi,Arnold1994}. 
The number of positive (resp., negative) cusps of a generic Legendrian curve is written $\mu_+$ (resp., $\mu_-$). 
The \textbf{Maslov index
of a generic curve} $\gamma$ is defined as $\mu(\gamma)=\mu_+-\mu_-$.
\end{definition}

\begin{lemma}
Two consecutive cusps (that is, cusps occurring at $\theta_1 < \theta_2$ without cusps in $(\theta_1,\theta_2)$) have opposite indexes.
\end{lemma}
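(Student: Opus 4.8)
The plan is to reduce the whole question to the sign behavior of the single scalar function $\rho_k(\theta):=\lambda_k(\theta)+\lambda_k''(\theta)$. By Eqs.~\eqref{e:singular_points_1}--\eqref{e:singular_points_2} we have $\gamma_k'(\theta)=\rho_k(\theta)\,(-\sin\theta,\cos\theta)$, so the singular points of the front $\gamma_k$ are exactly the zeros of $\rho_k$. A \emph{cusp} (in the sense of Def.~\ref{d:cusp}) at $\theta^o$ forces this zero to be simple: if $\rho_k(\theta^o)=0$ and $\rho_k'(\theta^o)\ne 0$ then $\gamma_k''(\theta^o)=\rho_k'(\theta^o)(-\sin\theta^o,\cos\theta^o)\ne 0$ and one gets the $u^3=v^2$ normal form, whereas a higher-order zero would produce a more degenerate, non-generic singularity. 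Consequently ``consecutive cusps at $\theta_1<\theta_2$ with no cusp in $(\theta_1,\theta_2)$'' means precisely that $\rho_k$ has no zero on $(\theta_1,\theta_2)$, hence keeps a constant sign $\sigma\in\{+1,-1\}$ there; and since $\theta_1,\theta_2$ are simple zeros, $\mathrm{sign}\,\rho_k'(\theta_1)=\sigma$ while $\mathrm{sign}\,\rho_k'(\theta_2)=-\sigma$. This elementary ``the derivative alternates sign at successive simple zeros'' observation will be the engine of the proof.

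Next I would make the local model at a cusp explicit so as to read off the Maslov sign. Writing $\mathbf e(\theta)=(-\sin\theta,\cos\theta)$ and $\mathbf n(\theta)=e^{\imath\theta}=(\cos\theta,\sin\theta)$, so that $\mathbf e'=-\mathbf n$ and $\mathbf n'=\mathbf e$, repeated differentiation of $\gamma_k'=\rho_k\,\mathbf e$ gives, at a simple zero $\theta^o$ of $\rho_k$, the values $\gamma_k'(\theta^o)=0$, $\gamma_k''(\theta^o)=\rho_k'(\theta^o)\,\mathbf e(\theta^o)$, and $\gamma_k'''(\theta^o)=\rho_k''(\theta^o)\,\mathbf e(\theta^o)-2\rho_k'(\theta^o)\,\mathbf n(\theta^o)$. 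Hence, in the orthonormal frame $(\mathbf e(\theta^o),\mathbf n(\theta^o))$ based at $\gamma_k(\theta^o)$, the front looks like
\[
\Bigl(\tfrac12\rho_k'(\theta^o)\,t^{2}+O(t^{3}),\ -\tfrac13\rho_k'(\theta^o)\,t^{3}+O(t^{4})\Bigr),\qquad t:=\theta-\theta^o .
\]
This is the standard cusp with tangent line the $\mathbf e(\theta^o)$-axis and the two branches straddling it. The co-orientation vector at the cusp is $\mathbf n(\theta^o)=e^{\imath\theta^o}$ (recall it evolves trigonometrically, as fixed in Sec.~\ref{s:Legendrian}), so the half-plane it determines is the one where the $\mathbf n(\theta^o)$-coordinate is positive; and the branch ``going away from the cusp'' is the one traversed just after $\theta^o$, namely $t>0$, on which that coordinate equals $-\tfrac13\rho_k'(\theta^o)\,t^{3}$. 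Matching this against the definition of positive/negative cusp shows that the Maslov index of the cusp equals $-\mathrm{sign}\,\rho_k'(\theta^o)$ (the opposite global sign convention would give $+\mathrm{sign}\,\rho_k'(\theta^o)$; only the alternation will matter).

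Combining the two ingredients then finishes the argument: from $\mathrm{sign}\,\rho_k'(\theta_1)=\sigma=-\mathrm{sign}\,\rho_k'(\theta_2)$ we get that the two consecutive cusps carry Maslov indices $-\sigma$ and $+\sigma$, hence opposite indices, which is exactly the claim.

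I expect the only genuine difficulty to be the bookkeeping in the middle step — pinning down, consistently with the orientation and co-orientation conventions of Section~\ref{s:Legendrian} (and with the curvature reading of $\rho_k$ supplied by Lemma~\ref{l:karow}), which of the two local branches is the one ``going away from the cusp'' and which half-plane is the one ``determined by the co-orientation vector,'' so that the sign attached to each cusp is truly unambiguous. Everything else is routine: the third-order Taylor expansion, the recognition of the $u^3=v^2$ normal form, and the trivial fact that $\rho_k'$ flips sign between two adjacent simple zeros. Moreover, the conclusion is robust: it does not depend on the overall choice of sign convention for the Maslov index of a cusp, only on the fact that it is a \emph{fixed} such convention.
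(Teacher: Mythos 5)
Your proof is correct. Let me note how it relates to the paper's own argument. The paper proves the lemma qualitatively: it invokes the $3/2$-structure of the cusps (the two branches straddle the common tangent) together with the fact that the sign of the curvature does not change on $(\theta_1,\theta_2)$, and then verbally tracks which branch lies in the co-orientation half-plane — the branch incoming to the first cusp, the cusped edge in between, the branch converging to the second cusp — to conclude the signs flip. You instead mechanize the same underlying fact (constancy of the sign of $\rho_k=\lambda_k+\lambda_k''$ between consecutive cusps) through an explicit local computation: the third-order Taylor expansion of $\gamma_k$ in the frame $(\mathbf e(\theta^o),\mathbf n(\theta^o))$ yields the closed-form rule ``Maslov sign of the cusp $=-\mathrm{sign}\,\rho_k'(\theta^o)$,'' and the alternation of $\rho_k'$ at consecutive simple zeros finishes the job. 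What your route buys is precision: it removes the ambiguity in the paper's branch-tracking (which branch is ``incoming,'' which half-plane the co-orientation determines at the cusp itself), it makes the genericity hypothesis explicit (a Def.~\ref{d:cusp} cusp forces a \emph{simple} zero of $\rho_k$), and it is manifestly independent of the overall sign convention. What the paper's route buys is brevity and a picture one can see directly on the front. Both hinge on the same invariant $\rho_k$, so the two proofs are best viewed as the qualitative and quantitative versions of one argument; yours is the more airtight of the two.
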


\begin{proof}
This is a corollary of the fact that the cusps are $3/2$, that is, the two branches abutting $\gamma_k(\theta_i)$ 
are on opposite sides of the common tangent at the cusp point. 
Assume that $\mbox{ind}(\gamma_k(\theta_1))=-1$, that is, the co-orientation half-plane contains the branch incoming to $\gamma_k(\theta_1)$, but does not contain the outgoing edge.  
Since $\theta$ increases from $\gamma_k(\theta_1)$ to $\gamma_k(\theta_2)$, 
the co-orientation half-plane does not contain the $\gamma_k(\theta_1)$ to $\gamma_k(\theta_2)$ cusped edge.  
But then at the $\theta_2$ cusp, the co-orientation half-plane does not contain the branch converging to $\gamma_k(\theta_2)$; 
hence the co-orientation half-plane will contain the branch going away from $\gamma_k(\theta_2)$, that is, 
$\mbox{ind}(\gamma_k(\theta_2))=+1$. 
\end{proof}

\begin{lemma}\label{l:even}
Let $f: [0,2\pi) \to \mathbb{R}$ be a $\mathcal{C}^2$-periodic function. If  $f(\theta)$ is not constant, 
then the equation $f(\theta)=0$ has an even number of solutions,  multiplicity counted. 
\end{lemma}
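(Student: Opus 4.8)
The plan is to view $f$ as a smooth function on the circle $\mathbb{S}^1=\mathbb{R}/2\pi\mathbb{Z}$ and to prove the claim by a parity count of sign changes. Since the statement counts solutions with multiplicity, the meaningful situation is the one in which the zero set $f^{-1}(0)=\{\theta_1<\theta_2<\dots<\theta_n\}\subset[0,2\pi)$ is finite and each zero $\theta_i$ has a finite order of vanishing $m_i$; this holds, for instance, whenever $f$ is real-analytic and $f\not\equiv 0$, which is exactly the situation for the eigenvalue expressions $\lambda_k(\theta)+\lambda_k''(\theta)$ to which the lemma is applied. I would therefore open the proof by reducing to this case (if $n=0$ the number of solutions is $0$, which is even).

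First I would record the elementary dichotomy at each zero: if $m_i$ is even then $f$ has the same sign on the two sides of $\theta_i$, whereas if $m_i$ is odd then $f$ changes sign across $\theta_i$. Hence the number of $\theta_i$ that are sign changes of $f$ equals $\#\{i:m_i\text{ odd}\}$. Writing $N:=\sum_{i=1}^n m_i$ for the number of solutions counted with multiplicity, the even $m_i$ contribute an even amount to this sum, so $N\equiv\#\{i:m_i\text{ odd}\}\pmod 2$.

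Next comes the topological core: traversing the circle once, $\operatorname{sign} f$ must return to its starting value, so the number of sign changes is even. Concretely, $f$ has a constant nonzero sign $\sigma_i\in\{+1,-1\}$ on each open arc $(\theta_i,\theta_{i+1})$, indices taken cyclically with $\theta_{n+1}:=\theta_1+2\pi$ and $\sigma_{n+1}:=\sigma_1$; the number of sign changes of $f$ around the circle is $\#\{i:\sigma_i\ne\sigma_{i+1}\}$, which equals $\#\{i:m_i\text{ odd}\}$ by the dichotomy above, and the telescoping identity $\prod_{i=1}^n\sigma_i\sigma_{i+1}=\prod_{i=1}^n\sigma_i^2=1$ forces this number to be even. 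Combining with the previous step yields that $N$ is even. (Equivalently, the graph of $f$ is a closed curve in the cylinder $\mathbb{S}^1\times\mathbb{R}$ homologous to $\mathbb{S}^1\times\{0\}$, so its mod-$2$ intersection number with $\mathbb{S}^1\times\{0\}$ vanishes.)

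The only delicate point, and the main obstacle, is the initial reduction: for a merely $C^2$ periodic function one can manufacture an isolated zero of infinite order, or a zero set with a limit point, and then ``number of solutions counted with multiplicity'' is not a finite integer at all. I would handle this by stating the lemma under the standing assumption that the zeros are finite in number and of finite multiplicity (alternatively, proving the fully unconditional but weaker assertion that whenever this count is finite it is even), and remarking that this hypothesis is automatically met in every application made here, where $f$ is an analytic eigenvalue combination that is not identically zero.
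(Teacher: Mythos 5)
Your proof is correct and follows essentially the same route as the paper's: both arguments reduce the claim to the parity of the number of sign changes of $f$ around the circle, which must be even because the sign of $f$ returns to its starting value after one full traversal. Your version is in fact tighter on the two points the paper glosses over---the congruence $N\equiv\#\{i:m_i\ \mathrm{odd}\}\pmod 2$ reducing the multiplicity count to the count of odd-order zeros (the paper simply declares that a tangential crossing ``updates the counter by 2''), and the standing hypothesis of finitely many zeros of finite order, which the paper's statement leaves implicit even though it is genuinely needed for a merely $C^2$ function.
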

The proof is implicitly contained in the original proof of Tabachnikov~\cite{Tabachnikov_original}, except for 
the case of multiple solutions. Moreover, the following proof does not appeal to the starting order of the Fourier expansion. 

\begin{proof}
Assume $f(0)=f(2\pi) \ne 0$. (Or else, invoking the non-constancy hypothesis, 
shift the domain by $\theta_0$ so that $f(\theta_0)=f(\theta_0+2\pi)\ne 0$.)  
Without loss of generality, assume $f(0)=f(2\pi) >0$. The point $(0,f(0))$ has to connect to $(2\pi,f(2\pi))$. Follow the graph of the function from $(0,f(0))$; 
if it does not cross the $0$-level set, the lemma is proved. Assume now the graph crosses the $0$ level set  transversally at $\theta_1$; 
the crossing counter is set to 1. But the graph has to cross again the $0$-level to connect to  $(2\pi,f(2\pi))$. Assume it crosses this level at $\theta_2$;
the counter is set to $2$. From here on, either the graph connects to  $(2\pi,f(2\pi))$ without further crossing or it goes again to the sublevel set,
from where an inductive argument shows that the counter increments in case of transversal crossings are always even. 
Should a crossing by non-transversal, that is, tangential, this updates the counter by 2, again even. 
\end{proof}

\begin{theorem}
The Maslov index $\mu(\gamma_k)$ of any critical value curve $\gamma_k$ is $0$. 
\end{theorem}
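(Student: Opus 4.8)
The plan is to deduce $\mu(\gamma_k)=0$ purely combinatorially from the two lemmas just established: the one asserting that two consecutive cusps carry opposite (Maslov) indices, and Lemma~\ref{l:even} on the parity of the zero set of a $C^2$-periodic function. Recall from Section~\ref{s:theta_parameterization} (Eqs.~\eqref{e:singular_points_1}--\eqref{e:singular_points_2}) that the cusps of $\gamma_k$ sit exactly at the zeros of $g_k(\theta):=\lambda_k(\theta)+\lambda_k''(\theta)$, and that on any arc free of such zeros the curvature $\kappa_k=\pm 1/g_k$ keeps a fixed sign. In the generic stratum these zeros are simple, so $g_k$ actually changes sign at each cusp and there are finitely many of them.

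First I would list the cusps cyclically, $\theta_1<\theta_2<\dots<\theta_n<\theta_1+2\pi$. On each open arc $(\theta_i,\theta_{i+1})$ the function $g_k$ does not vanish, hence $\kappa_k$ has constant sign there; this is precisely the hypothesis invoked in the proof of the lemma stating that consecutive cusps have opposite indices. Applying that lemma to every consecutive pair, including the wrap-around pair $(\theta_n,\theta_1+2\pi)$ (legitimate by periodicity, after shifting the fundamental domain), forces the sequence $\mathrm{ind}(\gamma_k(\theta_1)),\dots,\mathrm{ind}(\gamma_k(\theta_n))$ to alternate cyclically between $+1$ and $-1$. A cyclically alternating $\pm1$ sequence can exist only if $n$ is even — this is also an immediate consequence of Lemma~\ref{l:even} applied to $f=g_k$; and if $g_k\equiv 0$ the curve degenerates to a point with no cusps, so $\mu(\gamma_k)=0$ trivially. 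Writing $n=2m$, an alternating $\pm1$ sequence of length $2m$ has exactly $m$ entries $+1$ and $m$ entries $-1$, whence $\mu_+=\mu_-=m$ and $\mu(\gamma_k)=\mu_+-\mu_-=0$. For the boundary curve $\gamma_1$, where generically $g_1=\lambda_1+\lambda_1''<0$ never vanishes, this reads $n=0$ and the conclusion holds vacuously, consistent with Theorem~\ref{t:dramatic_commutator}.

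The hard part is not any estimate but the care needed around non-generic configurations: if $g_k$ has a zero of \emph{even} multiplicity then $x_k'=y_k'=0$ there while the front does not reverse direction, so no genuine $3/2$-cusp is produced and such a point must be excluded from the count — restricting to the generic situation where all zeros of $g_k$ are simple removes this ambiguity, and the degenerate case $g_k\equiv 0$ is handled separately as above. A secondary point to check is that the orientation and co-orientation conventions fixed in Section~\ref{s:Legendrian} are exactly those under which the consecutive-cusp lemma was proved, so that its conclusion can be chained all the way around $\mathbb{S}^1$ without a sign slip.
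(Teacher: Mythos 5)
Your proof is correct and follows essentially the same route as the paper's: cusps are the zeros of $\lambda_k+\lambda_k''$, Lemma~\ref{l:even} gives an even count, and the consecutive-cusp alternation lemma forces $\mu_+=\mu_-$. In fact you make explicit the alternation step (including the cyclic wrap-around) that the paper leaves implicit when it says it "suffices" to show the number of cusps is even, and your handling of the degenerate cases (even-multiplicity zeros, $g_k\equiv 0$, the cusp-free boundary) is a welcome extra precaution.
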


\begin{proof}
It suffices to show that, generically, the number of cusp points on a critical value curve is even. 
The cusp points are clearly given by the transversal (non-multiple) solutions to $\lambda_k(\theta)+\lambda_k''(\theta)=0$. 
But this function is periodic with period $2\pi$. Hence by the lemma, 
the plot of $\lambda_k(\theta)+\lambda_k''(\theta)$ crosses the $\theta$-axis, the $0$ level set,  
at an even number of points.  
\end{proof}

In case $\lambda_k(\theta)+\lambda_k''(\theta)$ has a double solution, 
the two cusps of the swallow tail have merged and the swallow tail has disappeared. 

\subsubsection{Classification}

Using Arnold's notation~\cite[Chap. 2]{Arnold1994}, let $\Omega_{i,\mu}$ denote the set of $J^+=0$ Legendrian curves of index $i$ and Maslov index $\mu$. 
The following corollary classifies all adiabatic problems among Legendrian curves. 

\begin{corollary}
The critical value curves of any generic adiabatic problem belong to $\Omega_{1,0}$. $\blacksquare$
\end{corollary}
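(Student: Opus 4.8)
The plan is to obtain the corollary by simply combining the two structural results just established --- that every critical value curve has index $\pm1$ and Maslov index $0$ --- once the sign of the index is pinned down and the role of the word \emph{generic} is made precise. Here \emph{generic} carries two meanings at once: first, the anti-crossing inequality $\lambda_2(\theta)-\lambda_1(\theta)>0$ holds for all $\theta$, so (by Sec.~\ref{s:genericity} and the exact-crossing discussion) the ground and first-excited critical value curves never create a $J^+$ double point and hence fall inside Arnold's admissible class; second, all solutions of $\lambda_k(\theta)+\lambda_k''(\theta)=0$ are transversal, so every singular point of $\gamma_k$ is a genuine $3/2$-cusp and no swallow tail has collapsed (cf.\ the observation following the Maslov theorem). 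Under these two conditions each $\gamma_k$ is a generic co-oriented Legendrian front and its index and Maslov index are defined in the sense of the Definitions above.

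First I would fix the index. The index theorem gives $\mathrm{ind}(\gamma_k)=\pm1$, the sign being $+1$ exactly when the co-orientation vector circulates trigonometrically. But $\gamma_k$ was built in Sec.~\ref{s:theta_parameterization} as the envelope parameterized by the argument $\theta$ of the projection line, and the convention adopted at the opening of Sec.~\ref{s:Legendrian} is to take the co-orientation vector to be $e^{\imath\theta}$, evolving trigonometrically along the orientation. Hence $\mathrm{ind}(\gamma_k)=+1$ for every $k$; in particular $\mathrm{ind}(\gamma_1)=\mathrm{ind}(\gamma_2)=1$.

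Next I would invoke the Maslov theorem, $\mu(\gamma_k)=\mu_+-\mu_-=0$. For the boundary curve $\gamma_1$ this is immediate, since generically $\lambda_1(\theta)+\lambda_1''(\theta)\ne0$ for all $\theta$ (Sec.~\ref{s:smoothness_boundary}), so $\gamma_1$ is cuspless and $\mu(\gamma_1)=0$ trivially. For $\gamma_2$ (and each higher $\gamma_k$) the cusps are the transversal zeros of the $2\pi$-periodic function $\lambda_k(\theta)+\lambda_k''(\theta)$; by Lemma~\ref{l:even} there is an even number of them, and by the lemma on consecutive cusps they alternate in sign, so the positive and negative cusps cancel and $\mu(\gamma_k)=0$.

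Putting the two items together, every critical value curve of a generic adiabatic problem has index $1$ and Maslov index $0$, hence belongs to $\Omega_{1,0}$ by definition of that set. The only place where the argument can go wrong --- and the step that genuinely consumes the genericity hypothesis --- is the possible appearance of a degenerate singularity (a multiple root of $\lambda_k+\lambda_k''$, which would merge two cusps and could spoil the parity count) or of an exact crossing $\lambda_2(\theta_\times)=\lambda_1(\theta_\times)$ (a forbidden $J^+$ double point); genericity excludes both, so the Legendrian classification of Sec.~\ref{s:Legendrian} applies verbatim and the corollary follows.
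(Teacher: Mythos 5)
Your proof is correct and follows the same route the paper intends: the corollary is stated there without any explicit proof, as an immediate consequence of the two preceding theorems (index $\pm 1$, Maslov index $0$), and you simply combine them. Your added care in pinning the sign of the index to $+1$ via the trigonometric co-orientation convention, and in making explicit which genericity hypotheses (transversal roots of $\lambda_k+\lambda_k''$, no exact crossing) the argument actually consumes, fills in details the paper leaves implicit.
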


Unfortunately, $\Omega_{1,0}$ is too broad a class, 
because, as shown in Arnold~\cite[Fig. 26]{Arnold1994}, it contains curves never seen in adiabatic processes; 
nevertheless, it does contain swallow-tailed curves close to, but differebnt from,  
the swallow-tailed quadrilateral of Fig.~\ref{f:geometry} typical of adiabatic processes. 
But most importantly, as shown in Table~\ref{t:classification}, it does not discriminate among many adiabatic-relevant cases, 
nor does it discriminate among the three cases shown in Fig.~\ref{f:J_minus}. 

\begin{remark}
The terminology of {\it front} or {\it wave front} is justified by the swallow tail singularities 
of an implosive wavefront propagating from a smooth closed curve that is not a sphere, that has varying curvature. 
Let us chart the closed curve by its co-orientation $\theta$ and 
suppose the implosive phenomenon isotropically propagates at unit speed from $t=0$. 
It is shown in~\cite[Sec. 1.4]{Arnold1993} that the wavefront is smooth iff $t<R(\theta)$, 
where $R(\theta)$ is the radius of curvature of the closed curve. 
On the other hand, a swallow tail will develop in the $\theta$-region where $t>R(\theta)$. 
Thus by back-stepping along the co-orientation $\theta$ for a distance of $\lambda_k(\theta)+\lambda_k''(\theta)$ we recover the origin of the 
propagation phenomenon, which completely justifies the terminology of {\it front} for the $\gamma_k$ critical value curve.	
\end{remark}

\begin{remark}
Definition~\ref{d:Maslov_index} of the Maslov index is not the original one of Maslov's index for a loop of Lagrangian subspaces; 
it is Arnold's reinterpretation of it~\cite{Leray_Maslov,Maslov_index_for_paths}. 
The connection between Lagrangian subspaces and Legendrian knots is provided in~\cite{Lagrangian_skeleta}.
\end{remark}

\begin{figure}[t]
	\centering
	\mbox{
\subfigure[Ideal quadrilateral (sum of internal angles = 0) with orientation and co-orientation]{\scalebox{0.55}{\includegraphics{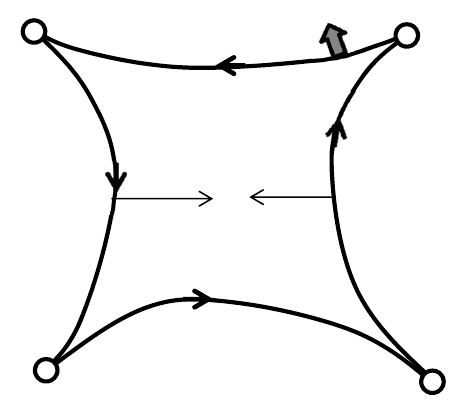}}}\quad
		\subfigure[Safe self-tangency with opposite orientation]{\scalebox{0.55}{\includegraphics{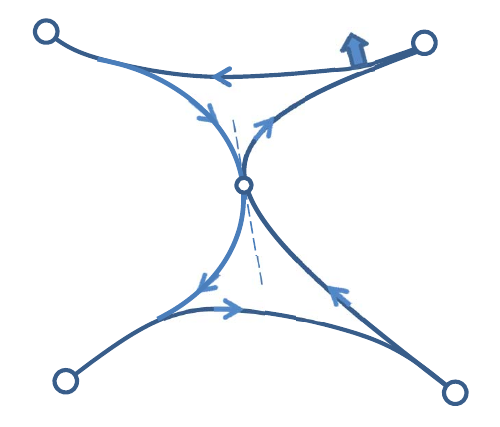}}}
\quad		\subfigure[Beyond the safe self-tangency with emergence of pair of swallow-tails. Observe the knotting,]{\scalebox{0.55}{\includegraphics{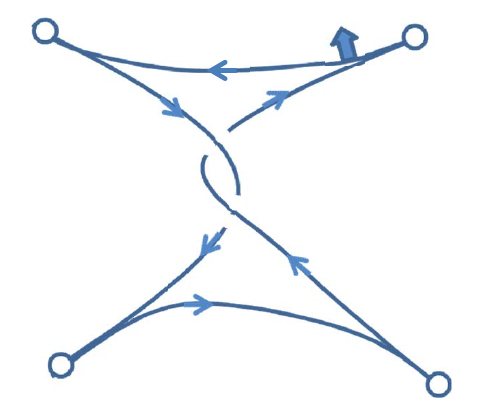}}}
	}
	\caption{Crossing a safe self-tangency with opposite orientation with emergence of swallow tails similar to those appearing in tunneling as shown in Fig.~\ref{f:high_barrier_small_y}}
	\label{f:J_minus}
\end{figure}

\subsection{Legendrian front: Rotation and Thurston-Bennequin invariants}

So far, we have used the \emph{natural} contact form $\alpha_{\mathrm{nat}}=dx \cos \theta + dy \sin \theta$  
in the contact space $\mathbb{S}T^*\mathbb{C}$ where lies the Legendrian knot that projects to the 
critical value curve. 
The Thurston-Bennequin and the so-called \emph{new invariants} of Chekanov and Eliashberg~\cite{computable_Legendrian_invariants}  
rely on the (Darboux) \emph{standard} contact form~\cite{contact_geometry_another_summary,Thurston_Bennequin} 
$\alpha_{\mathrm{std}}=du-pdq$ in the contact space identified with the space $J^1(\mathbb{R},\mathbb{R})$ 
of 1-jets of functions to the real line. 
Note that $\alpha_{\mathrm{std}}\wedge d\alpha_{\mathrm{std}}$ is also a everywhere nonvanishing volume form. 
The two forms can be related through the contactomorphism~\cite{fronts_of_Legendrian_links} 
$(x,y,\theta) \mapsto (p,q,u)$
defined by 
$p=-x \sin \theta + y \cos \theta$, $u=x \cos \theta + y \sin \theta$, and $q=\theta$.  
The standard contact form allows us to look at the curve in the $(p,\theta)$ plane, 
the so-called \textit{\textbf{Lagrangian projection}}, which is an immersion of the circle 
with crossings but no swallow tails. Here, however, we will work for the most part with the frontal projection. 

\subsubsection{Removal of vertical tangency points}

A significant difference between the natural and standard forms is that, with $\alpha_{\mathrm{std}}$, 
vertical tangents are not allowed because the equation of the front projection is $du/dq=p$ with $p$ {\it finite.}  
This forces us to replace the vertical tangency points by cusps, 
as suggested in~\cite[Fig. 19]{Invariants_Legendrian_transverse_knots}, 
and proceed with the classification on the resulting curve $\hat{\gamma}_2$.  
There are two vertical tangency points, hence two additional cusps, 
one on the left side of the front diagram and the other on the right side. 
Moreover, the cusps could be positioned in two different ways:  
either left-handed (cusp pointing to the left) or right-handed (cusp pointing to the right). 
This dichotomy is resolved by orienting the cusp in such a way 
that it does not add another vertical tangency point, 
the very situation the additional cusps are meant to remove. 
This process is illustrated in Fig.~\ref{f:breaking_vertical_tangents}.

\begin{lemma}
For a negatively curved front diagram $\gamma_2$, 
if, at the two vertical tangency points, 
one cusp is left-handed and the other right-handed, then the winding number is preserved, 
${\tt index}(\hat{\gamma})={\tt index}(\gamma)-1$; moreover, $\mu(\hat{\gamma})=\mu(\gamma)$. 
\end{lemma}

\begin{proof}
The guiding idea is that the positioning of the additional cusps respects the curvature. 
Along an edge of $\gamma$,  
$-\tfrac{\pi}{2}<\int_{\gamma_{\theta_i}^{\theta_o}} d\theta <0$. 
However, after placing the cusp,  
$\int_{\hat{\gamma}_{\theta_i}^{\theta_o}} d\theta =\int_{\gamma_{\theta_i}^{\theta_o}} d\theta - \pi. $ 
This takes care of the first claim. 
The second claim follows easily from the very definition of the Maslov index: 
The left-side cusp causes $\mu$ to increase by one and the right-side cusp causes it to decrease by 1. 
\end{proof}

The forthcoming \emph{rotation} and \emph{Turston-Bennequin} number are invariant under Legendrian isotopy in the contact space 
or Legendrian Reidemeister moves in the front projection. 
(A \emph{Legendrian} isotopy is an isotopy such that along its track the immersion of $\mathbb{S}^1$ is 
Legendrian.)
There are three \textit{\textbf{Legendrian Reidemeister moves:}} swallow tail removal, 
cusp crossing above or below a smooth branch, 
and triple crossing 
when a smooth branch crosses two already crossing branches 
above the bottom branch and below the top branch~\cite[Sec. 5]{contact_geometry_another_summary}.  

If $\gamma_2$ has swallow tails as its sole singularities, they can be removed by Reidemeister moves, 
and consequently the classical invariants identify $\gamma_2$ as a circle as evident from Table~\ref{t:classification}.   
If $\gamma_2$ has (an even number of) swallow tails with vertical tangents along their edges, 
removal of the vertical tangency points destroys such swallow tails 
leaving the Reidemeister moves with less freedom to trivialize the curve $\hat{\gamma}_2$ to a circle. 
After removal of the vertical tangencies, the classical invariants applied to $\hat{\gamma}_2$  
provide a stronger classification revealing the $\gamma_2$-swallow tails, 
as shown by Fig.~\ref{f:breaking_vertical_tangents} and Table~\ref{t:classification}.

\subsubsection{Rotation, Writhe and Thurston-Bennequin invariant}

Given an orientation, the concept of \textit{\textbf{upward and downward cusps}} is self-explanatory if the tangent 
to the cusp is horizontal~\cite[Fig. 3]{Chekanov_Eliashberg_invariants}. 
If some tangents are not horizontal, they can be made horizontal by an isotopy  
 that does not affect the crossings.

\begin{definition}[\cite{Thurston_Bennequin_Maslov}]
The \textbf{\textit{rotation}} of a Legendrian front with all its cusps horizontal is defined as $\mathrm{rot}(\gamma_k)=\frac{1}{2}(D-U)$, 
where $D$ is the number of downward cusps and $U$ the number of upward cusps of $\gamma_k$. 
\end{definition}

\begin{proposition}[~\cite{C-E_invariants_knots}]
$\tfrac{1}{2}(D-U)=\tfrac{1}{2}(\mu_+-\mu_-)$, that is, the rotation is 1/2 of the Maslov index. 
\end{proposition}

Since the ``preprocessing" of a front to make all its cusps horizontal is somewhat artificial, 
we will in the sequel discard $\tfrac{1}{2}(D-U)$ and fall back on $\tfrac{1}{2}(\mu_+-\mu_-)$, 
which is well defined no matter what the angles of the cusps are.

\begin{definition}\label{d:writhe} 
Given an orientation and a co-orientation of a critical value curve, 
a crossing of two branches is said to be positive (negative) 
if the smallest angle rotation of the ``above" path to align its orientation with that of the ``under" path is trigonometric (clockwise). 
The \textit{\textbf{writhe}} $w(\gamma)$ of a Legendrian fronts is the number of positive crossings minus the number of negative crossings. 
\end{definition}

(Note that the writhe does not depend on the orientation, nor on the co-orientation.) 

\begin{definition}[\cite{Thurston_Bennequin_Maslov}]
The \textbf{\textit{Thurston-Bennequin number}} is the writhe minus 1/2 the number of cusps:
\[ {\tt tb}(\gamma)=w(\gamma)-\frac{1}{2}|\{\mathrm{cusps}\}|. \]
\end{definition}

\subsubsection{Classical invariants}

\begin{figure}[t]
	\centering
	\mbox{
\subfigure[The vertical tangency points occur on edges of opposite swallow tails. The $0^\circ$ and the $90^\circ$ swallow tail gives some similarity between the beginning and the end of the adiabatic process.]{\scalebox{0.5}{\includegraphics{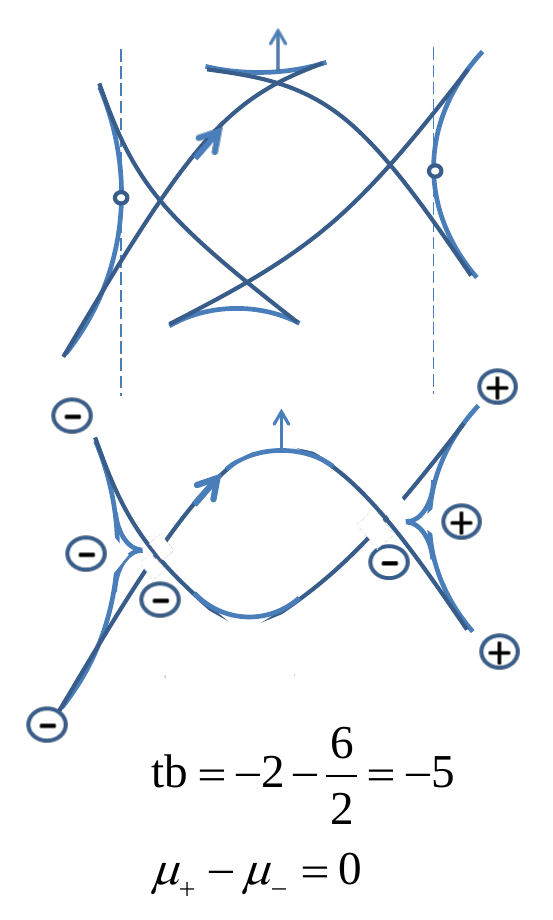}}}\quad
		\subfigure[The tangency points are on opposite branches joining two swallow tails. The swallow tail at 45$^\circ$ could be considered as ``dangerous," likely to create a steep gap,]{\scalebox{0.5}{\includegraphics{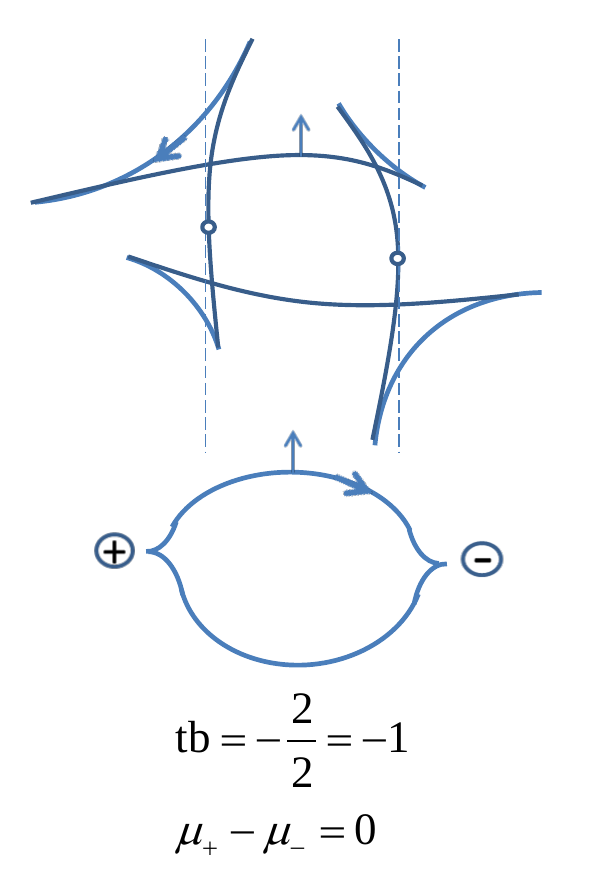}}}
\quad		\subfigure[Combination of the two preceding cases: the left tangency point is on a branch connecting two swallow tails, the other is on the edge of the opposite swallow tail.]{\scalebox{0.65}{\includegraphics{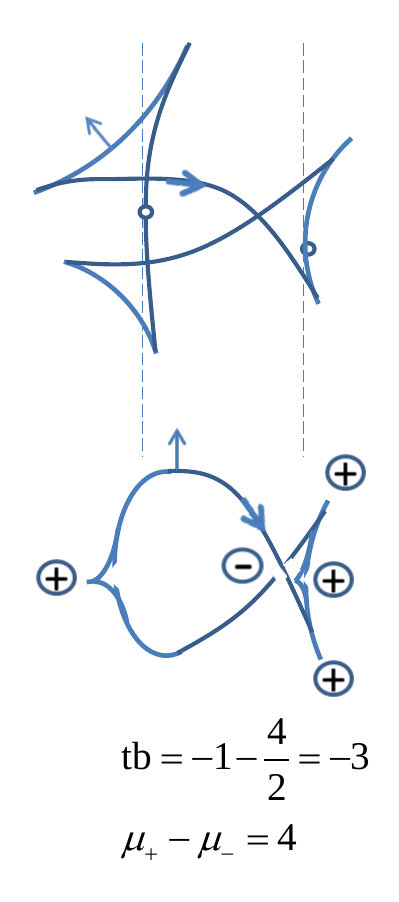}}}
	}
	\caption{Replacement vertical tangency points by cusps in 3 different configurations of 
critical value curves: 
top, $\gamma_2$ curves, bottom, $\widehat{\gamma}_2$ curves. 
The $\pm$ signs of crossings refer to the writhe and those of cusps refer to the Maslov index.}
	\label{f:breaking_vertical_tangents}
\end{figure}

\begin{theorem}
The Maslov index (or rotation) and the Thurston-Bennequin number, $\mu(\gamma)$ and ${\tt tb}(\gamma)$, are invariants of the Legendrian front $\gamma$ under Legendrian Reidemeister moves. 
Together they form the \textbf{\textit{classical invariants}}, $(\mu,{\tt tb})$.
\end{theorem}
\begin{proof}See~\cite[5.11]{contact_geometry_another_summary}.
\end{proof}

Applied to all cases of critical value curves having only swallow tails singularities---which constitute 
the most important class of critical value curves of $H_0+\imath H_1$ 
relevant to quantum adiabatic computations~\cite{adiabatic}---the removal of vertical tangents 
by replacement of their contact points by cusps 
followed by Reidemeister removal of the remaining swallow tails   
results in the 3 fundamental structures 
shown in Fig.~\ref{f:breaking_vertical_tangents} + the mirror image of Fig.~\ref{f:breaking_vertical_tangents}(c). 
Some details are compiled in Table~\ref{t:classification}. 
What distinguishes the 4 cases is whether a vertical tangency point is on an edge of a swallow tail or a branch connecting two successive swallow tails and more importantly the co-orientation position $\theta \in [0,2\pi)$ 
of the swallow tails.   
The Thurston-Bennequin number {\tt tb} of the $\hat{\gamma}_2$ critical value curves are computed 
after removal of the vertical tangents: 
\begin{enumerate}
\item For the swallow-tailed quadrangle, 
the vertical tangency points could occur either 
\begin{enumerate}
\item on opposite edges of swallow tails 
(Fig.~\ref{f:breaking_vertical_tangents}(a)) 
\item or on opposite branches connecting two opposite swallow tails 
(Fig.~\ref{f:breaking_vertical_tangents}(b)). 
\end{enumerate}
The two cases are discriminated by the ${\tt tb}$ but not by the $\mu_+-\mu_-$ invariant.  
\item For the swallow-tailed triangle, 
one vertical tangency point is on the edge of one of the swallow tails 
and the other on the branch joining the two other swallow tails. 
\begin{enumerate}
\item If the swallow tail broken by a cusp at its vertical tangency point is on the right side, 
(Fig.~\ref{f:breaking_vertical_tangents}(c)), somewhat as a surprise, the Maslov index $\mu_+-\mu_-=4$ discriminates this case against the preceding ones, as does the ${\tt tb}=-3$. 
\item However, under a vertical mirror symmetry when the broken swallow tail is on the left, 
the Maslov index changes to $\mu_+-\mu_-=-4$ 
while ${\tt tb}=-3$ remains the same. 
\end{enumerate}
\end{enumerate}

There remain 2 other cases: (i) the ideal hyperbolic triangle as shown in Fig.~\ref{f:} 
and the ideal hyperbolic rhombus~\cite[Fig. 9]{adiabatic}. 

\begin{enumerate}
\item The ideal hyperbolic triangle is not co-orientable  
making the applications of the {\tt tb} invariant impossible, unless we consider its double cover. 
From there on, the Arnold analysis goes through (see Table~\ref{t:classification}). 
For the Thurston-Bennequin number, 
removal of the vertical tangent makes it co-orientable resulting in the data shown in Table~\ref{t:classification}
\item The ideal hyperbolic rhombus has its vertical tangency points at two vertically opposed cusps. 
By a Reidemeister move, two opposite branches are brought to cross, one on top of the other. 
This results in two swallow tails and  
vertical tangencies on the crossing branches. After replacing the vertical tangency points by cusps followed by a removal of the swallow tails yields the data shown in Table~\ref{t:classification}.  
\end{enumerate}

\begin{table}[t]
\begin{tabular}{|c||c|c||c|c|c|}\hline
\multicolumn{1}{|c||}{}&\multicolumn{2}{c||}{Arnold $\gamma_2$}&\multicolumn{3}{c|}{Thurston-Bennequin $\hat{\gamma}_2$}\\\hline\hline
 & winding number & Maslov $\mu=\mathrm{rot}$& writhe & $\mu$ &{\tt tb}\\\hline\hline
swallow-tailed triangle & -1 & 0 & -1 & 4 &-3 \\\hline
swallow-tailed quadrangle & -1 & 0 & -2 & 0&  -5\\\hline
ideal hyperbolic triangle & -1 & 0 & $0$ & 2 & -2\\\hline
ideal hyperbolic rhombus & -1 & 0 & 0 & 0& -1\\\hline
\end{tabular}
\caption{Classification  
of the most frequently 
occurring first excited level critical value (trigonometrically oriented) curves 
under the assumption that for the quadrangle the vertical tangency points occurs at opposite edges of swallow tails
and for the triangle only one tangency point occurs at the edge of a swallow tail: 
the swallow-tailed triangle~\cite[Fig. 4]{adiabatic}, 
the swallow-tailed quadrangle~\cite[Fig. 5]{adiabatic}, 
the hyperbolic triangle,
and the ideal hyperbolic \emph{rhombus}~\cite[Fig. 9]{adiabatic}.}  
\label{t:classification}
\end{table}

\subsubsection{``New" invariants: Chekanov-Pushkar}

Two Legendrian isotopic knots have the same classical invariants, but the converse is not, in general, true:  
Chekanov and Eliashberg found a pair of knots~\cite[Fig. 2(c)]{augmentations_and_rulings_of_Legendrian_knots} 
that are not equivalent but share the same ${\tt tb}=1$ and $\mathrm{rot}=0$. 
Among the ``new" invariants able to make a sharper discrimination, 
one will mention the Chekanov-Eliashberg 
differential algebra approach~\cite{Chekanov_original,Eliashberg_original,Chekanov_Eliashberg_invariants,
C-E_invariants_knots,computable_Legendrian_invariants}
and the Chekanov-Pushkar combinatorics of decomposition of fronts~\cite{fronts_of_Legendrian_links} 
along with Traynor's generating families~\cite{generating_function_polynomials}.  
The latter meshes better with the current development. 
These ``new" invariants will be examined in a further paper. 

\subsection{Knotting and linking}
\label{s:knotting_linking}

\begin{figure}[t]
	\centering
	\mbox{
\subfigure[case of one smooth section of curve intersected by two btranches of a cusp. In the contact space such curves are {\bf not linked.}]{\scalebox{0.5}{\includegraphics{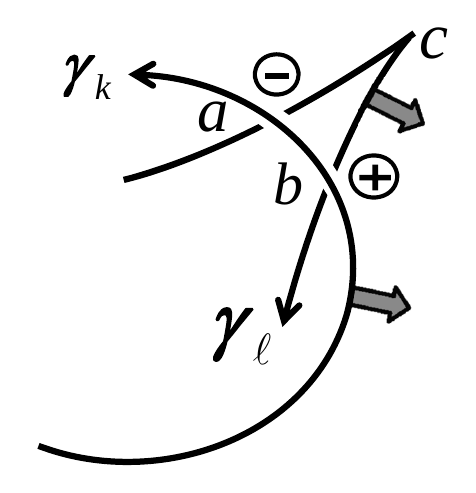}}}\quad		
\subfigure[Case of smooth sections of curves between intersection points. In the contact space such curves are {\bf linked.}]{\scalebox{0.5}{\includegraphics{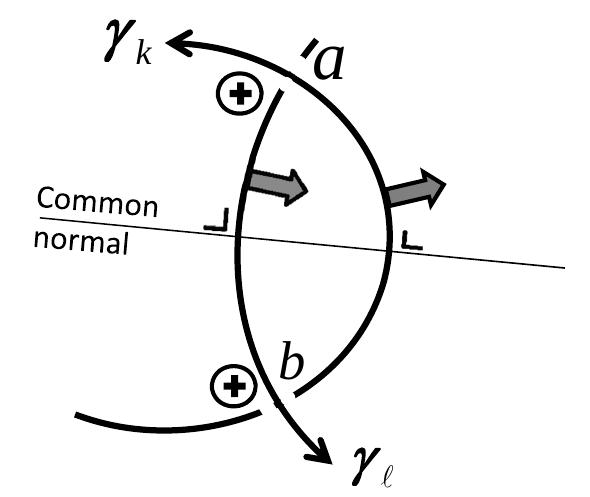}}}\quad
		\subfigure[Case of one smooth section of curve intersected by the two branches of a swallow tail. In the contact space such curves are {\bf linked.}]{\scalebox{0.5}{\includegraphics{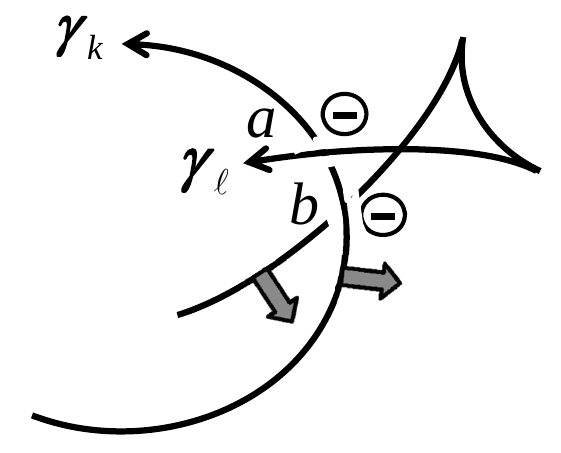}}}
	}
	\caption{Three possible cases of intersection of critical value curves in $\mathbb{C}$ crucially 
depending on the orientation (arrow along the curve) and the co-orientation (arrow normal to the curve)}
	\label{f:over_under}
\end{figure}

This subsection introduces the self-knotting of a single critical value curve and the linking number among many critical value curves in the contact space with contact structure $dx \cos \theta + dy \sin \theta$ 
that allows vertical tangents.  

\subsubsection{ground and first-excitation levels}

\begin{theorem}
The Legendrian ``knot" of the first excited critical value curve in the contact space is unknotted.
\end{theorem}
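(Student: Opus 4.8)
\noindent The plan is to exploit the fact that the lift of the first excited critical value curve to the contact space uses the parameter $\theta$ itself as the $\mathbb{S}^1$-coordinate: by construction the immersion is $\theta \mapsto \bigl(x_2(\theta),y_2(\theta),\theta \bmod 2\pi\bigr)$, so under the diffeomorphism $M^3\cong\mathbb{C}\times\mathbb{S}^1$ the lifted curve is nothing but the graph of the smooth map $\theta\mapsto(x_2(\theta),y_2(\theta))$, i.e.\ a section of the trivial $\mathbb{C}$-bundle $\mathbb{C}\times\mathbb{S}^1\to\mathbb{S}^1$. I would then contract this section fiberwise onto the zero section $\mathbb{S}^1\times\{0\}$, which is the standard unknot in $M^3$, thereby exhibiting the Legendrian knot as unknotted.

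In detail: first I would record that the lifted curve is a genuine smooth embedded circle --- it is an immersion because the third coordinate $\theta$ has derivative $1\neq 0$ everywhere (the dissingularization already noted when the lift was introduced), and it is injective because the third coordinate alone separates its points. Next I would write down the fiberwise straight-line homotopy $H_t(\theta)=\bigl((1-t)x_2(\theta),\,(1-t)y_2(\theta),\,\theta \bmod 2\pi\bigr)$, $t\in[0,1]$, and check that every $H_t$ is again a smooth embedding of $\mathbb{S}^1$ into $M^3$ for exactly the same two reasons (the third coordinate is untouched). Since $H_0$ is the lifted $\gamma_2$ and $H_1$ is the zero section, $\{H_t\}$ is an isotopy of embedded circles; invoking the isotopy extension theorem upgrades it to an ambient isotopy, so the Legendrian knot of $\gamma_2$ has the same knot type as $\mathbb{S}^1\times\{0\}$, i.e.\ is unknotted.

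The step that needs the most care --- and the only genuine ``obstacle'' --- is pinning down what ``unknotted'' should mean here, since $M^3\cong\mathbb{C}\times\mathbb{S}^1$ is not simply connected and $\gamma_2$ wraps exactly once around the $\mathbb{S}^1$ factor (its index is $\pm 1$). The correct reading is that $\gamma_2$ is isotopic to the trivial representative $\mathbb{S}^1\times\{\mathrm{pt}\}$ of that free homotopy class, which is precisely what the contraction above produces. It is worth emphasizing that $H_t$ is not a Legendrian isotopy (it destroys $\alpha(\gamma')=0$), so this result is strictly weaker than saying $\gamma_2$ is Legendrian-isotopic to a standard Legendrian unknot; the finer Legendrian classification is exactly what the writhe, Maslov and Thurston-Bennequin invariants of Table~\ref{t:classification} are for, and those do distinguish the various gap morphologies. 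Once the ``graph over the $\theta$-circle'' structure is recorded, however, the topological triviality is immediate, and the winding of $(x_2,y_2)$ inside the $\mathbb{C}$-factor contributes no knotting in the trivial bundle.
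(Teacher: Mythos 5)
Your argument is correct, but it is a genuinely different route from the paper's. The paper stays entirely in the planar front: it treats the critical value curve with its swallow tails as a link diagram, removes each swallow-tail crossing by a Reidemeister Type~I move to obtain a simple closed plane curve, and concludes unknottedness from the invariance of knot type under Reidemeister moves. You instead work directly in the contact space $M^3\cong\mathbb{C}\times\mathbb{S}^1$, observe that the lift $\theta\mapsto(x_2(\theta),y_2(\theta),\theta\bmod 2\pi)$ is by construction a section of the trivial bundle over the $\theta$-circle, and contract it fiberwise (the fibers being convex) to the zero section, with the isotopy extension theorem upgrading this to an ambient isotopy. Your approach buys two things the paper's does not make explicit: (i) it sidesteps the question of whether the front with swallow tails is a legitimate knot diagram for a knot in $\mathbb{C}\times\mathbb{S}^1$ and whether the crossing removal is an allowable move there, and (ii) it pins down precisely what ``unknotted'' means in this non-simply-connected ambient manifold, namely isotopic to $\{\mathrm{pt}\}\times\mathbb{S}^1$, which is the right reading since the curve has index $\pm1$ and so cannot bound a disk in $M^3$. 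The paper's diagrammatic argument, for its part, is more uniform with the rest of its Legendrian-front machinery (writhe, {\tt tb}, Reidemeister moves) and generalizes verbatim to fronts with arbitrarily many swallow tails without having to re-examine the embedding. Your closing remark that the contraction is a smooth but not Legendrian isotopy, so the statement is strictly about topological knot type, is an accurate and worthwhile clarification that the paper leaves implicit.
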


\begin{proof}

Consider a planar critical value curve made up with swallow tail singularities. 
Every swallow tail can be removed by a  Legendrian 
Reidemeister move~\cite[Fig.10]{contact_geometry_another_summary}, 
leaving a simple closed curve in the plane. 
Since the Reidemeister moves do not affect the knotting number~\cite[Chap. 3]{Lickorish1997}, 
the original critical value curve with swallow tails is unknotted. 
In case the critical value curve is an ideal quadrangle~\cite{adiabatic}, 
the result is obvious. 
\end{proof}

\begin{definition}[\cite{Lickorish1997}, Def. 1.4]
The \textbf{\textit{linking number}} $\mathrm{lk}(\bar{\gamma}_k,\bar{\gamma}_\ell)$ of two curves in the contact space is $\tfrac{1}{2}$ the sum of the signs of their crossings, where the ``sign" is defined as in Definition~\ref{d:writhe}.
\end{definition}

\begin{theorem}
Assume the gap never closes, that is, $\lambda_2(\theta)-\lambda_1(\theta) > 0$, $\forall \theta$. 
Then the ground state and the first excited state critical value curves in the contact space are not linked, 
that is $\mathrm{lk}(\bar{\gamma}_1,\bar{\gamma}_2)=0$. 
\end{theorem}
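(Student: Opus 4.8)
The plan is to reduce the statement to one geometric fact: under the no-closing hypothesis the two Legendrian fronts $\gamma_1=\partial\mathcal F(H)$ and $\gamma_2$ are \emph{disjoint} as curves in $\mathbb C$. Indeed, the linking number is read off the front diagram as $\mathrm{lk}(\bar\gamma_1,\bar\gamma_2)=\tfrac12$ times the signed count of the mutual crossings of the two fronts; if the fronts never meet in $\mathbb C$ there are no crossings, the sum is empty, and $\mathrm{lk}(\bar\gamma_1,\bar\gamma_2)=0$. So the whole content is the assertion ``gap never closes $\Rightarrow$ $\gamma_2$ stays in the interior of $\mathcal F(H)$'', i.e. the genericity foreshadowed in the subsection on exact crossings, where an exact crossing was identified with the first excited curve touching the boundary.

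To prove that assertion I would argue by contradiction, using two facts already in the text. First, every point of a critical value curve has the form $\langle z|H|z\rangle$ for a unit eigenvector $z$, so $\gamma_2\subset\mathcal F(H)$. Second, since $\Herm(e^{-\imath\theta}H)=H_\theta$, the supporting line of the convex set $\mathcal F(H)$ with inner normal $e^{\imath\theta}$ sits at level $\lambda_1(\theta)=\min_{\|z\|=1}\langle z|H_\theta|z\rangle$ and touches $\partial\mathcal F(H)$ exactly at $\gamma_1(\theta)$; moreover the orthogonal projection of any $\langle z|H|z\rangle\in\mathbb C$ onto $\mathbb R e^{\imath\theta}$ equals $\Re(e^{-\imath\theta}\langle z|H|z\rangle)=\langle z|H_\theta|z\rangle$. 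Now suppose $p\in\gamma_1\cap\gamma_2$, say $p=\gamma_1(\theta_1)=\gamma_2(\theta_2)$ with $p=\langle z_2(\theta_2)|H|z_2(\theta_2)\rangle$. Projecting $p$ onto $\mathbb R e^{\imath\theta_1}$ yields $\langle z_2(\theta_2)|H_{\theta_1}|z_2(\theta_2)\rangle=\lambda_1(\theta_1)$, so the unit vector $z_2(\theta_2)$ attains the minimum Rayleigh quotient of $H_{\theta_1}$ and is a ground eigenvector of $H_{\theta_1}$.

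If $\theta_1\equiv\theta_2\pmod\pi$ this already contradicts the hypothesis: for $\theta_2=\theta_1$ the same vector would be a $H_{\theta_1}$-eigenvector for both $\lambda_1(\theta_1)$ and $\lambda_2(\theta_1)$; for $\theta_2=\theta_1+\pi$, using $H_{\theta_1+\pi}=-H_{\theta_1}$, it would be a top eigenvector of $H_{\theta_2}$, forcing $\lambda_2(\theta_2)=\lambda_N(\theta_2)$, vacuous only in the exceptional case $N=2$ (where $\gamma_2$ is merely $\gamma_1$ reparameterised by $\theta\mapsto\theta+\pi$ and must be treated separately). Otherwise $H_{\theta_1}$ and $H_{\theta_2}$ span $\mathrm{span}_{\mathbb R}\{H_0,H_1\}$, so $z:=z_2(\theta_2)$ is a common eigenvector of $H_0$ and $H_1$, hence an eigenvector of every $H_\theta$ along a \emph{smooth} branch $\ell(\theta)=h_0\cos\theta+h_1\sin\theta$ with $\ell(\theta_1)=\lambda_1(\theta_1)$ and $\ell(\theta_2)=\lambda_2(\theta_2)>\lambda_1(\theta_2)$. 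Since any eigenvalue branch that differs from $\lambda_1$ is at least $\lambda_2$, the continuous function $\ell-\lambda_1$ equals $0$ at $\theta_1$ yet is $\ge\min_\theta(\lambda_2(\theta)-\lambda_1(\theta))>0$ wherever $\ell\ne\lambda_1$, which is impossible on the connected circle unless $\lambda_2-\lambda_1$ vanishes somewhere --- contrary to hypothesis. Hence $\gamma_1\cap\gamma_2=\emptyset$, $\gamma_2\subset\mathrm{int}\,\mathcal F(H)$, and $\mathrm{lk}(\bar\gamma_1,\bar\gamma_2)=0$.

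The genuine obstacle is this interior-containment step --- turning ``no exact crossing'' into ``$\gamma_2$ does not touch $\partial\mathcal F(H)$'' --- where one must exclude a stray common eigenvector of $H_0$ and $H_1$ and dispose of the $\theta_1\equiv\theta_2\pmod\pi$ degeneracies (especially $N=2$). Everything downstream of ``the fronts are disjoint'' is immediate from the front-diagram definition of $\mathrm{lk}$.
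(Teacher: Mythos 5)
Your reduction is precisely the paper's entire proof: the published argument consists of the single sentence that for the lifted curves to be linked they would have to cross in the plane, which is taken not to be the case. Everything you add is the justification of that disjointness claim, which the paper simply asserts. Most of your added material is sound: the observation that a point of $\gamma_1\cap\gamma_2$ forces $z_2(\theta_2)$ to minimize the Rayleigh quotient of $H_{\theta_1}$ (hence to be a ground eigenvector of $H_{\theta_1}$), and the connectedness argument excluding a common eigenvector of $H_0$ and $H_1$ under the gap hypothesis, are both correct and go beyond what the paper proves.

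Two sub-cases of your analysis are not actually closed, however. In the antipodal case $\theta_2=\theta_1+\pi$ you derive $\lambda_2(\theta_2)=\lambda_N(\theta_2)$ and treat this as a contradiction for $N\ge 3$, but the hypothesis only forbids $\lambda_2=\lambda_1$; a coincidence $\lambda_2(\theta_2)=\cdots=\lambda_N(\theta_2)$ at an isolated $\theta_2$ is perfectly consistent with the gap never closing, so this branch needs a separate (e.g.\ genericity) argument rather than being declared contradictory. And for $N=2$ the two fronts coincide as point sets (since $\gamma_2(\theta)=\gamma_1(\theta+\pi)$), so the disjointness claim is simply false there and the front-diagram computation of $\mathrm{lk}$ is degenerate; you flag this case but do not treat it. Since the paper itself leaves the disjointness step unproved, these loose ends do not put you at odds with the published argument, but they should be stated as residual hypotheses or disposed of by a genericity appeal rather than presented as finished contradictions.
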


\begin{proof}
 Indeed, for the curves in the contact space to be linked, they would have to cross in the plane, 
that is, the first two energy levels would have to cross, which is not the case. 
\end{proof}

\subsubsection{Higher excitation levels}

The constraints imposed on the ground level and first excited level critical value curves become relaxed in regard to the first and higher excitation levels, 
where more complicated crossing phenomena can occur in the front~\cite{adiabatic}. 
A relevant such intersection is a smooth branch of a $\gamma_k$ curve intersecting both edges $\gamma_\ell$ of a cusp as shown in Fig.~\ref{f:over_under}(a). Such intersection appears in the adiabatic 
approach to the quantum hitting time of a Markov chain~\cite[Sec. 6]{adiabatic}. 

\begin{theorem}\label{t:cusp}
Consider two $\theta$-parameterized critical value curves $\gamma_k$, $\gamma_\ell$, 
restricted to $\Theta_k \subset [0,2\pi)$, $\Theta_\ell \subset [0,2\pi)$, 
where $\gamma_\ell(\Theta_\ell)$ contains a cusp $c$ but is otherwise smooth and $\gamma_k(\Theta_k)$ is smooth with its curvature bounded from above, with $\gamma_k$ crossing the two branches of the cusp at  $a=\gamma_k(\theta_{ka})=\gamma_\ell(\theta_{\ell,a})$ and 
$b=\gamma_k(\theta_{kb})=\gamma_\ell(\theta_{\ell,b})$. 
Then one curve, say $\gamma_k$, will be completely ``over" or completely ``under" $\gamma_\ell$ 
with opposite signs of crossings, as defined in Definition~\ref{d:writhe}. 
Hence, in the contact space, $\mathrm{lk}(\bar{\gamma}_k,\bar{\gamma}_\ell)=0$, 
that is, the two Legendrian curves are not linked.  
\end{theorem}

\begin{proof}
 Essentially, we have to show that 
\[\mbox{sign}(\theta_{k,a}-\theta_{\ell,a})=\mbox{sign}(\theta_{k,b}-\theta_{\ell,b})\]  
By the trivial move of bringing $\gamma_k$ arbitrarily close to, but not crossing, the cusp $c$, 
a move  that doesn't change the linking, we show that the difference
\[(\theta_{k,a}-\theta_{\ell,a})-(\theta_{k,b}-\theta_{\ell,b})=
(\theta_{k,a}-\theta_{k,b})-(\theta_{\ell,a}-\theta_{\ell,b})\]
can be made so small as to keep the signs equal. 
Moreover, by the same move, one can cancel $\theta_{k,a}-\theta_{k,b}$. 
It follows that 
\[|(\theta_{k,a}-\theta_{\ell,a})-(\theta_{k,b}-\theta_{\ell,b})|\leq |(\theta_{\ell,a}-\theta_{\ell,b})|\]
It is claimed that as $\gamma_k$ gets arbitrarily close to the cusp, 
one can achieve the inequality
\[|\theta_{\ell,a}-\theta_{\ell,b}| < \min \{|\theta_{k,a}-\theta_{\ell,a}|,|\theta_{k,b}-\theta_{\ell,b}|\}\]
Indeed, as $\gamma_k$ converges to the cusp, the LHS decreases to zero while the RHS increases to 90 deg. 
It follows that
\[|(\theta_{k,a}-\theta_{\ell,a})-(\theta_{k,b}-\theta_{\ell,b})|
< \min \{|\theta_{k,a}-\theta_{\ell,a}|,|\theta_{k,b}-\theta_{\ell,b}|\}\]
A simple contradictory argument shows that if the signs in the LHS are different, then the inequality is violated.  
This implies that $(\theta_{k,a}-\theta_{\ell,a})$ and $(\theta_{k,b}-\theta_{\ell,b})$ have the same sign.
\end{proof}

\begin{theorem}\label{t:smooth_curve}
Consider two $\theta$-parameterized critical value curves $\gamma_k$, $\gamma_\ell$, 
restricted to $\Theta_k \subset [0,2\pi)$, $\Theta_\ell \subset [0,2\pi)$, 
where $\gamma_k(\Theta_k)$ and $\gamma_\ell(\Theta_\ell)$ are smooth, with the two curves crossing at  $a=\gamma_k(\theta_{ka})=\gamma_\ell(\theta_{\ell,a})$ and 
$b=\gamma_k(\theta_{kb})=\gamma_\ell(\theta_{\ell,b})$. 
Then  
one curve, say $\gamma_k$, will intersect the other $\gamma_\ell$ ``over" (``under") at the point $a$ 
and then reintersect $\gamma_\ell$ ``under" (``over") at the point $b$ 
with consistent signs of crossings, as defined in Definition~\ref{d:writhe}. 
Hence, in the contact space, $\mathrm{lk}(\bar{\gamma}_k,\bar{\gamma}_\ell)=\pm 1$, 
that is, the two Legendrian curves are linked.  
\end{theorem}

\begin{proof}
The proof relies on the common normal to two smooth curves. 
(Existence of the common normal follows from existence of a smooth solution to 
$\max_{\theta_k,\theta_\ell} |\gamma_k(\theta_k)-\gamma_\ell(\theta_\ell)|.$) 
The ``over" versus ``under" intersection dichotomy 
depends essentially on the sign of the inequality between the arguments of the co-orientation vectors at the intersection point. 
Namely, $\gamma_k$ is ``over" (``under") $\gamma_\ell$ at the first intersection 
if $\theta_{k,a} > \theta_{\ell,a}$ ($\theta_{k,a} < \theta_{\ell,a}$), 
with the same criterion at the second intersection.  
Between the  
first and the second intersection, 
there exist points $\theta_k \in \Theta_k$, $\theta_\ell \in \Theta_\ell$ 
where the co-orientation vectors of $\gamma_k$ and $\gamma_\ell$,   
become aligned with the common normal, that is, $\theta_k=\theta_\ell$.  
At that point, the inequality is changed and the ``over" (``under") at the first intersection point 
becomes ``under" (``over") at the second intersection point. 
\end{proof}

\begin{corollary}\label{c:swallow_tail}
Consider two $\theta$-parameterized critical value curves $\gamma_k$, $\gamma_\ell$, 
restricted to $\Theta_k \subset [0,2\pi)$, $\Theta_\ell \subset [0,2\pi)$, 
where $\gamma_k(\Theta_k)$ is smooth and $\gamma_\ell(\Theta_\ell)$ 
contains a swallow tail (that is, a crossing and two cusps) outside the convex hull of $\gamma_k(\Theta_k)$, 
with the two curves crossing at  $a=\gamma_k(\theta_{k,a})=\gamma_\ell(\theta_{\ell,a})$ and 
$b=\gamma_k(\theta_{k,b})=\gamma_\ell(\theta_{\ell,b})$. 
Then  
one curve, say $\gamma_k$, will intersect the other $\gamma_\ell$ ``over" (``under") at the point $a$ 
and then reintersect $\gamma_\ell$ ``under" (``over") at the point $b$ 
with consistent signs of crossings, as defined in Definition~\ref{d:writhe}. 
Hence, in the contact space, $\mathrm{lk}(\bar{\gamma}_k,\bar{\gamma}_\ell)=\pm 1$, 
that is, the two Legendrian curves are linked.  
\end{corollary}

\begin{proof}
By a Legendrian Reidemeister  move, the swallow tail can be metamorphosed to a smooth curve. 
The result then follows from Th.~\ref{t:smooth_curve}. 
\end{proof}

\section{Hamming weight plus barrier: \\emergence of swallow tail}
\label{s:Hamming_plus_weight}

One of the landmark problems amenable to adiabatic quantum  computation is the Quadratic Binary Optimization (QUBO) problem 
$\min_{x\in\{0,1\}^n} f(x)$, where $f(x)=\sum_{i=1}^n h_ix_i+\sum_{i,j} J_{ij}x_ix_j$. 
The $J_{ij}$'s define a graph 
$(\mathcal{V},\mathcal{E})$ on $n$ vertices ($n$ spins) with the edges defined as $ij \in \mathcal{E}$ if and only if $J_{ij} \ne 0$, 
with the hope that this graph is (minor) embeddable in, say, the Chimera architecture~\cite{Chi_embedding}. 
The quantum adiabatic solution proceeds from the encoding of $x_i\in \{0,1\}$ in states of excitation, 
$|0_i\>=\left(\begin{array}{c}
1 \\
0
\end{array}\right)$ and 
$|1_i\>=\left(\begin{array}{c}
0 \\
1
\end{array}\right)$
with reference to the ``Pauli" operator $s^z=\left(\begin{array}{cc}
0 & 0 \\
0 & 1
\end{array}\right)=\frac{1}{2}(I-\sigma^z)$, where $\sigma^z$ is the usual Pauli operator. 
With the latter conventions, we construct a Hermitian matrix $H_f$, realizable as the Hamiltonian of an Ising network with couplings $J_{ij}$ and localized bias fields $h_i$. 
To set up an adiabatic process that reaches 
the ground state of $H_1:=H_f$, we introduce a transverse field Hamiltonian $H_0:=\sum_{i=1}^n S^x_i$, 
where $S^x_{i}=I^{\otimes(i-1)}\otimes s^x \otimes I^{\otimes (n-i)}$, 
where $s^x$ is the ``Pauli" operator $s^x=\tfrac{1}{2}\left(\begin{array}{cc}
1  & -1\\
-1 & 1
\end{array}\right)=\tfrac{1}{2}(I-\sigma^x)$, where $\sigma^x$ is the usual Pauli operator. 

The following result indicates the energy level of any combination of ``spins up" and ``spins down" is constant at given $n$:
\begin{proposition}
For any $n$-compounded tensor product combination of $\ket{0_i}$ and $\ket{1_i}$, the energy level, 
denoted $E(H_0)$ equals $n/2$. 
\end{proposition}
\begin{proof}
It is easy to prove that in the ``up"-``down" sequence a flip $\ket{\cdots \uparrow  \cdots }$ to 
$\ket{\cdots \downarrow  \cdots }$ does not change the energy. Therefore, with a finite number of such flips, 
we can go from any $n$-compounded tensor product to any other. To prove that $E(H_0)=n/2$, it suffices to check it 
for an arbitrary $n$-compounded tensor product.
\end{proof}

\subsection{Hamming weight plus barrier QUBO}

Specifically here, the objective function is defined as the ``Hamming weight plus barrier," $f(x)=w(x)+p(w(x))$, 
where $w(x)=\sum_{i=1}^n x_i$ is the Hamming weight and $p(w(x))$ is a rectangular-shaped barrier depending 
only on the Hamming weight:  
\begin{eqnarray*}
p(w(x)) &=& h, \quad \mbox{if    } \ell \leq w(x) \leq u, \\
&=&0, \quad \mbox{otherwise}.
\end{eqnarray*}
The $H_f$-Hamiltonian~\cite{reichardt-adiabatic} 
is defined as the Hermitian matrix having eigenvalue $f(x)$ associated with eigenvector $|x\>$. 
The plot of the function $f(x(w))$ versus the Hamming weight $w$ is shown in the black solid lines of  Fig.~\ref{f:Hamming_plus_barrier_geometry}. 
Since $H_{f}|x\rangle =f(x)|x\rangle$, this plot can be interpreted as the plot of eigenvalues 
(energy levels) of $H_f$ 
as a function of the Hamming weight. 
More physically speaking, when $w=0$, all spins are ``down," in the state  $|0_i\>$; 
as $w$ increases, more and more spins flip to the ``up" state $|1_i\>$.

\begin{figure}
\begin{center}
\scalebox{0.6}{\includegraphics{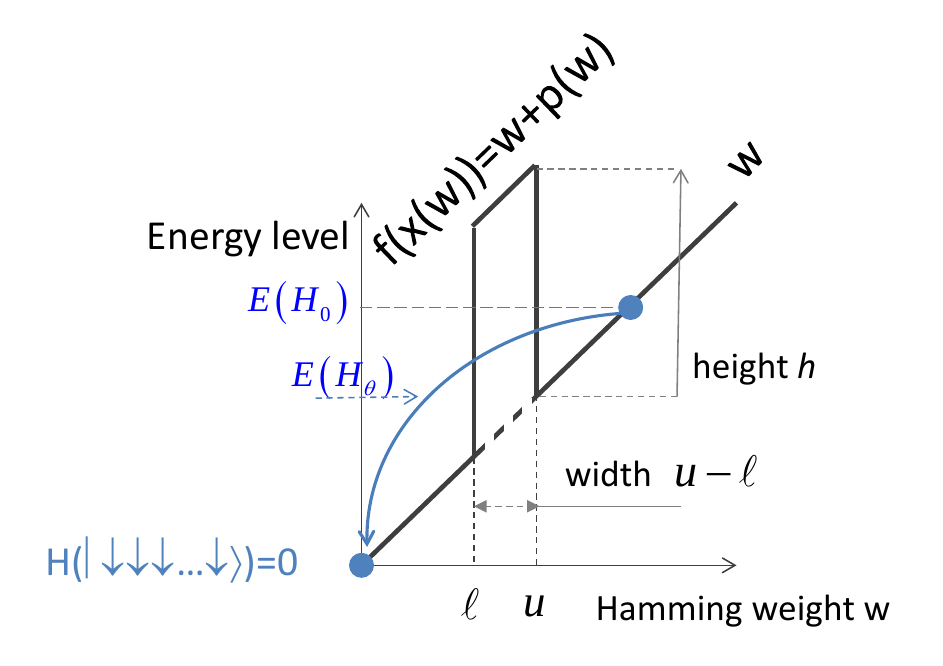}}
\end{center}
\caption{The ``Hamming weight plus barrier" energy level versus Hamming weight. 
The blue dotted cyclic arrow represents the ground state of an adiabatic process 
barely penetrating the barrier.}
\label{f:Hamming_plus_barrier_geometry}
\end{figure}
Our  objective is to show that by varying the width $u-\ell$ and the height $h$ of the barrier at fixed transversed field $\sum_i S_i^x$, 
we can develop a portfolio of illustrative examples from the constant gap to the steep gap case. 
Furthermore, it will serve to illustrate the connection with swallow tails and tunneling. 

We now proceed to construct the Hamiltonian $H_f$. 
The Hamiltonian associated with the Hamming weight is easily found to be 
\[ H_w=\sum_{i=1}^n S^z_i , \quad S^z_i=I^{\otimes(i-1)}\otimes s^z \otimes I^{\otimes(n-i)}. \] 
It is indeed easily see that $H_w|x_1,...,x_n\>=w(x)|x_1,...,x_n\>$. 

The Hamiltonian associated with $p(w)$ is formally defined as $p(H_w)$. 
Practically, 
we construct a polynomial $\tilde{p}$ that approximately agrees with $p$ on the spectrum of $H_w$, 
that is, $\sup_{i}|\tilde{p}(\lambda_i(H_w))- p(\lambda_i(H_w))|\leq \epsilon$, from which there holds 
the spectral norm bound $\|p(H_w) -\tilde{p}(H_w)\| \leq \epsilon$. 
For example, a simple approximation $\tilde{p}$ could be 
\[ \tilde{p}(w)= h\frac{w(w-1)(w-2)...(w-\ell+1)(w-u-1)...(w-n)}{m(m-1)(m-3)...(m-\ell+1)(m-u-1)...(m-n)}, \]
where $m=(\ell+u)/2$. Therefore, 
\begin{eqnarray*}
\lefteqn{H_{p(w)}\approx}\\
&&h\frac{H_w(H_w-I)(H_w-2I)...(H_w-(\ell-1)I)(H_w-(u+1)I)...(H_w-nI)}
{m(m-1)(m-3)...(m-\ell+1)(m-u-1)...(m-n)},   
\end{eqnarray*}
with spectral norm error bound $\epsilon=\sup_{\ell\leq w \leq u}|\tilde{p}(w)-h|$. 
Invoking the Lagrange interpolation, there exists a sequence of approximating polynomials 
such that $|\tilde{p} - p| \to 0$ 
uniformly on the compact set $[0,n]$; 
it then follows that $\|\tilde{p}(H_w) - p(H_w)\|\to 0$, uniformly over the compact set.  
 
Another method proceeds from the observation that, for $w=0,...,n$,  we have \linebreak
$p(w)=(1/2)\left(\mbox{sign}(w-(\ell-1/2))-\mbox{sign}(w-(u+1/2))\right)$, 
where $\mbox{sign}(0)=0$ and $\mbox{sign}(\mathbb{R}_\pm)=\pm 1$. 
Therefore, 
\[ p(H_w)= (1/2)\left(\mbox{sign}(H_w-(\ell-1/2)I)-\mbox{sign}(w-(u+1/2)I)\right),\]
where the sign of some Hermitian matrix $Z=V\Lambda V^*$ is defined 
as $\mbox{sign}(Z)=V\mbox{sign}(\Lambda)V^*$ with the convention that 
$\mbox{sign}(\Lambda)=\mbox{diag}\{\mbox{sign}(\lambda_i(Z))\}_{i=1}^n$.   
The sign of a general square matrix $Z$ can be efficiently computed via the matrix sign function iteration~\cite{KenneyLaubJonckheere}. 
Observe, however, that the sign iteration provides a rational, instead of a polynomial approximation. 

At this stage, we have 
\[ H_1=H_f=H_w+H_{p(w)}. \]
Regarding the transverse field Hamiltonian, we take
\[H_0=\sum_{i=1}^n S^x_i\]
noting that $[H_0,H_1]\ne 0$. 

\subsubsection{genericity}

Such a highly structured Hamiltonian as $H_w$ creates unstable singularities, 
that is, singularities that are non generic, that disappear under a vanishingly small perturbation.  
Such singularities exist in theory but are extremely difficult to exhibit in the realm of numerical computations, 
let alone in the realm of experiments. Precisely, 
\begin{theorem}
The field of values $\mathcal{F}(H_w)$ is the closed interval $[0,1]$, with the boundary points $0$ and $1$ eigenvalues 
of $H_w$. Consequently, $\mathcal{F}(H_w)$ is non generic. 
\end{theorem}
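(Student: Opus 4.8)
The plan is to reduce the interval statement to the elementary fact that the numerical range of a Hermitian matrix is the convex hull of its spectrum, and then to extract the non-genericity from the degenerate shape of that hull together with Theorem~\ref{t:dramatic_eigenvectors}.

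First I would observe that $H_w=\sum_{i=1}^n S^z_i$ is already diagonal in the computational basis $\{|x\>:x\in\{0,1\}^n\}$, with $H_w|x\>=w(x)|x\>$ where $w(x)=\sum_i x_i$. Hence $H_w$ is Hermitian, its spectrum is the set of attainable Hamming weights $\{0,1,\dots,n\}$ (the value $k$ occurring with multiplicity $\binom{n}{k}$), its smallest eigenvalue is attained at $|0\cdots0\>$ and its largest at $|1\cdots1\>$. Since $H_w$ is Hermitian, $\<z|H_w|z\>$ is real for every $|z\>$, so $\mathcal{F}(H_w)\subset\mathbb{R}$; by the Toeplitz--Hausdorff theorem~\cite{Hausdorff,Toeplitz} it is compact and convex, i.e.\ a closed interval. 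The inclusion $\mathcal{F}(H_w)\subseteq[\lambda_{\min}(H_w),\lambda_{\max}(H_w)]$ is the Rayleigh quotient bound, and the reverse inclusion follows from convexity together with the fact that both endpoints are themselves of the form $\<z|H_w|z\>$ (take $|z\>$ to be the two extreme eigenvectors exhibited above). This shows that $\mathcal{F}(H_w)$ is precisely the closed interval between the extreme eigenvalues of $H_w$ and that its two boundary points are eigenvalues of $H_w$, as claimed.

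Next I would establish the non-genericity. Writing $H_w$ as $H_w+\imath\,0$, its Hermitian part $H_w$ and its (vanishing) skew-Hermitian part trivially commute and share every eigenvector, so Theorem~\ref{t:dramatic_eigenvectors} (equivalently Theorem~\ref{t:dramatic_commutator}) applies at every eigenvector $|z_1(\theta)\>$ and yields $\lambda_1(\theta)+\lambda_1''(\theta)=0$ identically in $\theta$. By the curvature interpretation of Section~\ref{s:theta_parameterization}, the ground-level critical value curve then has identically infinite curvature: it is not a smooth strictly convex boundary but collapses onto a straight segment joining the two extreme eigenvalues --- precisely the nongeneric picture of Fig.~\ref{f:unfolding}(a), here in its most degenerate form where all of $\mathcal{F}$ lies on that segment. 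As recalled in Section~\ref{s:smoothness_boundary} and analysed in Section~\ref{s:deformation}, such a configuration is destroyed by an arbitrarily small perturbation of $H_0,H_1$: the embedded segment unfolds, the sharp eigenvalue points split off, and swallow tails are born off the boundary. Hence $\mathcal{F}(H_w)$ is non generic.

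The computation of $\mathcal{F}(H_w)$ is immediate, so the step I expect to require the most care is phrasing the genericity conclusion correctly within the paper's framework: one must make explicit that ``generic'' here refers to genericity \emph{within the family of matrices $H_0+\imath H_1$} (equivalently, within cyclic $\mathcal{P}$-paths), so that ``the numerical range degenerates to a segment with eigenvalue endpoints'' is a non-dense, perturbation-unstable condition --- not merely the truism that every Hermitian matrix has an interval numerical range. With that understood, non-genericity follows directly from Theorem~\ref{t:dramatic_eigenvectors} applied to the (trivially) simultaneously diagonalizable pair $(H_w,0)$ and from the unfolding argument of Section~\ref{s:deformation}.
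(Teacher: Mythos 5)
Your proof is correct, and its first half coincides with the paper's: the paper likewise argues that the field of values of the Hermitian matrix $H_w$ is the convex hull of its eigenvalues, hence an interval with eigenvalue endpoints attained at $(|0\>)^{\otimes n}$ and $(|1\>)^{\otimes n}$. (Incidentally, your computation gives the largest eigenvalue as $n$, so the interval is $[0,n]$ rather than the stated $[0,1]$; the paper's own proof repeats the slip, and your version is the more careful one.) Where you diverge is the non-genericity step. The paper disposes of it in one line by invoking the classification of nongeneric boundary features from Proposition~\ref{p:smoothness_boundary_review}: an eigenvalue sitting on $\partial\mathcal{F}$ is a sharp point, i.e.\ a rank-$0$ critical value of the numerical range map, and a matrix possessing such eigenvalues is nongeneric by the cited corollary of \cite{JonckheereAhmadGutkin}. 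You instead route the argument through the curvature machinery of Theorems~\ref{t:dramatic_commutator} and~\ref{t:dramatic_eigenvectors}, obtaining $\lambda_1(\theta)+\lambda_1''(\theta)\equiv 0$ for the degenerate pair $(H_w,0)$ and reading off the collapse of the boundary onto a segment, then appealing to the unfolding of Section~\ref{s:deformation}. This is a legitimate alternative and has the merit of tying the non-genericity to the invariant $\lambda_1+\lambda_1''$ that the paper develops, as well as of making explicit (as you rightly stress) that ``generic'' must be read within the family $H_0+\imath H_1$ and not as a statement about Hermitian matrices in isolation; the paper's route is shorter and rests directly on the rank-$0$ critical value characterization, which is the mechanism actually responsible for the instability. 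The one point to keep an eye on in your version is that Theorem~\ref{t:dramatic_eigenvectors} is stated for the pencil $H_0\cos\theta+H_1\sin\theta$ with both matrices present, so applying it with $H_1=0$ is a slight extrapolation, though a harmless one since every vector is then a common eigenvector.
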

\begin{proof}
The field of values of a Hermitian matrix, $H_w$, is the convex hull of its eigenvalues and is hence embedded in $\mathbb{R}$. 
Moreover, it is easily seen that $0$ and $1$ are the extreme eigenvalues with eigenvectors  
$(|0\>)^{\otimes n}$ and  $(|1\>)^{\otimes n}$, resp. 
Such eigenvalues are rank $0$ critical values~\cite[Th. 2]{JonckheereAhmadGutkin} of the quadratic numerical range map 
and a matrix with such eigenvalues is nongeneric~\cite[Cor. 3]{JonckheereAhmadGutkin}. 
\end{proof}

In order to remove those unstable singularities, 
we introduce a perturbation relevant to an experimental environment, like rogue magnetic fields. 
Since the Pauli operators $S_i^z$ and $S_i^x$ are already utilized in the Ising and transverse fields, resp., 
we introduce the 
$y$-Pauli operators $S_i^y= I^{\otimes(i-1)}\otimes s^y \otimes I^{\otimes (n-i)}$, where 
$s^y=(1/2)(I-\sigma^y)=(1/2)\left(\begin{array}{cc}
1 & \imath\\
-\imath & 1
\end{array}\right)$, and $\sigma^y$ is the usual $y$-Pauli operator. 
The problem Hamiltonian is hence finalized with a small dis-singularizing $y$-field perturbation
\[  \tilde{H_1}=H_w+H_{p(w)} + \epsilon \sum_{i=1}^n r_i S^y_i,\]
where the $r_i$'s are uniformly distributed over $[-1,+1]$. 

\subsubsection{Tunneling}

The energy landscape of $H_1$ is shown by the (solid black) plot of Fig.~\ref{f:Hamming_plus_barrier_geometry}. 
The QUBO objective is to get to the minimum energy point $(w,E)=(0,0)$. 
The fact that the eigenenergy levels $E(H_0)$ do not depend on $w$ for fixed $n$  
justifies drawing the (dashed blue) horizontal line at that level (shown in blue).  
This line will intersect the $H_1$ energy plot at a certain point, $(w_0,E(H_0))$;   
$\ket{w_0}$ can be interpreted as the ``spin up, spin down" state the closest to the eigenenergy level of $H_0$. 
Therefore, the process (shown in blue) goes from $(w_0,E(H_0))$ to $(0,0)$ and will cross the barrier 
if $w_0> U$ and $h$ large enough. 

\subsection{Numerical results}
\label{s:constant_to_steep}

At constant transverse field Hamiltonian $H_0$, 
there are many parameters that can be manipulated  
in the Hamming weight plus barrier Hamiltonian $H_1$ 
in order to exhibit various phenomena.  
The parameters that will be manipulated are $\ell$, $u$, and $h$. 
While $\epsilon$ can be manipulated as well, 
contrary to other glaring cases like the Grover search, 
it does not incur much changes in the topology of this specific problem. 
The two main phenomena to be identified are (i) the emergence of swallow tails and (ii) tunneling. 
\begin{enumerate}
\item \textbf{Swallow tails:} it is shown that the higher the barrier $h$ the worse the gap, 
the closer the swallow tails of $\gamma_2$ are to the $\gamma_1$ curve. 
\item \textbf{Tunneling:} 
In Figs.~\ref{f:low_barrier_small_y}-\ref{f:barrier_low_Hamming}, 
$2.5=n/2>u=2$ so that the conditions for tunneling are met; 
however, in Fig.~\ref{f:low_barrier_small_y}, the barrier $h=0.95$ is a bit low to positively guarantee tunneling;
on the other hand, in Figs.~\ref{f:high_barrier_small_y}-\ref{f:barrier_low_Hamming}, the barrier is higher  
making the case for tunneling stronger. 
Finally, the opposite happens in Fig.~\ref{f:barrier_high_Hamming} where $2.5=n/2<u=2$, 
ruling out tunneling, with a symptomatic disappearance of the swallow tails. 

\end{enumerate}

\subsubsection{Constant gap case: $h=0$}
\label{s:constant_gap}

\begin{figure}[t]
\begin{center}
\scalebox{0.9}{\includegraphics{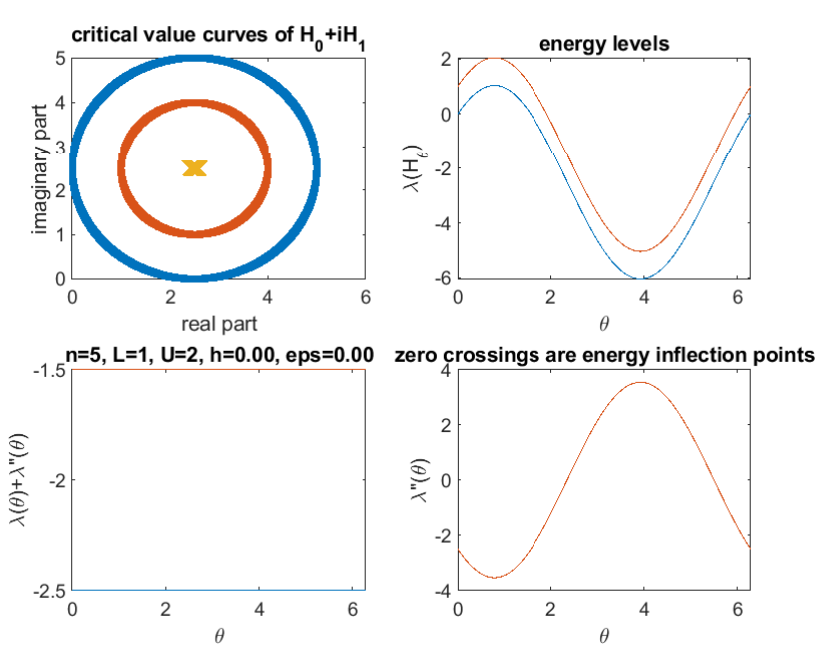}}
\end{center}
\vskip-0.9cm
\caption{\textbf{No barrier, no tunneling}: $n=5$, $\ell=1$, $u=2$, {\boldmath$h=0$}, $\delta \theta=0.001$, $\epsilon=0$ for $\theta \in [0,2\pi)$. 
Observe the big gap area between the boundary of the field of values 
and the degenerate first excitation level curve on the top left panel, 
consistently with the eigenvalue plots on the top right panel. 
The cross denotes the multiple eigenvalue of $H_0+\imath H_1$ at $0.25+\imath 0.25$.}
\label{f:first_case_no_barrier_no_y}
\end{figure}

The Hamming weight case---without barrier--- is the perfect example 
of a constant gap case, that is, (i) the energy difference between the ground state and first excited state remains constant along the homotopy, 
and (ii) the critical value curves of the ground state and the first excited state are concentric circles, with constant distance between them, 
and without swallow tails. Precisely, 
\begin{theorem}
\label{t:concentric_circles}
Consider the adiabatic evolution of $n$ spins under trigonometric scheduling evolving from the usual transverse field $H_0$ to 
the Hamming weight Hamiltonian $H_w=H_1$. Then (i) the energy levels curves are in phase cosines 
of constant energy differences and (ii) the critical value curves are concentric circles in the field of values with their center at the $2^n$-fold eigenvalue 
$n(1+\imath)/2$ of $H_0+\imath H_1$. 
\end{theorem}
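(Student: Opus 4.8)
The plan is to compute everything explicitly by exploiting the fact that $H_0=\sum_i S_i^x$ and $H_1=H_w=\sum_i S_i^z$ are each sums of commuting single-spin operators, so that the joint eigenstructure factors over the $n$ spins. First I would diagonalize the single-spin building block: since $s^x=\tfrac12(I-\sigma^x)$ and $s^z=\tfrac12(I-\sigma^z)$ each have eigenvalues $\{0,1\}$, the $2\times2$ matrix $s^x\cos\theta+s^z\sin\theta=\tfrac12\big((\cos\theta+\sin\theta)I-(\sigma^x\cos\theta+\sigma^z\sin\theta)\big)$ has eigenvalues $\tfrac12\big((\cos\theta+\sin\theta)\pm 1\big)$, because $\sigma^x\cos\theta+\sigma^z\sin\theta$ is a unit-norm Pauli combination with eigenvalues $\pm1$. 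Tensoring up, the eigenvalues of $H_\theta=H_0\cos\theta+H_1\sin\theta$ are $\tfrac{1}{2}\sum_{i=1}^n\big((\cos\theta+\sin\theta)+\varepsilon_i\big)$ with $\varepsilon_i\in\{+1,-1\}$, i.e. $\lambda(\theta)=\tfrac{n}{2}(\cos\theta+\sin\theta)+\tfrac{1}{2}\sum_i\varepsilon_i$. This immediately gives (i): every energy level is $\tfrac{n}{2}(\cos\theta+\sin\theta)=\tfrac{n}{\sqrt2}\cos(\theta-\tfrac\pi4)$ plus a half-integer constant, so all levels are in-phase cosines and all pairwise energy differences $\tfrac12\sum_i(\varepsilon_i-\varepsilon_i')$ are constants independent of $\theta$; in particular the gap $\lambda_2-\lambda_1$ is the constant $1$, attained by flipping one $\varepsilon_i$.

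For (ii) I would feed $\lambda_k(\theta)=\tfrac{n}{2}(\cos\theta+\sin\theta)+c_k$ into the parametric critical-value-curve formulas derived in Section~\ref{s:theta_parameterization}, namely $x_k(\theta)=\lambda_k\cos\theta-\lambda_k'\sin\theta$, $y_k(\theta)=\lambda_k\sin\theta+\lambda_k'\cos\theta$. Since $\lambda_k'(\theta)=\tfrac{n}{2}(-\sin\theta+\cos\theta)$, a short computation collapses the trigonometric cross-terms and yields $x_k(\theta)=\tfrac n2+c_k\cos\theta$, $y_k(\theta)=\tfrac n2+c_k\sin\theta$ — a circle of radius $|c_k|$ centered at $\tfrac n2(1+\imath)$. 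The centers coincide for all $k$, so the curves are concentric circles; consecutive radii differ by $\tfrac12$, giving constant spacing, and because $\lambda_k(\theta)+\lambda_k''(\theta)=c_k$ is a nonzero constant for the ground and first excited levels (here $c_1<c_2$ and neither is $0$ once we account for the level ordering), Equations~\eqref{e:singular_points_1}--\eqref{e:singular_points_2} show there are no singular points, hence no cusps and no swallow tails. To finish I would identify the common center $\tfrac n2(1+\imath)$ as the eigenvalue of $H_0+\imath H_1$ on the symmetric state: the vector built by tensoring $n$ copies of the $(+)$-eigenvector-type combination is a joint eigenvector making $\langle z|H_0|z\rangle=\langle z|H_1|z\rangle=\tfrac n2$, and by dimension count this eigenvalue of $H_0+\imath H_1$ has multiplicity $2^n$ when one tracks that the ``interior'' critical value at radius $0$ is the image of that whole eigenspace.

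The main obstacle I anticipate is not the algebra, which is routine, but the bookkeeping of the global constant $c_k$: one must check that in the natural ordering $\lambda_1\le\lambda_2\le\cdots$ the radii $|c_k|$ are genuinely distinct and arranged concentrically (the map $\varepsilon\mapsto\sum\varepsilon_i$ is highly degenerate, so each radius value is attained by many eigenvectors), and one must verify that the ``concentric circles'' picture is consistent with the convexity and boundary structure of $\mathcal F(H_0+\imath H_1)$ — i.e. that the outermost circle really is $\partial\mathcal F$ and the next one in really is the first excited critical value curve in the sense of Section~\ref{s:diff_category}. I would handle this by noting that the maximal $|c_k|$ corresponds to $\sum_i\varepsilon_i$ extremal, giving the extreme-eigenvalue (boundary) curve, and the next value of $\sum_i\varepsilon_i$ gives the first interior curve, with the degeneracy at each radius being exactly the source of the claimed $2^n$-fold structure at the center.
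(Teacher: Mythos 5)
Your part (i) is exactly the paper's argument: diagonalize the single-spin block $s^x\cos\theta+s^z\sin\theta$, get $\lambda_\pm(\theta)=\tfrac12(\cos\theta+\sin\theta\pm1)$, and tensor up. For part (ii) you take a genuinely different (and arguably cleaner) route: the paper computes the critical values directly as $\<\otimes^n v_\pm(\theta)|H_0+\imath H_1|\otimes^n v_\pm(\theta)\>$ and eliminates $\theta$, whereas you feed $\lambda_k(\theta)=\tfrac n2(\cos\theta+\sin\theta)+c_k$ into the envelope parametrization $x_k=\lambda_k\cos\theta-\lambda_k'\sin\theta$, $y_k=\lambda_k\sin\theta+\lambda_k'\cos\theta$ of Section~\ref{s:theta_parameterization}. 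Both give circles of radius $|c_k|=|n_+-n_-|/2$ centered at $\tfrac n2(1+\imath)$; your route has the bonus of reading off $\lambda_k+\lambda_k''=c_k$ immediately, confirming the absence of cusps, and it sidesteps a sign slip in the paper's displayed imaginary part.

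The one soft spot is the claim that $\tfrac n2(1+\imath)$ is the \emph{$2^n$-fold eigenvalue} of $H_0+\imath H_1$. Exhibiting a state $|z\>$ with $\<z|H_0|z\>=\<z|H_1|z\>=\tfrac n2$ only places the point in the numerical range; for the non-normal matrix $H=H_0+\imath H_1$ that does not make it an eigenvalue, and your ``dimension count'' over ``the whole eigenspace'' conflates algebraic and geometric multiplicity (the eigenspace is in fact one-dimensional, since the single-spin block is defective). The clean argument, which the paper gives, is that $s^x+\imath s^z$ has characteristic polynomial $\lambda^2-(1+\imath)\lambda+\imath/2$ with zero discriminant, hence a double eigenvalue at $(1+\imath)/2$; because the $n$ summands $I^{\otimes(i-1)}\otimes(s^x+\imath s^z)\otimes I^{\otimes(n-i)}$ commute, the spectrum of the sum consists of sums of single-spin eigenvalues, so $H$ has $\tfrac n2(1+\imath)$ as its sole eigenvalue with algebraic multiplicity $2^n$. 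With that substitution your proof is complete.
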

\begin{proof}
Recall that in the ``no barrier" case, 
both $H_0$ and $H_1$ are of the form $\sum_{i=1}^n I^{\otimes(i-1)}\otimes A \otimes I^{\otimes (n-i)}$, 
where $A=s^x$ for $H_0$ and $A=s^z$ for $H_1$. Therefore, 
$H=\sum_{i=1}^n I^{\otimes(i-1)}\otimes (s^x+\imath s^z) \otimes I^{\otimes (n-i)}$ 
and $H_\theta=\sum_{i=1}^n I^{\otimes(i-1)}\otimes (s^x\cos \theta +s^z \sin \theta)\otimes I^{\otimes (n-i)}$. 
It is easily seen that the eigenvalues of $s^x \cos \theta + s^z \sin \theta$ are 
$\lambda_{\pm}(\theta)=(1/2)(\cos \theta + \sin \theta \pm 1)$ with eigenvectors 
$v_{\pm}(\theta)=\left(\begin{array}{c}
\mp \cos \theta \\
1 \pm \sin \theta 
\end{array}\right)\frac{1}{(2(1\pm\sin \theta))^{1/2}}$. It follows that 
the eigenvectors of $H_\theta$ are $\otimes^n v_{\pm}(\theta)$ and the eigenvalues of $H_\theta$, the energy levels,  
are $\sum^n \lambda_{\pm}(\theta)$. Therefore, the energy levels are 
\[ \frac{1}{2}\left( n(\cos \theta + \sin \theta) +(n_+-n_-) \right) \]
where $n_+$, $n_-$ are the number of $+$, $-$ signs, resp., picked up in the construction of the energy level. 
This proves (i). 
To prove (ii), it suffices to look at the critical values
\begin{align*}
&\left\<\otimes^n v_{\pm}(\theta)\left|
\sum_{i=1}^n I^{\otimes(i-1)}\otimes (s^x + \imath s^z)\otimes I^{\otimes (n-i)}
\right|\otimes^n v_{\pm}(\theta)\right\>\\
&=\sum_{i=1}^n \<v_{\pm}(\theta)|s^x  + \imath s^z  | v_{\pm}(\theta)\>\\
&=\sum_{i=1}^n \frac{1}{2} \left((1\pm \cos \theta) + \imath (1 \pm \sin \theta) \right) \\
&=\frac{1}{2}( (n +(n_+-n_-)\cos \theta) + \imath ((n_+-n_-)\sin \theta))
\end{align*}

If $x$, $y$ denotes the real and imaginary parts, resp., of the critical value, elimination of $\theta$ yields $(2x-n)^2+(2y-n)^2=(n_+-n_-)^2$, that is, a circle 
with its center at $(n/2)(1+\imath)$ and radius $(n_+-n_-)/2$.  Finally, we need to show that $(n/2)(1+\imath)$ is the $2^n$-fold 
eigenvalue of $H_0+\imath H_1$. To this end, observe that $s^x+\imath s^z$ has a double eigenvalue at $(1+\imath)/2$. Since 
$H_0+\imath H_1=\sum_{i=1}^n I^{\otimes (i-1)}\otimes (s^x+\imath s^z)\otimes I^{\otimes (n-i)}$ 
and since all terms of the sum have common eigenvectors, the result follows from properties of eigenvalues of tensor products.  
\end{proof}

The following corollary is easily proved:
\begin{corollary}
Under the same hypothesis as in preceding theorem, the degeneracy of the energy level $(n_+,n_-)$ is 
$\left(\begin{array}{c}
n\\
n_+
\end{array}\right)$. In particular, the ground level is nondegenerate. 
\end{corollary}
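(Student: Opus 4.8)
The plan is to read off everything directly from the eigenstructure already computed in Theorem~\ref{t:concentric_circles}. That theorem exhibits an explicit eigenbasis of $H_\theta$ indexed by sign strings $\epsilon\in\{+,-\}^n$: the vector $\bigotimes_{i=1}^n v_{\epsilon_i}(\theta)$ has eigenvalue $\sum_{i=1}^n \lambda_{\epsilon_i}(\theta)=\tfrac12\big(n(\cos\theta+\sin\theta)+(n_+-n_-)\big)$, where $n_\pm$ counts the occurrences of $\pm$ in $\epsilon$. First I would check that this really is an orthonormal basis: $v_+(\theta)$ and $v_-(\theta)$ are eigenvectors of the Hermitian $2\times 2$ matrix $s^x\cos\theta+s^z\sin\theta$ for the distinct eigenvalues $\lambda_+\ne\lambda_-$, hence orthogonal, and tensor products of orthonormal families are orthonormal; so the $2^n$ product vectors form a complete orthonormal eigenbasis of $H_\theta$.

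Next I would argue that the eigenspace attached to the level labelled $(n_+,n_-)$ is spanned exactly by those product vectors whose sign string has $n_+$ plus-signs. Since $n_++n_-=n$ is fixed, the eigenvalue $\tfrac12\big(n(\cos\theta+\sin\theta)+(2n_+-n)\big)$ depends on the string only through $n_+$, and for $n_+\in\{0,1,\dots,n\}$ the integers $2n_+-n$ are pairwise distinct; therefore distinct values of $n_+$ produce distinct eigenvalues at every $\theta$, with no accidental merging of levels. Consequently the degeneracy of the level $(n_+,n_-)$ is precisely the number of sign strings in $\{+,-\}^n$ with $n_+$ plus-signs, namely $\binom{n}{n_+}$. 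As a consistency check, $\sum_{n_+=0}^n\binom{n}{n_+}=2^n=\dim(\mathbb{C}^2)^{\otimes n}$, so these eigenspaces exhaust the whole space.

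For the final sentence I would specialise to the ground level. Since $\lambda_-(\theta)=\tfrac12(\cos\theta+\sin\theta-1)<\lambda_+(\theta)$ for all $\theta$, the smallest eigenvalue $\sum_i\lambda_{\epsilon_i}(\theta)$ is obtained by taking every $\epsilon_i=-$, i.e.\ $n_+=0$, whose degeneracy is $\binom{n}{0}=1$; hence the ground level is nondegenerate. I do not expect a genuine obstacle here: the only point requiring a line of care is justifying that the label $(n_+,n_-)$ is a faithful invariant of the eigenvalue, so that the combinatorial count is exact rather than merely a lower bound, and this follows from the distinctness of the values $2n_+-n$ observed above.
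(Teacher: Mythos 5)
Your proof is correct and is precisely the argument the paper intends (it leaves the corollary as ``easily proved''): the level labelled $(n_+,n_-)$ is spanned by the product eigenvectors $\bigotimes_i v_{\epsilon_i}(\theta)$ with $n_+$ plus-signs, distinct $n_+$ give distinct eigenvalues since $\lambda_+-\lambda_-=1$, and the ground level is $n_+=0$ with multiplicity $\binom{n}{0}=1$. Your extra care in verifying that the $2^n$ product vectors form a complete orthonormal basis and that the label $n_+$ is a faithful invariant of the eigenvalue is exactly the right way to make the count exact rather than a lower bound.
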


Note that the number of energy levels, including multiplicity, is 
$\sum_{n_+=0}^n \left(\begin{array}{c}
n\\
n_+
\end{array}\right)=2^n$, as expected.

The preceding theorem is illustrated in Fig.~\ref{f:first_case_no_barrier_no_y}. Observe that the center of the concentric circles is at $0.25+\imath 0.25$, 
consistently with the Theorem~\ref{t:concentric_circles} since $n=5$. 
A bit less trivially, the $\gamma_2$ circle has vertical tangency points, requiring their replacement by cusps, 
resulting in a $\hat{\gamma}_2$ of the same canonical structure 
as that of Fig.~\ref{f:breaking_vertical_tangents}(b) with classical invariants $(\mu,{\tt tb})=(0,-1)$.

\subsubsection{Towards steep gap and tunneling: $h$ increases}

Here, we set $\ell=1$, $u=2$, and progressively increase the barrier height $h$  
to expose the connection between the narrowing of the gap, the 
swallow tail, and tunneling.

In Figures~\ref{f:low_barrier_small_y}-\ref{f:barrier_low_Hamming}, 
the important swallow tail is the ``South-West" one, at a co-orientation $\theta\approx \pi/4$,  
since it is captured by the physically relevant path $[0,\pi/2]$ embedded in the complete $[0,2\pi]$ homotopy.   
The whole point is easily seen by comparing Figure~\ref{f:low_barrier_small_y} 
and Figure~\ref{f:high_barrier_small_y}. The deterioration of the gap together with a more pronounced swallow tail as $h$ increases is obvious.  
Note that the two $\hat{\gamma}_2$ curves share the same classical invariants $(\mu,{\tt tb})=(0,-1)$ 
as that of Fig.~\ref{f:first_case_no_barrier_no_y}. 
To see this, it suffices to replace the vertical tangency points by cusps, 
remove the swallow tails by Reidemeister moves, and observe that we are back to 
Fig.~\ref{f:breaking_vertical_tangents}(b).  

The top-left panel of Fig.~\ref{f:barrier_low_Hamming} shows an interruption of the plot 
for the reason that the plot is extremely sensitive around $\theta \approx 0$  
due to a pair of cusps points, and hence a swallow tail, 
revealed by the bottom-left panel. 
Closer inspection reveals that the vertical tangency point is not on the edge of the swallow tail, 
but on the branch departing from the swallow tail to $\theta \approx \pi/2$, 
resulting in the structure of Fig.~\ref{f:breaking_vertical_tangents}(b) 
and classical invariants $(\mu,{\tt tb})(\hat{\gamma}_2)=(0,-1)$.

{\it As shown in Figure~\ref{f:barrier_low_Hamming}, these two phenomena---steep gap and tunneling---are concomitant with a swallow tail. }

\begin{figure}[t]
\begin{center}
\scalebox{0.9}{\includegraphics{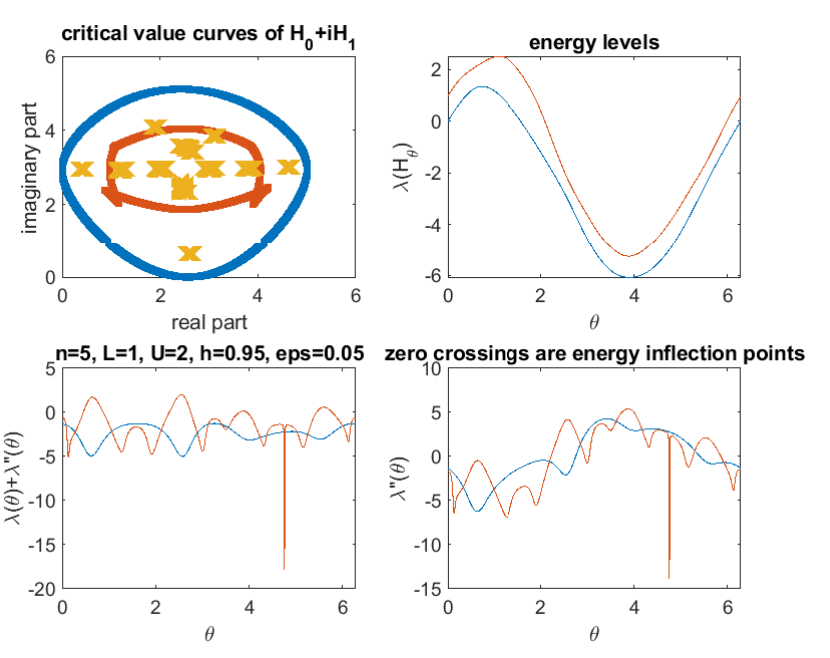}}
\end{center}
\vskip-0.9cm
\caption{{\bf Low barrier but emergence of swallow tails}: $n=5$, $\ell=1$, $u=2$, {\boldmath$h=0.95$}, $\delta \theta=0.001$, $\epsilon=0.05$. 
From the left panel, observe the emergence of the swallow tail at $\theta \approx \pi/4$, 
the ``South-West" swallow tail, with a pair of eigenvalues in the prolongation of the cusps.
From the right panel, 
the swallow tail gap  appears to be between mild and steep, 
as an inflection point in the first excited state just appeared. 
}
\label{f:low_barrier_small_y}
\end{figure}

\begin{figure}[t]
\begin{center}
\scalebox{0.9}{\includegraphics{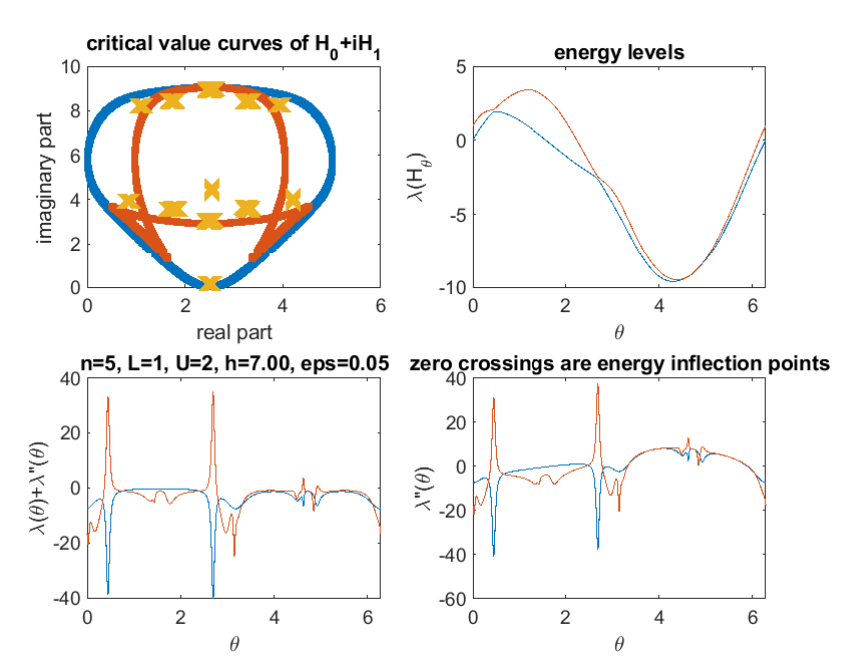}}
\end{center}
\vskip-0.9cm
\caption{{\bf Higher barrier, swallow tails approaching boundary, and tunneling}: $n=5$, $\ell=1$, 
{\boldmath $u=2$}, {\boldmath$h=7$}, $\delta \theta=0.001$, $\epsilon=0.05$, 
with $\theta\in [0,\pi/2]$. 
Observe, on the left panel, 
the ``South-West" swallow tail nearly collapsing on the ground state critical value curve in the reverse of an universal unfolding~\cite{CastrigianoHayes1993}. 
On the right panel, observe the steep gap at $\theta \approx \pi/4$, 
as the first excited state curve has a pair of inflection points, 
but the ground state does not have it.  
(The gap at $\theta \approx 3\pi/4$ is {\bf supersteep} as both levels have pairs of inflection points.)
 }
\label{f:high_barrier_small_y}
\end{figure}

\begin{figure}[t]
\begin{center}
\scalebox{0.9}{\includegraphics{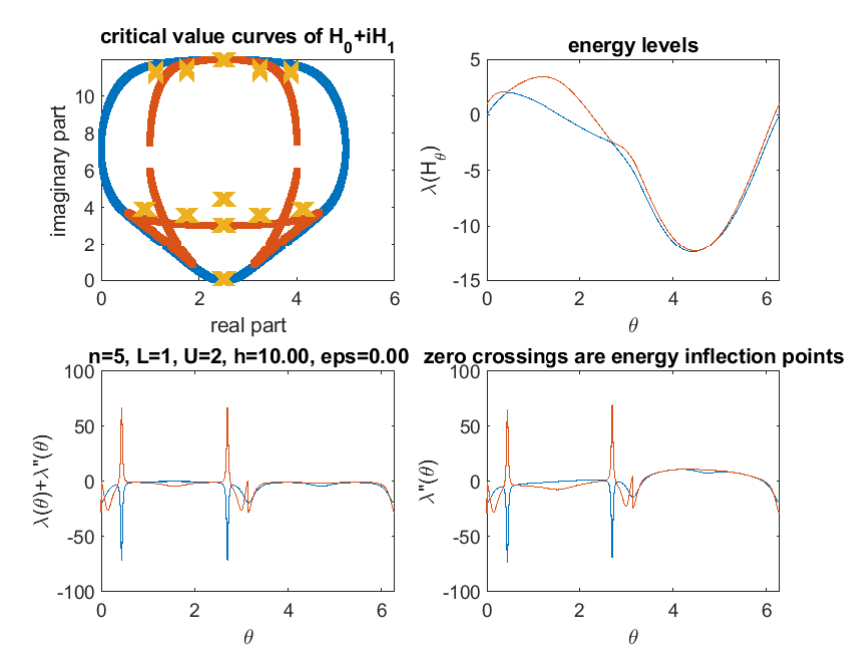}}
\end{center}
\vskip-0.9cm
\caption{{\bf Highest barrier, swallow tails closest to boundary, and tunneling:} 
$n=5$, $\ell=1$, {\boldmath$u=2$}, {\boldmath$h=10$}, $\delta \theta=0.001$, $\epsilon=0$, 
with $\theta\in [0,2\pi)$. 
On the top left panel, observe the ``South-West" swallow tail nearly collapsing on the ground state critical value curve, 
hence closing the gap.
Observe, on the right panel, the {\bf steep gap} at $\theta \approx \pi/4$, 
as the first excited state curve has a pair of inflection points, 
but the ground state does not quite have it.}
\label{f:barrier_low_Hamming}
\end{figure}

\subsubsection{No tunneling, no swallow tail}

If the barrier has its support at high Hamming weight ($p(w) \ne 0$ for $3 \leq w \leq 4$), 
the adiabatic path starts at energy level below the barrier at $3$ 
and, hence, no tunneling is required.  

The latter is concomitant with the absence of swallow tail, 
as seen from Figure~\ref{f:barrier_high_Hamming}, 
top left panel.  
It is also noted on the right panel that the gap is mild. 
After removing the vertical tangents, the classical invariants are $(\mu, {\tt tb})=(0,-1)$. 

\begin{figure}[t]
\begin{center}
\scalebox{0.9}{\includegraphics{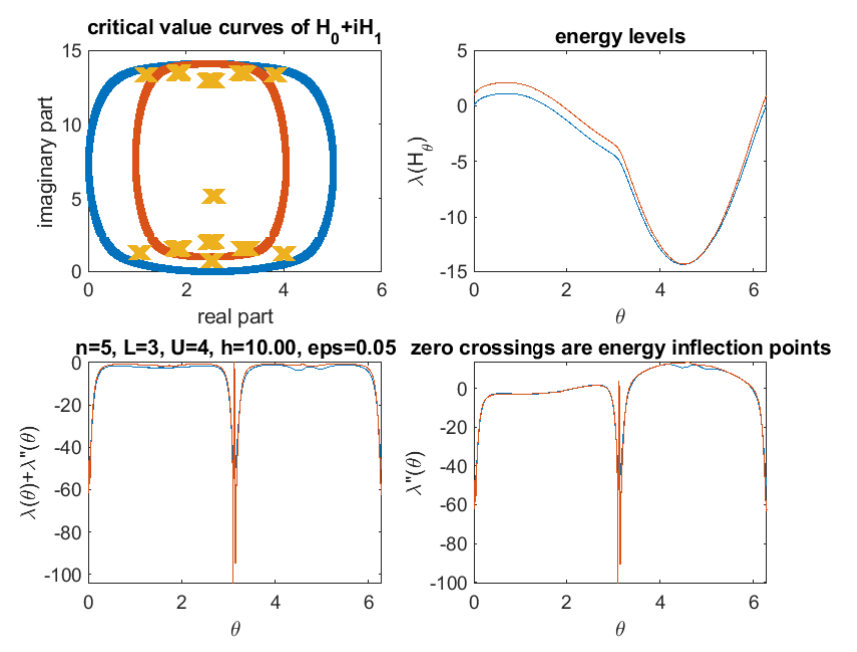}}
\end{center}
\vskip-0.9cm
\caption{{\bf Barrier at too high a Hamming weight, no swallow tails, no tunneling:} 
$n=5$, {\boldmath$\ell=3$, $u=4$}, $h=10$, $\delta \theta=0.001$, $\epsilon=0.05$, 
with $\theta\in [0,2\pi)$. 
On the left panel, observe that the swallow tail has disappeared, hence enlarging the gap.
Observe, on the right panel, the {\bf mild gap,} as neither the ground nor the first excited state has inflection points.}
\label{f:barrier_high_Hamming}
\end{figure}

\section{Conclusion}

We have shown that reformulating the adiabatic gap in the light of 
the critical value curves of the numerical range of the matrix 
made up with the initial and terminal Hamiltonians gives the gap a novel interpretation: 
the distance between the  boundary of the numerical range, $\gamma_1$, 
and its  first excited critical value curve, $\gamma_2$. 
The near coalescence of the two critical value curves, especially around a swallow tail, 
provides  a ``magnifying lens" on the subtle details of the 
anti-crossing phenomenon difficult to visualize on the classical energy level plots. 
 
The visual display of the $\gamma_1$, $\gamma_2$ critical value curves supplements, 
even overpasses in visual acuity, 
the  morphology of the classical eigenenergy $\lambda_1$, $\lambda_2$ plots. 
On the latter, subtle details related to the existence and positioning of the inflection points 
are easy to miss with possible confusion between mild and steep gaps, whereas 
on the $\gamma_1$, $\gamma_2$ curves the difference is striking---(super)steep gaps are flagged by swallow tails,  
in a number anticipated by the number of pairs of ``nearby" roots of topological invariant $\lambda_2+\lambda_2''$.

Beyond visualization, a first practical consequence is in the case of the Grover search, 
where the gap is usually computed around an unstable critical value curve that breaks up into several stable singularities under an arbitrarily small perturbation, 
with the unfortunate consequence that the gap changes abruptly and for the worst. 
It is anticipated that such phenomena will happen in case of highly degenerate 
ground state~\cite{Itay}, 
where an arbitrarily small perturbation will break it into many nondegenerate states near the ground level. 

Probably most importantly, we showed on the selected ``Hamming weight plus barrier" terminal Hamiltonian 
that the swallow tails reveal that the adiabatic process tunnels through the barrier. 
Experimental confirmation of this feature could give, next to entanglement, 
further evidence of quantum phenomena in adiabatic quantum computers. 

On the theoretical side, we have adapted the Legendrian theory to the classification of the critical value curves, 
with the major difficulty that the $\gamma_2$ curves have vertical tangents, 
which is disallowed in the Legendrian theory. 
We have circumvented this difficulty by the new technique of replacing the vertical tangency points by cusps and defining the Thurston-Bennequin invariant on the resulting curve. 
It was shown that this new invariant classifies curves in three groups while the Arnold invariants 
does not ``see" this distinction. Certainly, 
the former is  still a coarse classification---it does not accurately count the number of swallow tails, 
but this can be remedied with the $\lambda_2+\lambda''_2$ invariant.  

Generically $\lambda_1+\lambda_1''$ never vanishes, but $\lambda_i+\lambda''_i$, $i>1$, does vanish.  
If  $\lambda_{i}+\lambda_{i}''$ and $\lambda_{i+1}+\lambda_{i+1}''$, $i=2,3,...$, vanish 
for nearby $\theta$-parameter values, 
then the $\lambda_i$ curve will sharply increase to the $\lambda_{i+1}$ curve 
while the latter will sharply decrease to $\lambda_i$. 
Hence, ``super-steep" gaps are likely to occur between higher excitation levels, 
which is relevant to the diabatic continuation and shortcuts to adiabaticity. 
This is left for further study.

Among the notable omissions here are the ``new'' invariants of Chekanov-Pushkar and Chekanov-Eliashberg. 
It is hoped that, along with the invariants already developed here, they will provide more accurate classification, 
but this is left for further study. 
The polynomial criterion for stability of the singularities is also left for further research 
as a follow up of the Gr\"obner basis approach developed for singularities in robust control~\cite{Nanaz_Alaska}.

\section*{Acknowledgment} 
Helpful discussions on this subject with Prof. Francis Bonahon are gratefully acknowledged. 

\appendix

\section{Differential topology of numerical range}
\label{a:review}

Here we review the fundamental facts of~\cite{JonckheereAhmadGutkin} that are put to use in the main body. The notation is the same as in the main body of the paper. 

\begin{proposition}
\label{p:smoothness_boundary_review}
Let $\mathcal{F}(H)$ be the field of values of $H \in \mathbb{C}^{n \times n}$. 
The boundary $\partial \mathcal{F}(H)$ is generically smooth. 
The nongeneric boundary features are either sharp points where the boundary is not differentiable or line segments embedded in the boundary. 
For any $\zeta \in \partial \mathcal{F}$ 
and any $z \in f^{-1}(\zeta)$, the differential $d_{z}f$ is rank deficient. Moreover, as far as the numerical range is concerned, 
\begin{enumerate}
\item If $\zeta$ is a smooth boundary point, $\mbox{rank} \left( d_{z}f \right)=1$.
\item If $\zeta$ is a sharp point, $\mbox{rank} \left( d_{z}f \right)=0$. 
Moreover, any such sharp point is an eigenvalue of $H$ and $z$ is an eigenvector of both $H$ and $H^*$. 
Conversely, if all preimage points of $\zeta \in \partial \mathcal{F}$ are eigenvectors of both $H$ and $H^*$, 
then $\zeta$ is sharp.  
\end{enumerate}
As far as the eigenvalue plots are concerned,
\begin{enumerate}
\item $\rank (d_{z} f) =1$ if and only if $z$ is a critical point of $\varphi_\theta$ for a unique $\theta \in [0,\pi)$. 
\item $\rank (d_{z} f) =0$ if and only if $z$ is a critical point of $\varphi_\theta$, $\forall \theta \in [0,\pi)$. 
\end{enumerate}
\end{proposition}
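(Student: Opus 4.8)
\medskip
\noindent\textbf{Proof proposal.}
The plan is to derive every assertion from the single identity recorded in Section~\ref{s:diff_category},
\[ d_z\phi_\theta = \Re\left(\<e^{\imath\theta}|d_z\phi\>\right), \]
together with the fact (established there, following~\cite{JonckheereAhmadGutkin}) that the critical points of $\phi_\theta$ are precisely the eigenvectors of $H_\theta=H_0\cos\theta+H_1\sin\theta$. First I would fix $\zeta\in\partial\mathcal{F}$ and a support line of the convex body $\mathcal{F}$ at $\zeta$ with outward normal direction $\theta$; then $\phi_\theta$ attains its maximum over $\mathbb{CP}^{N-1}$ at every $z\in\phi^{-1}(\zeta)$, so $d_z\phi_\theta=0$. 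By the displayed identity the image of the real-linear map $d_z\phi$ lies in the line $\{w\in\mathbb{C}:\Re(e^{-\imath\theta}w)=0\}=\imath e^{\imath\theta}\R$, whence $\mathrm{rank}(d_z\phi)\le 1$. This proves the rank-deficiency claim and exhibits $\partial\mathcal{F}$ as a critical value curve of $\phi$.

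Next I would settle the rank dichotomy. From the displayed identity, $d_z\phi=0$ iff $\Re(e^{-\imath\theta}d_z\phi)=0$ for every $\theta$, i.e. iff the (fixed) point $z$ is a critical point of $\phi_\theta$ for all $\theta$; and ``$H_\theta|z\>=\lambda(\theta)|z\>$ for all $\theta$'' is equivalent --- by differentiating in $\theta$ and inverting the $2\times2$ rotation matrix built from $H_\theta$ and $H_\theta'$ (determinant $\cos^2\theta+\sin^2\theta=1$, the manipulation already used in the proof of Theorem~\ref{t:dramatic_eigenvectors}) --- to $H_0|z\>\parallel|z\>$ and $H_1|z\>\parallel|z\>$; since $H=H_0+\imath H_1$ and $H^*=H_0-\imath H_1$, this in turn is equivalent to $z$ being an eigenvector of both $H$ and $H^*$, and then $\phi(z)=\<z|H|z\>$ is the corresponding eigenvalue of $H$. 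This yields item~2 of the first list and items~1--2 of the last list, save for the \emph{uniqueness} of $\theta$ when $\mathrm{rank}(d_z\phi)=1$; that is immediate, because a rank-$1$ image is a $1$-dimensional real subspace of $\mathbb{C}$, hence equal to $\imath e^{\imath\theta}\R$ for a single $\theta\in[0,\pi)$, and $\Re(e^{-\imath\theta'}d_z\phi)=0$ then forces $\theta'\equiv\theta\pmod\pi$ (essentially Proposition~\ref{p:projection}).

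For the smooth/sharp/segment trichotomy on $\partial\mathcal{F}$: a sharp point $\zeta$ is one whose set of support directions is a nondegenerate interval of $\theta$'s, and for $\theta$ interior to that interval $\zeta$ is exposed, so $\phi^{-1}(\zeta)=\arg\max\phi_\theta$ for all such $\theta$; thus every $z\in\phi^{-1}(\zeta)$ is a critical point of $\phi_\theta$ on an open interval of $\theta$, and the differentiation argument above (now valid on an interval) forces $z$ to be a common eigenvector of $H_0,H_1$, i.e. $\mathrm{rank}(d_z\phi)=0$. Conversely, if every preimage of $\zeta$ is an eigenvector of $H$ and of $H^*$, then by the above each such $z$ maximizes $\phi_\theta$ for the whole interval of $\theta$ on which $\<z|H_\theta|z\>$ remains the top eigenvalue, so $\zeta$ supports more than one direction and is sharp (and is an eigenvalue of $H$). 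A nondegenerate segment embedded in $\partial\mathcal{F}$ is the remaining nonsmooth possibility: the support line in one direction $\theta_0$ then meets $\mathcal{F}$ along that segment, which forces the top eigenvalue of $H_{\theta_0}$ to have multiplicity $\ge 2$. Finally, genericity: by Theorem~\ref{t:von_neumann-wigner} (case $m=1$, genericity of the anti-crossing) the set of pairs $(H_0,H_1)$ for which some $H_\theta$ has a repeated eigenvalue is closed and nowhere dense, and the set of pairs sharing a common eigenvector likewise has positive codimension; off these exceptional sets $\partial\mathcal{F}$ has no sharp point and no embedded segment, hence is smooth with $\mathrm{rank}(d_z\phi)=1$ at every boundary preimage.

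I expect the genericity bookkeeping to be the main obstacle: one has to make precise that ``$H_0$ and $H_1$ share an eigenvector'' and ``$H_\theta$ has a repeated eigenvalue for some $\theta$'' each carve out a set of positive codimension in the space of Hermitian pairs (so its complement is open and dense), and one must keep track of the boundary points that a priori could carry several preimages of mixed rank. Everything else reduces to the elementary linear algebra above together with the single displayed differential identity.
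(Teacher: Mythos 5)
First, a point of reference: the paper does not actually prove Proposition~\ref{p:smoothness_boundary_review} — the appendix presents it as a review of facts imported from~\cite{JonckheereAhmadGutkin} — so your self-contained derivation from the identity $d_z\phi_\theta=\Re\left(e^{-\imath\theta}\,d_z\phi\right)$ is necessarily your own route. Most of it is sound: the rank-deficiency on $\partial\mathcal{F}$ via support lines, the equivalence of $\mathrm{rank}(d_z\phi)=0$ with $z$ being a common eigenvector of $H_0,H_1$ (equivalently of $H$ and $H^*$), and the uniqueness of $\theta$ in the rank-$1$ case are all correct and match the mechanisms the cited reference relies on.

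The genuine gap is in the converse of the sharp-point characterization. You assert that if every preimage of $\zeta$ is a common eigenvector, then each such $z$ ``maximizes $\phi_\theta$ for the whole interval of $\theta$ on which $\langle z|H_\theta|z\rangle$ remains the top eigenvalue, so $\zeta$ supports more than one direction.'' But the nondegeneracy of that interval is precisely what sharpness means, so as written the step is circular: a priori the branch $\theta\mapsto a\cos\theta+b\sin\theta$ could be the top eigenvalue only at the single support angle $\theta_0$, being overtaken on either side by another eigenvalue branch that crosses it there. To close this you need a perturbation argument: when the face of $\mathcal{F}$ in direction $\theta_0$ is the singleton $\{\zeta\}$, the preimage $\phi^{-1}(\zeta)$ is the entire unit sphere of the top eigenspace $V$ of $H_{\theta_0}$; the hypothesis then forces every vector of $V$ to be a common eigenvector, so $H_\theta|_V=(a\cos\theta+b\sin\theta)I_V$ for all $\theta$, and the spectral gap separating $V$ from the rest of $\mathrm{spec}(H_{\theta_0})$ persists for nearby $\theta$, whence the support function equals $\Re(e^{-\imath\theta}\zeta)$ on a nondegenerate interval and $\zeta$ is a corner. (If the face is a segment, its relative-interior points have non-eigenvector superpositions as preimages, so the hypothesis already excludes that case.) A related caveat: your item-1 claim that a smooth boundary point has only rank-$1$ preimages also needs care, since in nongeneric situations an eigenvalue of $H$ with common eigenvector can lie on a smooth arc of $\partial\mathcal{F}$, producing a rank-$0$ preimage there; this, together with the codimension counts you defer, is exactly the genericity bookkeeping you flagged as outstanding and that the paper simply delegates to~\cite{JonckheereAhmadGutkin}.
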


\section{Differentiable class of eigenvalues}
\label{a:differentiability}

As in the main body, we consider  $\mathrm{Herm}(N)$, the set of $N \times N$ Hermitian matrices, 
but here, as in~\cite{GutkinJonckheereKarow}, we emphasize the (nongeneric) case  
of possibly multiple eigenvalues  
ordered as $\lambda_{[1]}<\lambda_{[2]}<...$, 
where $\lambda_{[k]}$ denotes the multiple eigenvalue
\[ \lambda_{[k]}:=
\lambda_{\mu_1+...+\mu_{k-1}+1}=\lambda_{\mu_1+...+\mu_{k-1}+2}=...=\lambda_{\mu_1+...+\mu_{k-1}+\mu_k}\]
with multiplicity $\mu_k$. $E_{[k]}$ denotes the eigenspace of $\lambda_{[k]}$. 
Here, as already alluded to in the von Neumann-Wigner adiabatic theorem (Th.~\ref{t:von_neumann-wigner}), 
we consider a multi-parameter path 
in a manifold $\Theta$.
Given a differentiable map $F:\Theta \to \mathcal{N}$ between differentiable manifolds $\Theta$ and $\mathcal{N}$, 
the linear map $d_\theta F: T_\Theta \Theta \to T_{F(\theta)}\mathcal{N}$ denotes the first order differential (Jacobian) evaluated at $\theta \in \Theta$. 
If $\Theta$ and $\mathcal{N}$ are strengthened to $\mathcal{C}^2$-manifolds and $F$ to a $\mathcal{C}^2$-map,  
the bilinear map $d^2_\theta F : T_\theta \Theta \times T_\theta \Theta \to T_{F(\theta)}\mathcal{N}$ 
denotes the second-order differential (Hessian).  
The main result of this Appendix is the following:

\begin{proposition}
Let $\Theta$ be a $\mathcal{C}^2$-manifold and let $H: \Theta \to \mathrm{Herm}(N)$ be a $\mathcal{C}^2$-map. 
Then, for any unit vector $|z_{[k]}(\theta)\> \in E_{[k]}(H(\mathbf{\theta}))$, 
and any $u,v \in T_\theta\Theta$, we have
\begin{eqnarray*}
\lefteqn{d^2_\theta (\lambda_{[k]} \circ H)(v,w)=}\\
&&2\<z_{[k]}(\theta)| d_\theta H(u)(\lambda_{[k]}(H(\theta))I-H(\theta))^\dagger d_\theta H(w)|z_{[k]}(\theta)\>\\
&&+\<z_{[k]}(\theta)|d^2_\theta H(v,w)|z_{[k]}(\theta) \>. 
\end{eqnarray*}
\end{proposition}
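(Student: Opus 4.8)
The plan is to reduce the statement to its one-parameter (curve) version and to derive the latter by differentiating the eigenprojection identity $HP=\lambda P$ twice, where $\lambda:=\lambda_{[k]}$ and $P:=P_{[k]}$ is the spectral projection onto the eigenspace $E_{[k]}$.

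\textbf{Reduction to a curve.} Because $d^2_\theta(\lambda_{[k]}\circ H)$ is a symmetric bilinear form, it is recovered by polarization from its diagonal, so it suffices to establish the formula on $(v,v)$ for an arbitrary $v\in T_\theta\Theta$. Choose a $C^2$ curve $t\mapsto\theta(t)$ in a chart with $\theta(0)=\theta$, $\dot\theta(0)=v$, and set $K(t):=H(\theta(t))\in\mathrm{Herm}(N)$, $\lambda(t):=\lambda_{[k]}(K(t))$. The chain rule gives $\dot K(0)=d_\theta H(v)$, $\ddot K(0)=d^2_\theta H(v,v)+d_\theta H(\ddot\theta(0))$, and $\ddot\lambda(0)=d^2_\theta(\lambda_{[k]}\circ H)(v,v)+d_\theta(\lambda_{[k]}\circ H)(\ddot\theta(0))$. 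The first-order eigenvalue formula $d_\theta(\lambda_{[k]}\circ H)(a)=\langle z_{[k]}|d_\theta H(a)|z_{[k]}\rangle$ (valid for any unit $z_{[k]}\in E_{[k]}$) will make the two $\ddot\theta(0)$-corrections cancel once the curve formula is in hand; this also shows the chart-dependent pieces on the two sides match, so the identity is well posed.

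\textbf{The curve formula.} Assume $\lambda_{[k]}$ has locally constant multiplicity $\mu_k$, so that the group projection $P(t)$ onto the corresponding eigenspace of $K(t)$ is $C^2$ (resolvent contour integral) and $\lambda(t)=\tfrac1{\mu_k}\operatorname{tr}(K(t)P(t))$ is $C^2$. Differentiating $KP=\lambda P$ and sandwiching by $P$, while using $PK=\lambda P$, gives $P\dot KP=\dot\lambda P$; hence $\dot\lambda=\langle z|\dot K|z\rangle$ for \emph{every} unit $z\in\operatorname{ran}P$ — this is precisely why the final formula holds for an arbitrary eigenvector of the multiple eigenvalue. Multiplying the once-differentiated identity on the right by $P$, and using $P\dot KP=\dot\lambda P$ together with $(\lambda I-K)^\dagger(\lambda I-K)=I-P$, one solves for the off-diagonal block $\dot PP=(\lambda I-K)^\dagger\dot KP$. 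Differentiating $KP=\lambda P$ twice and multiplying by $P$ on both sides, the terms $PK\ddot PP=\lambda P\ddot PP$ (killed by $PK=\lambda P$) and $\dot\lambda P\dot PP$ (killed by $P\dot PP=0$, a consequence of $P^2=P$) drop, leaving $\ddot\lambda P=P\ddot KP+2P\dot K\dot PP$. Substituting $\dot PP=(\lambda I-K)^\dagger\dot KP$ and taking $\langle z|\cdot|z\rangle$ with $Pz=z$, $\|z\|=1$:
\[ \ddot\lambda=\langle z|\ddot K|z\rangle+2\,\langle z|\dot K\,(\lambda I-K)^\dagger\dot K|z\rangle. \]

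\textbf{Assembly and the main difficulty.} Feed $\dot K(0),\ddot K(0),\ddot\lambda(0)$ from the chain rule into this identity; the term $\langle z_{[k]}|d_\theta H(\ddot\theta(0))|z_{[k]}\rangle$ produced on the right equals $d_\theta(\lambda_{[k]}\circ H)(\ddot\theta(0))$ appearing on the left and cancels, yielding $d^2_\theta(\lambda_{[k]}\circ H)(v,v)=2\langle z_{[k]}|d_\theta H(v)(\lambda_{[k]}I-H)^\dagger d_\theta H(v)|z_{[k]}\rangle+\langle z_{[k]}|d^2_\theta H(v,v)|z_{[k]}\rangle$; polarizing in $v$ gives the stated formula. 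The main obstacle is not the algebra but the regularity hypothesis: $P$ and the branch $\lambda_{[k]}$ are $C^2$ only where $\lambda_{[k]}$ is a separated group of eigenvalues of fixed multiplicity — automatic along an analytic curve by Rellich's theorem, but failing at genuine level crossings in $\Theta$ — so the statement must be read with that proviso, and the careful handling of multiple eigenvalues is exactly where one invokes the cited results \cite{GutkinJonckheereKarow,adiabatic} underlying Lemma~\ref{l:karow}.
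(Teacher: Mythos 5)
Your derivation is correct, but it is genuinely more than the paper offers: the paper's entire proof of this proposition is the citation ``See \cite[Th.\ 3.7(.3)]{GutkinJonckheereKarow}'', so there is no in-text argument to compare against, and you have in effect supplied the missing self-contained proof. Your route --- reduce to a curve, differentiate $KP=\lambda P$ twice, kill the $PK\ddot PP$ and $\dot\lambda P\dot PP$ terms using $PK=\lambda P$ and $P\dot PP=0$, and solve for the off-diagonal block $\dot PP=(\lambda I-K)^\dagger\dot KP$ via $(\lambda I-K)^\dagger(\lambda I-K)=I-P$ and $(\lambda I-K)^\dagger P=0$ --- is the standard degenerate second-order perturbation computation and checks out line by line; it also correctly explains why the formula holds for \emph{any} unit vector in $E_{[k]}$, and it makes explicit a hypothesis the paper leaves tacit, namely that $\lambda_{[k]}$ must be a separated eigenvalue group of locally constant multiplicity for $P$ and $\lambda_{[k]}\circ H$ to be $C^2$ (exactly the regime in which the cited theorem and Lemma~\ref{l:karow} are invoked). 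Two cosmetic points: the proposition's right-hand side mixes $u$, $v$, $w$, and your silent replacement of $u$ by $v$ is the only sensible reading; and your polarization step recovers the \emph{symmetrized} (equivalently, real) part of $\<z|d_\theta H(v)(\lambda I-H)^\dagger d_\theta H(w)|z\>$, which is all the left-hand side can equal since the Hessian is a real symmetric form --- and is immaterial for the paper's application, where $\Theta=\mathbb{S}^1$ is one-dimensional and only the diagonal $v=w$ is ever used.
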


\begin{proof}
See~\cite[Th. 3.7(.3)]{GutkinJonckheereKarow}
\end{proof}

Now if we set $\Theta=\mathbb{S}^1$ as in the main body, $v$ and $w$ become $1$-dimensional, hence scalars.   
Obviously, for a generic $\mathcal{C}^2$-map $F$, 
\begin{eqnarray*}
d_\theta F(v)&=& F'(\theta)v,\\
d_\theta^2 F(v,w) &=&d_\theta \left(d_\theta F(v) \right)(w)=d_\theta (F'(\theta)v)(w)=F''(\theta)vw.
\end{eqnarray*}
After simplifying $v$ and $uv$ across the equalities, the Proposition yields
\begin{eqnarray*}
\lefteqn{\lambda''(H(\theta))=}\\
&&2\<z_{[k]}(\theta)| H'(\theta)(\lambda_{[k]}(H(\theta))I-H(\theta))^\dagger H'(\theta)|z_{[k]}(\theta)\>\\
&&+\<z_{[k]}(\theta)|H''(\theta)|z_{[k]}(\theta) \>. 
\end{eqnarray*}
Finally, if $H(\theta)=H_0\cos \theta + H_1 \sin \theta$ as in the main body, $H''(\theta)=-H(\theta)$ 
and the result of Lemma~\ref{l:karow} follows.

\bibliographystyle{plain}


\end{document}